\documentclass[a4paper]{amsart}
\usepackage{amssymb}
\usepackage[many]{tcolorbox}
\usepackage{graphicx} 
\usepackage{mathrsfs}
\usepackage{braket}
\usepackage{hyperref}
\usepackage{mathtools}
\usepackage{silence}
\usepackage{microtype}
\usepackage{enumitem}
\usepackage{xcolor}
\usepackage{mdwtab}
\usepackage{complexity}
\usepackage[mathcal]{euscript}
\usepackage[all]{xy}
\usepackage[store-sets-label]{keytheorems}
\usepackage[nameinlink,capitalize]{cleveref}
\usepackage[noadjust]{cite}
\usepackage{comment}
\usepackage[all]{xy}
\usepackage[normalem]{ulem}
\usepackage{subfig}
\usepackage{afterpage}
\usepackage{xifthen}
\usepackage{zref-clever}
\zcsetup{nameinlink,cap}
\usepackage{rotating}

\newcommand{\pairs}[2]{(#1)_{#2}}

\zcRefTypeSetup{conj}{
Name-sg = Conjecture ,
name-sg = conjecture ,
Name-pl = Conjectures ,
name-pl = conjectures ,
}
\zcRefTypeSetup{fact}{
Name-sg = Fact ,
name-sg = fact ,
Name-pl = Facts ,
name-pl = facts ,
}
\zcRefTypeSetup{claim}{
Name-sg = Claim ,
name-sg = claim ,
Name-pl = Claims ,
name-pl = claims ,
}

\newcommand{\unsure}{\textcolor{orange}{\LARGE ?}}

\graphicspath{{Figures/}{../Figures/}}
\makeatletter
\newsavebox\myboxA
\newsavebox\myboxB
\newlength\mylenA

\newcommand*\xoverline[2][0.75]{%
    \sbox{\myboxA}{$\m@th#2$}%
    \setbox\myboxB\null
    \ht\myboxB=\ht\myboxA%
    \dp\myboxB=\dp\myboxA%
    \wd\myboxB=#1\wd\myboxA
    \sbox\myboxB{$\m@th\overline{\copy\myboxB}$}
    \setlength\mylenA{\the\wd\myboxA}
    \addtolength\mylenA{-\the\wd\myboxB}%
    \ifdim\wd\myboxB<\wd\myboxA%
       \rlap{\hskip 0.5\mylenA\usebox\myboxB}{\usebox\myboxA}%
    \else
        \hskip -0.5\mylenA\rlap{\usebox\myboxA}{\hskip 0.5\mylenA\usebox\myboxB}%
    \fi}
\makeatother

\hypersetup{colorlinks=true,linkcolor=blue,filecolor=blue,citecolor=blue}
\DeclareMathOperator{\mw}{mw}
\DeclareMathOperator{\width}{width}
\DeclareMathOperator{\reach}{MReach}

\DeclareMathOperator{\bomega}{\ensuremath{{\rm b}\omega}}
\DeclareMathOperator{\model}{{\bf Mod}}
\DeclareMathOperator{\Seq}{{\rm Seq}}
\DeclareMathOperator{\Str}{{\sf Str}}
\DeclareMathOperator{\Exp}{{\rm Exp}}
\DeclareMathOperator{\Skel}{{\rm Skel}}

\title{On merge-models}
\thanks{\ERCagreement}
\author[H. Buffière]{Hector Buffière}
\address{Université Paris Cité, CNRS, IRIF, Paris, France  \and Centre d'Analyse et de Mathématique Sociales CNRS UMR 8557, France}
\email{buffiere@irif.fr}
\author[Y. Lin]{Yuquan Lin}
\address{Southeast University, Nanjing, Jiangsu, China \and Centre d'Analyse et de Mathématique Sociales CNRS UMR 8557, France.}
\email{yqlin@seu.edu.cn}
\author[J. Ne\v set\v ril]{Jaroslav Ne\v set\v ril}
\address{Computer Science Institute of Charles University (IUUK), Praha, Czech Republic}
\email{nesetril@iuuk.mff.cuni.cz}
\author[P. Ossona de Mendez]{Patrice Ossona de Mendez}
\address{Centre d'Analyse et de Mathématique Sociales CNRS UMR 8557, France \and Computer Science Institute of Charles University (IUUK), Praha, Czech Republic}
\email{pom@ehess.fr}
\author[S. Siebertz]{Sebastian Siebertz}
\address{University of Bremen, Bremen, Germany}
\email{siebertz@uni-bremen.de}

\date{\today}
\newcommand{\ERCagreement}{
        {\footnotesize
        The second author is supported by the China Scholarship Council (CSC)	and  SEU Innovation Capability Enhancement Plan for Doctoral Students (CXJH\_SEU 24119).}\\[4pt]
		\noindent\begin{minipage}{.73\textwidth}
			\footnotesize
    This paper is part of a project that has received funding from the European Research Council (ERC) under the European Union's Horizon 2020 research and innovation program (grant agreement No 810115 -- {\sc Dynasnet}), and from the 
    German Research Foundation (DFG) with grant agreement No 446200270
		\end{minipage}\hfill
		\begin{minipage}{.25\textwidth}
\phantom{.}\hfill\includegraphics[height=13mm]{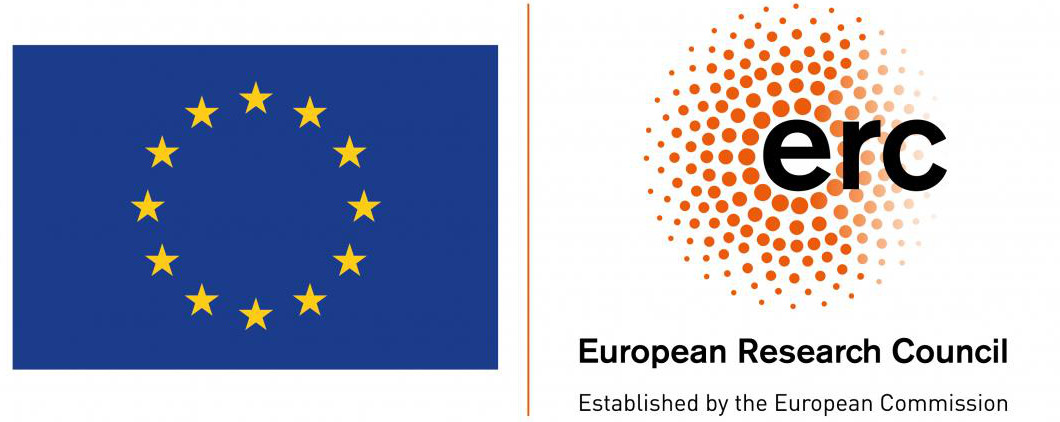}\hfill\phantom{.}
		\end{minipage}
}

\date{\today}

\newtheorem{fact}{Fact}[section]

\newtheorem{definition}{Definition}[section]

\newtheorem{claim}{$\rhd$ Claim}[section]

\newtheorem{ndefi}[definition]{Definition}
 
\newkeytheoremstyle{tcbext}{
tcolorbox-no-titlebar={breakable,before skip=5pt,after skip=5pt,	outer arc=0pt, blanker,
borderline west={1pt}{0pt}{white!20!gray},
	top=3pt,
	right=3pt,
    left=5pt,
	bottom=3pt},
}
\newkeytheoremstyle{tcbconj}{
tcolorbox-no-titlebar={breakable,before skip=5pt,after skip=5pt,	outer arc=0pt, blanker,
borderline west={1pt}{0pt}{purple!20!gray},
	top=3pt,
	right=3pt,
    left=5pt,
	bottom=3pt},
}
\newkeytheoremstyle{tcbthm}{
tcolorbox-no-titlebar={ colframe=purple, interior hidden, breakable, before skip=10pt, after skip=10pt, outer arc=0pt,
	arc=0pt,
	colback=purple!3!white,
	top=0pt,
	right=0pt,
	bottom=0pt,
    },
}
\newkeytheorem{thm}[style=tcbthm,name=Theorem,parent=section]
\newkeytheorem{conj}[style=tcbconj,name=Conjecture,parent=section]
\newkeytheorem{theorem}[style=tcbext,name=Theorem,sibling=thm]
\newkeytheorem{lemma}[style=tcbext,sibling=thm]
\newkeytheorem{lem}[sibling=thm,name=Lemma]
\newkeytheorem{corollary}[style=tcbext,sibling=thm]

\zcsetup{countertype={thm=theorem}}
\zcsetup{countertype={cor=corollary}}
\zcsetup{countertype={lem=lemma}}
\zcsetup{countertype={ex=example}}
\zcsetup{countertype={ndefi=definition}}

\newkeytheoremstyle{tcbdefi}{tcolorbox-no-titlebar={colframe=cyan,interior hidden, breakable,before skip=10pt,after skip=10pt,	outer arc=0pt, 	arc=0pt, blanker,
borderline west={1pt}{0pt}{cyan},
top=3pt,
	right=3pt,
    left=5pt,
	bottom=3pt},
}
\newkeytheorem{defi}[style=tcbdefi,name=Definition,sibling=definition]
\newkeytheorem{nota}[sibling=defi,name=Notation]

\tcolorboxenvironment{ndefi}{
	colframe=blue,interior hidden, breakable,before skip=10pt,after skip=10pt,	outer arc=0pt,
	arc=0pt,
	colback=blue!3!white,
	rightrule=0pt,
	toprule=0pt,
	top=0pt,
	right=0pt,
	bottom=0pt,
	bottomrule=0pt,
}

\tcolorboxenvironment{intu}{
	colframe=green,interior hidden, breakable,before skip=10pt,after skip=10pt,	outer arc=0pt,
	arc=0pt,
	colback=blue!3!white,
	rightrule=0pt,
	toprule=0pt,
	top=0pt,
	right=0pt,
	bottom=0pt,
	bottomrule=0pt,
}

\tcolorboxenvironment{nota}{
	colframe=yellow,interior hidden, breakable,before skip=10pt,after skip=10pt,	outer arc=0pt,
	arc=0pt,
	colback=yellow!3!white,
	rightrule=0pt,
	toprule=0pt,
	top=0pt,
	right=0pt,
	bottom=0pt,
	bottomrule=0pt,
}

\tcolorboxenvironment{lem}{
	colframe=blue,interior hidden, breakable,before skip=10pt,after skip=10pt,	outer arc=0pt,
	arc=0pt,
	colback=white,
	top=0pt,
	right=0pt,
	bottom=0pt,
}
\newenvironment{clproof}{ \trivlist
	\item[\hskip\labelsep
	\emph{Proof of the claim}.]\ignorespaces
}{\hfill$\vartriangleleft$\vspace{10pt}\par}
\newcommand{\restr}[2]{\mathop{{#1}\mathbin{\downharpoonright}_{#2}}}

\usepackage{todonotes}
\newcommand{\Cc}{\mathscr C}
\newcommand{\Dd}{\mathscr D}
\newcommand{\Mm}{\mathscr M}
\newcommand{\Nn}{\mathscr N}
\newcommand{\struc}[1]{\mathbf{#1}}
\renewcommand{\phi}{\varphi}

\newcommand{\ndef}[2][]{\emph{#2}}
\begin{document}	

\begin{abstract}
Tree-ordered weakly sparse models have recently emerged as a robust framework for representing structures in an ``almost sparse'' way, while allowing the structure to be reconstructed through a simple first-order interpretation. 
A prominent example is given by twin-models, 
which are bounded twin-width tree-ordered weakly sparse representations of structures with bounded twin-width derived from contraction sequences. 
In this paper, we develop this perspective further.

First, we show that twin-models can be chosen such that they preserve linear clique-width or clique-width up to a constant factor. 

Then, we introduce \emph{merge-models}, a natural analog of twin-models for merge-width. 
Merge-models represent binary relational structures by tree-ordered weakly sparse structures. The original structures can then be recovered by a fixed first-order interpretation. A merge-model can be constructed from a merge sequence. Then, its
radius-$r$ merge-width will be, up to a constant factor, bounded by the radius-$r$ width of the merge sequence from which it is derived.

Finally, we show that twin-models arise naturally as special cases of merge-models, and that binary structures with bounded twin-width are exactly those having a loopless merge-model with bounded radius-$r_0$ merge-width (for some sufficiently large constant $r_0$).
\end{abstract}

\maketitle
\section{Introduction}

One of the central goals of current structural graph theory is to develop a common language for sparse and dense graph classes that still preserves the algorithmic and model-theoretic robustness familiar from sparsity theory.
One of the most general approaches and a particularly promising framework for this program is provided by \emph{tree-ordered weakly sparse models}: tree-like representations in which the additional unordered structure remains weakly sparse and where the modeled structure is obtained as a simple first-order interpretation of the model. 

\emph{Merge-width}~\cite{dreier2025merge} is a recently introduced width measure designed as a common generalization of several major graph parameters, including tree-width, twin-width, clique-width, degeneracy, and generalized coloring numbers.
This way, merge-width provides a unified perspective on two of the main structural frameworks in the area: sparsity theory and twin-width theory.
Its underlying decompositions, called \emph{merge sequences}, generalize the contraction sequences underlying the notion of twin-width, by keeping track of when adjacencies between parts become determined.

In this paper, we introduce the notion of \emph{merge-models} as a natural generalization  of the twin-models introduced in \cite{lmcs_perm} for twin-width.
They fit very naturally into the perspective of tree-ordered weakly sparse models. 
%
We now outline how  merge-models and their properties fit into this broader perspective.
\medskip

Recent work shows that one can capture a surprisingly rich fragment of monadically dependent structure theory and build new bridges to classical sparsity notions~\cite{tows_arxiv}. 
In particular, tree-ordered weakly sparse structures arise naturally for several important graph classes, including classes of bounded shrubdepth~\cite{ganian2019shrub,ganian2012trees}, structurally bounded expansion~\cite{gajarsky2020first}, monadically stable~\cite{covers}, bounded clique-width~\cite{courcelle1992monadic,colcombet2007combinatorial}, and bounded twin-width~\cite{twin-width1,lmcs_perm}; we defer the definitions needed in this paper to \zcref{sec:prelim}.

Among the most recent examples of this paradigm are \emph{twin-models}, which arise from contraction sequences witnessing bounded twin-width.
Introduced only recently, twin-width has quickly emerged as a major width measure, with numerous structural, algorithmic, and model-theoretic applications, see~\cite{twin-width1,bonnet2024twin}.
Twin-models provide a tree-ordered representation of graphs of bounded twin-width and played a key role in the characterization of bounded twin-width classes as first-order transductions of proper permutation classes~\cite{lmcs_perm}.
(See \zcref{sec:tww} for a formal definition of twin-models.)
The main building block behind this application is that every graph of twin-width at most~$t$ admits a weakly sparse twin-model of twin-width at most~$2t$. By weakly sparse, we mean that the biclique number\footnote{The \emph{biclique-number} $\bomega(\mathbf G)$ of a tree-ordered graph $\mathbf G$ is the maximum integer $t$ such that $K_{t,t}$ is a subgraph of the underlying graph. This notion naturally extends to tree-ordered binary structures, by considering the Gaifman graph of the structure obtained by ``forgetting'' the tree-order. (See \zcref{def:bomega}.)}~$\bomega$ of the constructed twin-model is bounded by a function of the twin-width of the represented graph.
Thus, twin-models witness how a decomposition parameter can be transformed into a tree-ordered weakly sparse representation that is stable under logical interpretation and useful for structural applications.

Classes of bounded linear clique-width and, more generally, of bounded clique-width form a natural and important source of examples of bounded twin-width classes.
Hence, they admit twin-models of bounded twin-width.
In this paper, we strengthen this observation by showing that one can choose these twin-models so that they preserve the corresponding clique-width parameter up to a constant factor: 
every binary relational structure of bounded linear clique-width admits a weakly sparse twin-model of bounded linear clique-width, and every binary relational  structure of bounded clique-width admits a weakly sparse twin-model of bounded clique-width, where the biclique number $\bomega$ of the constructed twin-models is bounded by a function of the linear clique-width (resp.\ the clique-width) of the represented graph.
More precisely, we prove the following theorem.

\getkeytheorem{thm:cw_model}

In this context we then study \emph{merge-width}~\cite{dreier2025merge}, a recently introduced width measure designed as a common generalization of several major graph parameters, including tree-width, twin-width, clique-width, degeneracy, and generalized coloring numbers.
Let us recall that in this way, merge-width provides a unified perspective on two of the main structural frameworks in the area: sparsity theory and twin-width theory.
Its underlying decompositions, called \emph{merge sequences}, refine contraction sequences by keeping track of when adjacencies between parts become determined.

\begin{figure}
    \centering
    \includegraphics[width=.2\textwidth]{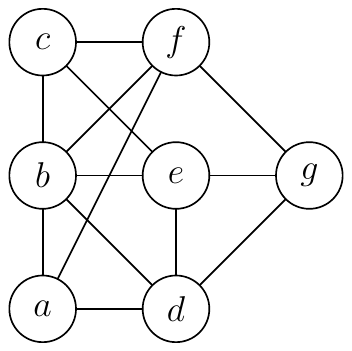}\\
    \hrulefill
    \vspace{12pt}\\\includegraphics[width=.75\linewidth]{mod_tww_gray}\\    
    \hrulefill
\vspace{12pt}\\
\includegraphics[width=.75\linewidth]{mod_mw_gray}    
    \caption{For the same graph,
     a contraction sequence and its associated twin-model, and a merge sequence
     and its associated (compact) merge-model (dark gray  plain edges ($S_{E,1}$) represent revealed adjacencies and dark gray dotted ones ($S_{E,0}$) revealed non-adjacencies).}
    \label{fig:seqmod}
\end{figure}

We further  introduce the notion of \emph{merge-models} as the natural generalization of twin-models for merge-width.
They fit naturally into the perspective of tree-ordered weakly sparse models.
Let us emphasize that just as twin-models encode the combinatorics of contraction sequences in a tree-ordered structure from which the graph can be recovered by a fixed interpretation, merge-models are 
tree-ordered structures that 
can be derived from merge sequences\footnote{Our definition of merge models also includes tree-ordered models that do not have a compatible ranking and thus do not derive from any merge sequence.} and serve, again, as canonical representations of the original graph.
In this way, merge-width is brought into the broader program of understanding graph classes through sparse structure together with an ordered tree-like skeleton.
\zcref{fig:seqmod} illustrates, for the same graph, a contraction sequence and its twin-model as well as a merge sequence and its merge-model.

Our main technical contribution is the following analogue of the above theorem for radius-$r$ merge-width.

\getkeytheorem{thm:mw_model}

Our results are primarily structural.
We isolate the notion of merge-model, develop its basic theory, and show that binary relational structures can be faithfully represented in this framework.
Although the resulting theory currently has fewer consequences than the one available for twin-models, it provides a clean structural foundation for studying merge-width through canonical tree-ordered representations.

We view this as another step in the emerging theory of tree-ordered weakly sparse models.
While the algorithmic and model-theoretic consequences of merge-models remain to be explored, the structural picture already suggests that they are the natural representation-theoretic counterpart of merge sequences, much as twin-models are for contraction sequences.
As further evidence for their relevance, we show that twin-models can be viewed naturally as simple special cases of merge-models.

\getkeytheorem{thm:mw_tww}

The remainder of the paper develops these ideas formally.
We recall the necessary preliminaries on graphs, relational structures, twin-models, and merge-width in \zcref{sec:prelim}.
We prove our transfer theorems for clique-width and merge-width in \zcref{sec:more}. 
In the remainder of the paper we introduce merge-models and establish their basic properties. 
\section{Preliminaries}
\label{sec:prelim}
\subsection{General notations}

We denote by $[k]$ the set $\{1,\ldots,k\}$.

A (finite) \emph{binary relational signature} $\sigma$ is a finite set of relation symbols, each of arity at most~$2$. 
Throughout the paper, all signatures are finite and binary. 

A (finite) \emph{$\sigma$-structure} $\struc M$ consists of a non-empty finite set $M$, called its \ndef{universe}, together with an interpretation $R(\struc M)\subseteq M^{\operatorname{arity}(R)}$ for every $R\in\sigma$.
We often do not distinguish between relation symbols and their interpretations. For $R\in\sigma$, we write $\mathbf M^R$ for the $R$-reduct of $\mathbf M$. More generally, for $\tau\subseteq\sigma$, we write $\mathbf M^\tau$ for the $\tau$-reduct of $\mathbf M$.

Let $\sigma$ be a relational signature.
The \emph{Gaifman graph}
${\rm Gaif}(\mathbf M)$ of a $\sigma$-structure $\mathbf M$ is the graph with same domain as $\mathbf M$, where $u$ and $v$ are adjacent if they belong together to some relation of $\mathbf M$.

A \emph{graph} is a finite structure over the signature $\{E\}$, where $E$ is interpreted as a symmetric and irreflexive binary relation.

A strict partial order $\prec$ is a \ndef{tree-order} if it has a unique minimum $r$ (the \ndef{root}) and, for every $x\neq r$, the set of elements smaller than $x$ forms a chain. Equivalently, the Hasse diagram of $(M,\prec)$ is a rooted tree. If $x\neq r$ is an element of a tree-order, we write $\pi(x)$ for the \emph{parent} of $x$, that is the largest $y$ such that $y\preceq x$ (which exists since we consider finite orders). Given two elements $x,y$ in a (partial) order, we write $x\parallel y$ if $x$ and $y$ are incomparable.

Let $\sigma$ be a relational signature.
A \emph{tree-ordered} $\sigma$-structure is a structure with signature $\sigma\cup\{\prec\}$, where $\prec$ is interpreted as a tree-order. An \emph{ordered} $\sigma$-structure is a special case of tree-ordered $\sigma$-structure when $\prec$ is a linear order (usually denoted~$<$ instead of $\prec$).

\begin{defi}
\label{def:bomega}
Let $\mathbf M$ be a tree-ordered $\sigma$-structure. The biclique number of $\mathbf M$ is defined as
\[
\bomega(\mathbf M)=\max\,\{t\colon K_{t,t}\subseteq \rm Gaif(\mathbf M^\sigma)\}.
\]
By extension, 
if  $\mathscr C$ is a class of tree-ordered  $\sigma$-structures, the biclique number of $\mathscr C$ is 
\[
\bomega(\mathscr C)=\sup\,\{\bomega(\mathbf M)\colon  \mathbf M\in\mathscr C\}.
\]
\end{defi}

\subsection{Logic}
A \emph{simple interpretation} $\mathsf I$ of $\tau$-structures in $\sigma$-structures consists  of a $\sigma$-formula $\nu(x)$ and a $\sigma$-formula $\rho_R(\bar x)$ for each  $R\in \tau$ (with arity~$|\bar x|$). For a $\sigma$-structure $\mathbf M$, 
$\mathsf I(\mathbf M)$ is the $\tau$-structure $\mathbf N$ with domain~$\nu(\mathbf M)$, such that $R(\mathbf N)=\{\bar a\subseteq\mathbf N\colon \mathbf M\models\rho_R(\bar a)\}$ for every $R\in\tau$.
For a class $\Mm$ of $\sigma$-structures we define $\mathsf I(\Mm)=\bigcup_{\struc M\in \Mm} \mathsf I(\struc M)$. 
We say that a class $\Nn$ can be interpreted in a class~$\Mm$ if there exists an interpretation~$\mathsf I$ such that $\Nn\subseteq \mathsf I(\Mm)$. 

Note that if $\tau\subseteq\sigma$, then the maps $\mathbf M\mapsto \mathbf M^\tau$ and $\mathbf M\mapsto {\rm Gaif}(\mathbf M)$ are both simple interpretations.
\medskip

For $k\in\mathbb N$, the \ndef{copy operation} $\mathsf C_k$ maps a $\sigma$-structure $\mathbf M$ to the $\sigma\cup\{E\}$-structure~$\mathsf C_k(\mathbf M)$ obtained by taking $k$ disjoint copies of $\mathbf M$ and making the clones of an element of $\mathbf M$ adjacent in $E$. (Note that $\mathsf C_1$ is the identity mapping.)

A unary predicate is also called a \emph{color}. 
A \ndef{monadic expansion} or \emph{coloring} of a \mbox{$\sigma$-structure~$\struc M$} is a \mbox{$\sigma^+$-structure~$\struc M^+$}, where $\sigma^+$
is obtained from $\sigma$ by adding unary relations, such 
that $\struc M$ is the $\sigma$-reduct of $\struc M^+$.
For a set $\Sigma$ of colors, the \ndef{coloring operation} $\Gamma_\Sigma$ maps a $\sigma$-structure $\struc M$ to the set~$\Gamma_\Sigma(\struc M)$ of all its $\Sigma$-colorings.

A \emph{transduction} $\mathsf T$ is a composition
of copy operations, monadic expansions, and simple interpretations.
Every transduction $\mathsf T$ is equivalent to the composition
$\mathsf I\circ\Gamma_\Sigma\circ \mathsf C_k$ of a copy operation $\mathsf C_k$, a
coloring operation~$\Gamma_\Sigma$, and a simple interpretation~$\mathsf I$ of $\tau$-structures in $\Sigma$-colored \mbox{$\sigma$-structures} \cite{SBE_TOCL}. 
Hence, for every $\sigma$-structure we have
$\mathsf T(\struc M)=\{\mathsf I(\struc M^+): \struc M^+\in\Gamma_\Sigma(\mathsf C_k(\struc M))\}$. 
(When we define a transduction, if we do not say anything about the copy operation, this means that it is not present, i.e.\ it is $\mathsf C_1$.)

For a class $\Mm$ of $\sigma$-structures we define $\mathsf T(\Mm)=\bigcup_{\struc M\in \Mm} \mathsf T(\struc M)$. 
We say that a class $\Nn$ can be transduced in a class~$\Mm$ if there exists a transduction $\mathsf T$ such that $\Nn\subseteq \mathsf T(\Mm)$. 

For a comprehensive discussion of first-order transductions, we refer the interested reader to \cite{Tquasiorder}.

\pagebreak
\subsection{Clique-width and linear clique-width}
The clique-width and linear clique-width parameters are dense versions of the tree-width and path-width parameters. Clique-width and linear clique-width allow to characterize when a graph (or, more generally, a binary structure) can be defined as a simple interpretation of  a colored tree 
(resp.\ of a colored path) in monadic second order logic. Classes with bounded clique-width have a particular importance in algorithmic graph theory, as it follows from \cite{Courcelle2000a,oum2006approximating} that every decision problem expressible in monadic second order logic is checkable in polynomial time on a class of graphs with bounded clique-width.

The formal definition of the clique-width and of the linear clique-width of a binary structure $\mathbf G$ is based on the complexity of a sequence of operations allowing to construct $\mathbf G$.

Let $\mathbf G$ be a  binary relational structure with finite signature $\sigma$.
The \emph{clique-width} of~$\mathbf G$, denoted by $\mathop{\rm cw}(\mathbf G)$, is  the minimum number of labels needed to construct $\mathbf G$ by means of the following $4$ operations:
\begin{itemize}
    \item create  a new element $v$ with label $i$;
    \item take the union of two structures previously constructed;
    \item for some relation $R\in\sigma$ and labels $i\neq j$, add all pairs $
    (u,v)$ to $R$ where $u$ has label $i$ and $v$ has label $j$;
    \item relabel all elements with label $i$ to label $j$.
\end{itemize}

The \emph{linear clique-width} of~$\mathbf G$, denoted by $\mathop{\rm lcw}(\mathbf G)$, is  
defined analogously as the minimum number of labels required to construct~$\mathbf G$ using the operations above, with the additional restriction that the construction is linear, that is, the disjoint union operation is only applied when one of the two structures consists of a single newly created element.

Clique-width and linear clique-width can alternatively be characterized at the level of classes,
from a logical point of view. A class of binary relational structures has bounded
clique-width if and only if it is a transduction of the class of tree-orders, and has bounded
linear clique-width if and only if it is a transduction of the class of linear orders
\cite{courcelle1995logical}. As a result, they are both preserved by transductions.

\begin{fact}\label{fact:cw_trans}
    Let $\sigma$ be a binary relational signature,
    and $\mathsf T$ a transduction from $\sigma$-structures. There exist functions $f_{\mathrm{cw}}$ and
    $f_{\mathrm{lcw}}$ such that for any $\sigma$-structure $\mathbf G$ and any $\mathbf H \in \mathsf T(\mathbf G)$,
    \[
    \mathrm{cw}(\mathbf H) \le f_{\mathrm{cw}}(\mathrm{cw}(\mathbf G)) \quad \text{ and }\quad
    \mathrm{lcw}(\mathbf H) \le f_{\mathrm{lcw}}(\mathrm{lcw}(\mathbf G)). 
    \]
\end{fact}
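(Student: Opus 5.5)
The statement is a known consequence of the logical characterisation of (linear) clique-width recalled just above. The plan is to derive the two bounds from that characterisation and make the dependence on the parameter explicit by a uniform compactness-free argument.

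Fix $k$ and let $\mathscr C_k$ be the class of $\sigma$-structures $\mathbf G$ with $\mathrm{cw}(\mathbf G)\le k$. First observe that a $k$-expression for $\mathbf G$ is a rooted binary tree whose nodes are labelled by finitely many operation symbols. Here finitely many means: create with label $i\in[k]$, union, add $R$ between $i$ and $j$, relabel $i\to j$. The number of symbols is bounded in terms of $k$ and $|\sigma|$. The leaves of this tree are exactly the elements of $\mathbf G$. Whether $(u,v)\in R(\mathbf G)$ is determined by a fixed MSO-definable condition on the labelled tree. Concretely, the condition is that there is an ancestor node applying ``add $R$ between $i$ and $j$'' at which $u$ currently carries label $i$ and $v$ carries label $j$. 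The current label is computed by tracking relabellings along the path from the leaf, which is MSO-definable uniformly in $k$. So $\mathscr C_k\subseteq \mathsf S_k(\mathscr T)$ for an MSO transduction $\mathsf S_k$ from colored trees, and the same argument with caterpillar-like trees handles $\mathrm{lcw}$ using colored paths (linear orders).

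Next, by the characterisation of \cite{courcelle1995logical} recalled above, bounded clique-width classes are exactly the MSO (equivalently, in this setting, the relevant) transductions of tree-orders. Consider $\mathsf T\circ \mathsf S_k$, which maps (colored) trees to a class containing $\mathsf T(\mathscr C_k)$. Compositions of transductions are transductions. Hence $\mathsf T(\mathscr C_k)$ is a transduction of the class of trees and so has bounded clique-width, say by some $c_k$. We set $f_{\mathrm{cw}}(k)=c_k$, taking a maximum over $k'\le k$ to make it monotone. Note that $\mathsf T$ is a first-order transduction, so it is in particular an MSO transduction and the composition stays in the MSO world. Doing the same with paths gives $f_{\mathrm{lcw}}$.

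The main obstacle is the converse direction of the characterisation. We need that a transduction of trees (resp.\ linear orders) has bounded clique-width (resp.\ linear clique-width), with a bound depending only on the transduction. That is the content of \cite{courcelle1995logical}; I would cite it rather than reprove it, the underlying tool being a Feferman--Vaught/Shelah-type composition argument. The argument goes as follows. For each node of the tree, partition its subtree's leaves by their MSO-type of bounded quantifier rank relative to the subtree. These types serve as labels, and their number is bounded in terms of $\mathsf T$ and $k$. The labels are combinable at unions and determine the interpreted relations. This yields a $c_k$-expression. Copying (from $\mathsf C_m$) is absorbed by multiplying labels by $m$. The linear case is identical along a path.
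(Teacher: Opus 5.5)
Your proposal is correct and follows the paper's route. The paper derives the fact directly from the logical characterisation of (linear) clique-width in \cite{courcelle1995logical} together with closure of transductions under composition, and your step of fixing $k$ and working with the class $\mathscr C_k$ is what yields the functions $f_{\mathrm{cw}}$ and $f_{\mathrm{lcw}}$. The only harmless difference is that you use the MSO form (MSO transductions of labelled trees/words, with the FO transduction $\mathsf T$ seen as a special MSO transduction) rather than the FO transductions of tree-orders/linear orders quoted in the paper; this only needs the elementary encoding of $k$-expressions plus the Courcelle--Engelfriet closure theorem.
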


\subsection{Twin-width}
\label{sec:tww}
Twin-width is a parameter, which has been recently introduced in \cite{twin-width1}. Classes with bounded twin-width include both the classes with bounded clique-width and proper minor closed classes. 
Classes with bounded twin-width have strong algorithmic and structural properties. For instance, a class of ordered binary structures is monadically dependent if and only if it has bounded twin-width and, in such a case, FO-model checking is fixed-parameter tractable~\cite{bonnet2024twin}. 
This parameter is based on the notion of contraction sequences, from which one can construct tree-like representations, called \emph{twin-models} \cite{lmcs_perm}. 

Intuitively, a twin-model is a rooted tree that records a contraction history of the represented structure.
The leaves are the original vertices, while an internal node represents a part created during the contraction process.
A transversal edge~$Z_R(x,y)$ certifies that, in any stage in which the contractions represented by~$x$ and $y$ have already been realized, the relation~$R$ between the corresponding two parts is already determined.
The minimality condition says that we record each such piece of information at the earliest possible place in the tree, and the cycle condition prevents contradictory ways of propagating this information to the leaves. 
Formally, twin-models are defined as follows. 

\begin{defi}
Let $\sigma$ be a finite binary relational signature, and let \mbox{$\sigma'=\{\preceq\}\cup\{Z_{R}\colon R\in\sigma\}$}.
A \emph{twin-model} $\mathbf{T}$ is a $\sigma'$-structure such that the following properties hold in  $\mathbf{T}$:
\begin{itemize}
    \item  $\preceq$ is a tree-order, with minimal element $\rho(\mathbf T)$ (\/the \emph{root} of $\mathbf T$\/) and set of maximal elements $L(\mathbf T)$ (\/the set of \emph{leaves} of $\mathbf T$\/);

\item for any $Z_{R}\in\sigma'$, there is no pair of comparable elements $x$ and $y$ with $Z_R(x,y)$;

    \item for any $Z_{R}\in\sigma'$, each pair $(u,v) \in Z_{R}$ is minimal, in the sense that there is no distinct $(u',v') \in Z_{R}$ with $u' \preceq u$ and $v' \preceq v$;
    
    \item  any cycle  in $\mathbf T$ that follows the tree orientation (from root to leaves) must contain two consecutive edges in $\bigcup Z_{R}$.
\end{itemize}
\end{defi}

A twin-model $\mathbf T$ is a twin-model of a  $\sigma$-structure~$\mathbf G$ if $G=L(\mathbf T)$, and for  each relation $R\in \sigma$, $R(G)$ is the set of all pairs $(u,v)$ such that $u,v\in L(\mathbf T)$, and there exist $u' \preceq u$ and $v' \preceq v$ with $(u',v') \in Z_{R}$. 

Hence, the structure $\mathbf G$ is obtained as a simple interpretation $\mathsf I_\sigma$ (which depends only on $\sigma$) of any of its twin-models $\mathbf M$. It follows 
from \zcref{fact:cw_trans} and the preservation of twin-width boundedness by transductions \cite{twin-width1} that there exist function $f_{\rm lcw},f_{\rm cw}$, and  $f_{\rm tww}$ such that if $\mathbf M$ is a twin-model of $\mathbf G$, then 
${\rm lcw}(\mathbf G)\leq f_{\rm lcw}({\rm lcw}(\mathbf M))$, 
${\rm cw}(\mathbf G)\leq f_{\rm cw}({\rm cw}(\mathbf M))$, and
${\rm tww}(\mathbf G)\leq f_{\rm tww}({\rm tww}(\mathbf M))$.

Conversely, as mentioned in the introduction, $\mathbf M$ can be chosen so that 
${\rm tww}(\mathbf M)\leq 2\mathop{\rm tww}(\mathbf G)$.
We prove in \zcref{sec:more} that $\mathbf M$ can alternatively be chosen so that 
${\rm lcw}(\mathbf M)\leq 2\mathop{\rm lcw}(\mathbf G)$ or so that
${\rm cw}(\mathbf M)\leq 2\mathop{\rm cw}(\mathbf G)$.

\subsection{Merge-width}
\label{sec:mw}

Merge-width measures whether a graph admits a controlled hierarchical construction. 
Informally, vertices are gradually merged into larger parts, while pairs of vertices are progressively resolved as edges or non-edges. 
The key requirement is that this resolved information remains locally simple throughout the process. 
Formally, merge-width is defined via merge-sequences. 

Let $\sigma$ be a finite binary relational signature. We denote by $0$ and $1$ the false and true values. For $Z\in\sigma$ and $\alpha\in\{0,1\}$, we define 
\[
Z^\alpha(x,y):=(Z(x,y)\leftrightarrow\alpha)=\begin{cases}
    Z(x,y)&\text{if $\alpha=1$},\\
    \neg Z(x,y)&\text{otherwise}.
    \end{cases}
\]
and write $\bar\alpha = 1-\alpha$.

As in \cite{dreier2025merge}, for two sets $A,B$ we define
\[
AB:=\{(u,v)\in A\times B\colon  u\neq v\}.
\]
Unlike \cite{dreier2025merge}, we use the notation 
$\pairs{A}{2}$ for $\{(u,v)\in A\times A\colon  u\neq v\}$ (so that the falling factorial $(n)_2$ is the cardinality of $([n])_2$).

Let $\mathbf G$ be a $\sigma$-structure with domain $V$.

\begin{defi}
\label{defi:merge_seq}
A \emph{merge-sequence} of~$\mathbf G$
is  a sequence 
$$\Sigma=((\mathcal{P}_1,R_1),\ldots,(\mathcal{P}_m,R_m))$$ 
such that
\begin{enumerate}
    \item each $\mathcal{P}_i$ is a partition of $V$, $\mathcal P_{i-1}$ is a refinement of $\mathcal P_i$ (for $1<i\leq m)$,
    $\mathcal{P}_1=\{\{v\}:v\in V\}$, and $\mathcal{P}_m=\{V\}$;
    \item $R_1 \subseteq \dots \subseteq R_m=\pairs{V}{2}$;
    \item
   for every $i\in[m]$, every $Z\in\sigma$ and any two parts $P,Q\in \mathcal{P}_i$, there exists $\alpha\in\{0,1\}$ such that $PQ\setminus R_i \subseteq Z^\alpha$.
\end{enumerate}
\end{defi}

For $u,v\in V$ and $1\leq i\leq m$, we denote by ${\rm dist}_{R_i}(u,v)$ the shortest path distance between $u$ and $v$ in the underlying undirected graph $G(V,R_i)$. More generally, if $A\subseteq V$, we define ${\rm dist}_{R_i}(u,A)=\min_{v\in A} {\rm dist}_{R_i}(u,v)$.
We further define, for $1<i\leq m$ and $r$ a positive integer, the set $\reach_r(v,\mathcal{P}_{i-1},R_i)$ of all the parts in $\mathcal{P}_{i-1}$ accessible from $v$ by a path of length at most $r$ in the graph $(V,R_i)$, that is: 
\[
\reach_r(v,\mathcal{P}_{i-1},R_i)=\{P\in\mathcal P_{i-1}\colon {\rm dist}_{R_i}(v,P)\leq r\}.
\]
 
\begin{defi}
The \emph{radius-$r$ width}   of the  merge-sequence $\Sigma$, denoted by $\width_r(\Sigma)$, is defined as
$$\width_r(\Sigma)=\max_{2\le i \le m}\max_{v\in V}|\reach_r(v,\mathcal{P}_{i-1},R_i)|.$$
The  \emph{radius-$r$ merge-width} of $\mathbf G$, denoted by $\mw_r(\mathbf G)$, is defined as the minimum  radius-$r$ width among all merge-sequences of $\mathbf G$:
\[
\mw_r(\mathbf G)=\min_{\Sigma}\width_r(\Sigma).
\]
\end{defi}

\begin{fact}
\label{fact:min_sigma}
If $\Sigma=((\mathcal P_1,R_1),\dots,(\mathcal P_m,R_m))$ and $\Sigma'=((\mathcal P_1,R_1'),\dots,(\mathcal P_m,R_m'))$ are merge-sequences of $\mathbf G$ and $R_i'\subseteq R_i$ for all $1\leq i\leq m$, then $\width_r(\Sigma')\leq \width_r(\Sigma)$ for all positive integers $r$.
\end{fact}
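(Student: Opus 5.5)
The statement is a monotonicity property of the radius-$r$ width: on the same sequence of partitions, a smaller family of resolved pairs can only reduce width. I would argue pointwise and then take maxima, using nothing beyond the definitions of $\reach_r$ and $\width_r$.

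Fix a positive integer $r$, an index $2\le i\le m$ and a vertex $v\in V$. The first step is to compare the two graphs on $V$. Since $R_i'\subseteq R_i$, the undirected graph $G(V,R_i')$ is a spanning subgraph of $G(V,R_i)$. Every path in $G(V,R_i')$ is therefore also a path in $G(V,R_i)$, which gives ${\rm dist}_{R_i}(u,w)\le {\rm dist}_{R_i'}(u,w)$ for all $u,w\in V$. Here we use the convention that the distance is $+\infty$ when no path exists. Taking the minimum over $w\in P$, we get ${\rm dist}_{R_i}(v,P)\le {\rm dist}_{R_i'}(v,P)$ for every part $P\in\mathcal P_{i-1}$.

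The second step uses that both sequences share the same partition $\mathcal P_{i-1}$. If $P\in\reach_r(v,\mathcal P_{i-1},R_i')$, then ${\rm dist}_{R_i'}(v,P)\le r$, so ${\rm dist}_{R_i}(v,P)\le r$ and $P\in\reach_r(v,\mathcal P_{i-1},R_i)$. This yields the inclusion
\[
\reach_r(v,\mathcal P_{i-1},R_i')\subseteq\reach_r(v,\mathcal P_{i-1},R_i),
\]
and hence the corresponding inequality between cardinalities.

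Finally, take the maximum over $v\in V$ and over $2\le i\le m$; both sequences range over the same indices because they have the same length $m$. This gives $\width_r(\Sigma')\le\width_r(\Sigma)$. The argument does not use the hypothesis that $\Sigma'$ is a merge-sequence; that hypothesis only matters for interpreting $\Sigma'$ as a witness of merge-width, and it is assumed rather than proved. I see no genuine obstacle. The only point needing care is the convention for distances between disconnected vertices, and monotonicity holds for it as well.
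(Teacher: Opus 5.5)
Your proof is correct. The paper states this as an immediate fact without writing out a proof, and your argument is the intended one: $R_i'\subseteq R_i$ makes distances in $(V,R_i')$ at least those in $(V,R_i)$, so each reach set for $\Sigma'$ is contained in the corresponding reach set for $\Sigma$ over the same partition $\mathcal P_{i-1}$, and both maxima range over the same indices.
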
 

The following facts are immediate but worth mentioning.

\begin{fact}
\label{fact:mw_red}
    Let $\sigma$ be a binary relational signature, let $\tau\subset\sigma$, let $\mathbf G$ be a $\sigma$-structure, and let $r$ be a positive integer.
    Then,
\begin{align*}
{\rm mw}_r(\mathbf G^\tau)&\leq{\rm mw}_r(\mathbf G)\\
{\rm mw}_r({\rm Gaif}(\mathbf G))&\leq{\rm mw}_r(\mathbf G)
\end{align*}
\end{fact}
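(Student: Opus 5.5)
Given a merge-sequence $\Sigma=((\mathcal P_1,R_1),\dots,(\mathcal P_m,R_m))$ of $\mathbf G$ with $\width_r(\Sigma)=\mw_r(\mathbf G)$, I will build a merge-sequence of $\mathbf G^\tau$ (resp.\ of ${\rm Gaif}(\mathbf G)$) that has the same partitions and the same sets $R_i$. Since $\reach_r(v,\mathcal P_{i-1},R_i)$ depends only on $(\mathcal P_{i-1},R_i)$ and not on the relations of the structure, the widths will coincide. So the only thing to check is condition~(3) of \zcref{defi:merge_seq} for the new structure; conditions~(1) and~(2) refer only to $V$, $\mathcal P_i$ and $R_i$, which are unchanged.

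For the reduct $\mathbf G^\tau$ this is immediate. Condition~(3) for $\mathbf G^\tau$ quantifies only over $Z\in\tau\subseteq\sigma$, and for such $Z$ the interpretation of $Z$ in $\mathbf G^\tau$ is the same as in $\mathbf G$. So the required $\alpha$ is the one supplied by $\Sigma$.

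For the Gaifman graph, fix $i$, parts $P,Q\in\mathcal P_i$, and consider $E$. Note that loops ($u=v$) never lie in $PQ$. For a pair $(u,v)\in PQ$, we have $E(u,v)$ exactly when $u\neq v$ and some $Z\in\sigma$ satisfies $Z(u,v)$ or $Z(v,u)$. We also need to handle unary relations.
\begin{itemize}
\item Unary relations contribute nothing to adjacency between distinct vertices, so they can be discarded.
\item Binary relations require symmetric control. For each binary $Z$, apply condition~(3) to the ordered pair $(P,Q)$ and also to $(Q,P)$. This gives $\alpha_Z$ with $PQ\setminus R_i\subseteq Z^{\alpha_Z}$, and $\beta_Z$ with $QP\setminus R_i\subseteq Z^{\beta_Z}$.
\end{itemize}

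The one point needing care is that $R_i$ need not be symmetric. So I first symmetrize: replace each $R_i$ by $R_i\cup R_i^{-1}$. This is still increasing in $i$, still ends at $\pairs{V}{2}$, and leaves the underlying undirected graph, hence the distances and widths, unchanged. Condition~(3) only gets easier when $R_i$ grows. With this in place, if $(u,v)\in PQ\setminus R_i$ then $(v,u)\in QP\setminus R_i$, so $Z(v,u)\leftrightarrow\beta_Z$. Therefore, on $PQ\setminus R_i$, the truth value of $E(u,v)$ is the constant $\bigvee_Z(\alpha_Z\vee\beta_Z)$. This gives the required $\alpha$ for $E$, and both inequalities follow.
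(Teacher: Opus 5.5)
Your proof is correct and follows the paper's own approach: keep the partitions and revealed sets of an optimal merge-sequence of $\mathbf G$, note that the width depends only on $(\mathcal P_{i-1},R_i)$, and check condition~(3) for the new structure. Your symmetrization $R_i\mapsto R_i\cup R_i^{-1}$ spells out a detail that the paper's one-line proof leaves implicit: for the Gaifman graph, $(v,u)$ must be unrevealed whenever $(u,v)$ is, and this is harmless because distances are taken in the underlying undirected graph.
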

\begin{proof}
    Consider a merge sequence that is optimal for ${\rm mw}_r(\mathbf G)$ but reveal only the relations in $\tau$ (resp. the edges in the Gaifman graph of $\mathbf G$).
\end{proof}

\section{Linear clique-width and clique-width}
\label{sec:more}

\begin{thm}[store=thm:cw_model,restate-keys={note=see \zcpageref[nocap]{thm:cw_model}}]
 Let ${\mathbf G}$ be a  binary relational structure with finite signature $\sigma$.
 Then, ${\mathbf G}$ has a twin-model $\mathbf M$ with 
 \begin{align*}
 \bomega(\mathbf M)\leq 2\mathop{\rm lcw}(\mathbf G)\qquad&\text{and}\qquad  
 \mathop{\rm lcw}(\mathbf M)\leq  2\mathop{\rm lcw}(\mathbf G)
 \intertext{and a twin-model $\mathbf M$ with}
 \bomega(\mathbf M)\leq 2\mathop{\rm cw}(\mathbf G)\qquad&\text{and}\qquad  
 \mathop{\rm cw}(\mathbf M)\leq  2\mathop{\rm cw}(\mathbf G).
 \end{align*}
\end{thm}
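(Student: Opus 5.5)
The plan is to read the twin-model directly off an optimal (linear) clique-width expression, and then to build an expression for the twin-model that follows the same syntax tree.

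\emph{Normalising the expression.} Fix a $k$-expression for $\mathbf G$, with $k=\mathop{\rm cw}(\mathbf G)$, or a linear one for the lcw case. Put it in the following normal form, which uses no extra labels.
\begin{itemize}
\item Every operation that adds pairs between labels $i$ and $j$ acts on a structure in which both label classes are nonempty.
\item Each union is followed by a maximal block of edge-additions and relabelings.
\item Empty labels are discarded.
\end{itemize}

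\emph{Defining the twin-model $\mathbf M$.} Work on the parse tree. For each node $t$ and each label $i$ that is nonempty at $t$, create a tree vertex $x_{t,i}$, which stands for the class of elements carrying label $i$ at $t$. Order these vertices as follows.
\begin{itemize}
\item The leaves are the elements of $\mathbf G$.
\item The parent of $x_{t,i}$ is $x_{t',j}$, where $t'$ is the parent node of $t$ and $j$ is the label that class $i$ becomes after the relabelings at $t'$.
\item Vertices with a single child are contracted.
\item A root is added on top.
\end{itemize}
The classes of elements then form a laminar family, so $\preceq$ is a tree-order whose leaves are $L(\mathbf M)=G$. Each operation adding $R$ between labels $i$ and $j$ at node $t$ contributes the transversal pair $Z_R(x_{t,i},x_{t,j})$. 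The two vertices are incomparable because they are disjoint classes. The leaves below $x_{t,i}$ and $x_{t,j}$ are exactly the pairs added, so $\mathsf I_\sigma(\mathbf M)=\mathbf G$.

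\emph{Checking the twin-model axioms.} Minimality can fail when the same pair of classes gets an edge-addition again at an ancestor node. We fix this by keeping only the topmost occurrence, that is, the pair closest to the root, and deleting the pairs it dominates; this does not change the interpretation. For the cycle condition, orient every $Z$-edge from the vertex created later to the vertex created earlier, or treat $Z$ as symmetric. Since all $Z$-edges sit between classes present at the same node, the standard argument from \cite{lmcs_perm} applies. A root-to-leaves cycle that uses a single transversal edge would have to move down the tree, cross between two classes of one node, and come back up, which is impossible. This part should be routine.

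\emph{Biclique bound.} At a fixed parse-tree node there are at most $k$ class-vertices. Each $Z$-edge joins two vertices of the same node. The vertices of $\mathbf M$ are therefore grouped into "levels" of size at most $k$, and every transversal edge stays inside a level. A vertex may be listed at several nodes when its class persists, but after contraction each vertex spans an interval of levels. Suppose $K_{t,t}$ is present. Pick the deepest level used by some edge of the biclique. A counting argument then shows that all $2t$ vertices are alive at a common level, or at two consecutive levels, which gives $2t\le 2k$. I expect this interval argument to need care, but it is a Helly-type property for subpaths of a tree.

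\emph{Expression for $\mathbf M$ with $2k$ labels: the main obstacle.} We build $\mathbf M$ bottom-up along the same parse tree. Each label $i$ gets two labels:
\begin{itemize}
\item $i$, for the current "top" vertex $x_{t,i}$ of the class;
\item $i'$, for all vertices strictly below it, including the leaves.
\end{itemize}
The tree-order $\preceq$ is a binary relation to be produced by the expression. When the class vertex $x_{t,i}$ is created as a union or relabel of classes $j_1,\dots,j_s$, proceed in this order:
\begin{enumerate}
\item create a new element with a fresh label;
\item add $\preceq$ from it to the labels $j_1,\dots,j_s$ and $j_1',\dots,j_s'$;
\item relabel those to $i'$;
\item make the new element $i$.
\end{enumerate}
For each $Z_R$-pair between labels $i$ and $j$, add $Z_R$ between labels $i$ and $j$. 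Both top vertices are currently present, and only they carry those labels.

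The difficulty is the fresh label needed in step 1. We handle it by first merging $j_1$ into $j_2$, and so on, into top labels. This is allowed because once the new parent exists, only the old tops need to be distinguishable, so the spare label freed by such a merge can be reused. If this fails we use the ordering of the normalisation instead. Linearity is preserved because we only ever add single new elements. The root is added at the end.
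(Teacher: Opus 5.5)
Your route is the paper's: you follow the (linear) clique-expression, let tree vertices stand for label classes, put $Z_R$ on the two current class vertices at each edge-addition, and build $\mathbf M$ using label $i$ for the current top of class $i$ and $i'$ for everything below it. Your tree is more economical: it has one vertex per class, whereas the paper creates fresh vertices at every edge-addition and relabeling. However, two steps do not work as written.

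\emph{The $2k$-label expression.} Creating the new parent \emph{before} recoloring forces a $(2k+1)$-st label whenever all labels are in use. Your repair does not cover unions. There the classes being merged already carry the same label $i$, so there is no $j_1$ to merge into $j_2$. The justification is also backwards: once a class has grown, its old top never receives another $Z$-edge, so it need not be distinguished from the vertices below it.

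The missing idea, which is exactly what the paper does, is to reverse the order. First recolor the old top(s) and everything under them into the primed label of the resulting class ($i\to i'$ after a union; $i,i',j\to j'$ for $\rho_{i\to j}$). This empties the unprimed label. Then create the new vertex with that label and add $\prec$ from it to the primed label. Only single elements are ever added, so linearity is kept.

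The parse tree must also have one node per operation. If nodes are union-blocks, an edge-addition performed before a relabeling in the same block has no class vertices to attach to.

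\emph{The biclique bound.} Your claim that all $2t$ vertices are alive at one level, or at two consecutive levels, is false. For a counterexample, let $a_1$ die before $a_2$ is born, while $b_1,b_2$ live throughout and receive edges from both. The Helly property does not apply, because vertices on the same side need not meet.

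What does hold is a one-sided statement. Among the $2t$ vertices, pick one, say $b$, whose parse-tree path has the deepest top node $e$. Each vertex of the other side meets $b$'s path at some node $x$, which is $e$ or lies below it. Its top lies on the path from $x$ to the root at depth at most that of $e$, so its path contains $e$. At most $k$ classes are alive at $e$, which gives $t\le k$.

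Finally, your cycle-condition remark only handles cycles with a single transversal edge. The general case needs the timing argument. If $u\prec v$, then $v$'s class dies before $u$'s is born, and a $Z$-edge joins two classes coexisting at some node. Hence birth times strictly decrease around any alternating cycle (use post-order in the tree case), which is a contradiction.
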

\begin{proof}
      We consider a clique-expression
constructing $\mathbf G$ with $t$ labels, where $t=\mathop{\rm cw}(\mathbf G)$ if we consider general clique-expressions, or $t=\mathop{\rm lcw}(\mathbf G)$ if we consider linear clique-expressions.

Instead of directly constructing a twin-model of $\mathbf G$, we construct what we call a \emph{pre-model}, which is a twin-model where the root is deleted. (Note that, strictly speaking, the obtained structure is not tree-ordered anymore, but ``forest-ordered''.)

We construct a pre-model $\mathbf M$ following the clique-expression that 
constructs $\mathbf G$ as follows (the vertices of $\mathbf M$ are constructed with labels in $\{1,\ldots, t,1',\ldots, t'\}$ following a clique expression):
\begin{itemize}
    \item if the operation creates a new vertex $v$ with label $i$, we add a new (root) vertex of color $i$ to the current pre-model;
    \item if the operation combines the two substructures,  we take the union of the two corresponding  pre-models;
    \item if the operation adds the pairs $(u,v)$ to  $R\in \sigma$, where $u$ has label $i$ and~$v$ has label $j$, we first 
recolor every element colored $i$ and  $j$ to $i'$ and $j'$, respectively. 
We then add two new vertices $x$ and $y$ with colors $i$ and $j$,  extend the current model by adding the relations $x\prec x'$ for all elements $x$ of color $i'$, $y\prec y'$ for all elements $y'$ of color $j'$, and the transversal edge~$Z_R(x,y)$.
 \item 
 if the operation relabels all elements with label $i$ to label $j$, we first recolor all elements colored $i$ or $j$ to $j'$, and then add a new vertex $y$ with color $j$ and the relations $y\prec y'$ for all elements $y'$ of color $j'$.
    \end{itemize}
    
After performing all the above operations, we recolor every element of the resulting model with color $1'$, then introduce a new root $r$ with color $1$ and add $r\prec x$ for all element $x$ of color $1'$. 
Now, if there exist $x'\prec x$, $y'\prec y$ and $(x',y')\neq (x,y)$ such that $Z_R(x',y')$ and   $Z_R(x,y)$ for some $R\in \sigma$, then we delete $Z_R(x,y)$ from the current model.
This yields the final model $\mathbf M$.

We now prove that $\mathbf M$ is a twin-model. According to the construction of $\mathbf M$, the first three conditions of a twin-model hold. Assume that one can find  cycles  in $\mathbf T$ that follow the tree orientation (from root to leaves) and  contains no two consecutive edges in $\bigcup_{R\in \sigma} Z_R$. 
We consider a shortest cycle $\gamma$ with the above property.
By the minimality of $\gamma$,  $\gamma$ is an even cycle in which edges alternate between the tree order and relations $Z_R, R\in \sigma$.
Let  $\gamma=u_1,v_1,u_2,v_2,\ldots,u_p,v_p$, where $u_i\prec v_i$, and $Z_R(v_i,u_{i+1})$ or $Z_R(u_{i+1},v_i)$  (modulo $p$)  for some $R\in \sigma$.
We may assume that~$u_1$ is the vertex of the cycle $\gamma$ created  last during the model construction.
However, at the moment where $Z_R(u_1,v_p)$ is added, $u_p$ should not be present (by construction of $\mathbf M$, as $u_p\prec v_p$), which contradicts $u_1\prec u_p$.
Hence, the cycle $\gamma$ does not exist and thus  $\mathbf M$ is a twin-model. Clearly, the model  $\mathbf M$ has a clique-expression with $2t$ labels, which is linear if we started with a linear clique-expression.

Note that every vertex is originally colored with a color in $[t]$ and, from its first recoloring, will only get colors greater than $t$.
Moreover, transversal edges are added between vertices with colors in $[t]$. 
Let $Z=\bigvee_{R\in\sigma}Z_R$. 
As there exists no $u,u',v,v'$ such that $Z(u,v')$, $Z(u',v)$, $u'\prec u$ and $v'\prec v$, we deduce that if $K_{t,t}$ (with vertex sets $A$ and $B$) is a subgraph of the $Z$-reduct $M^Z$, then one side (say $B$) consists of incomparable vertices. 
Let $v\in A$ and let $u$ be a neighbor of $v$ in $B$ with, say, color~$i$. 
At the time the edge $uv$ is added, no neighbor of $v$ belongs to the same rooted tree as $u$, for otherwise it would be comparable with $u$.
As the maximum number of disjoint rooted trees at any stage of the construction is at most $2t$, we get $|B|\leq 2t$. Hence, $\bomega(\mathbf M)\leq 2t$.

It follows that  $\mathbf M$ satisfies  $\mathop{\rm cw}(\mathbf M)\leq  2\mathop{\rm cw}(\mathbf G)$
and $\bomega(\mathbf M)\leq  2\mathop{\rm cw}(\mathbf G)$
if it has been constructed from an optimal clique-expression of $\mathbf G$ and $\mathop{\rm lcw}(\mathbf M)\leq  2\mathop{\rm lcw}(\mathbf G)$ and 
$\bomega(\mathbf M)\leq  2\mathop{\rm lcw}(\mathbf G)$ if it has been constructed from an optimal linear clique-expression of $\mathbf G$.
\end{proof}

\section{merge-models}

We now define merge-models, which are generalizations of twin-models.
Again, the leaves are the vertices of the represented structure, and internal nodes encode larger parts arising along a decomposition.
The difference is that merge-models do not only record where adjacency is forced, but where adjacency or non-adjacency is decided.
In the following definition, an $S_{Z,1}$-edge means that the relation~$Z$ will hold by default between the corresponding two parts, while an $S_{Z,0}$-edge means that $Z$ does not hold by default.
In this setting, Condition~\textup{(5)} in~\Cref{def:merge_model} asserts that the status of every pair (in every relation) will eventually be decided. In the full generality of the definition given here, a merge-model does not need to arise from a merge sequence. However, as we shall see in the next section, every merge sequence uniquely defines a merge-model.

\begin{nota}
Let $\sigma^*$ be the  binary relational signature defined by
\[
\sigma^*=\{\preceq\}\cup\{S_{Z,\alpha}\colon Z\in\sigma, \alpha\in\{0,1\}\}.
\]
It will be convenient to define the following shortcuts: 
\begin{align*}
S_Z&=S_{Z,0}\vee S_{Z,1}&(Z\in\sigma)\\
S&=\bigvee_{Z\in\sigma}S_Z    
\intertext{and, for pairs $(x_1,y_1)$ and $(x_2,y_2)$,}
(x_1,y_1)\preceq(x_2,y_2)&=(x_1\preceq x_2)\wedge (y_1\preceq y_2).
\end{align*}
\end{nota}

\begin{ndefi}
\label{def:merge_model}
A \emph{merge-model} $\mathbf M$ is a $\sigma^*$-structure such that the following conditions hold in $\mathbf M$:
\begin{enumerate}
    \item\label{it:tree} $\preceq$ is a tree-order, with minimal element $\rho(\mathbf M)$ (\/the \emph{root} of $\mathbf M$\/) and set of maximal elements $V(\mathbf M)$ (\/the set of \emph{leaves} of $\mathbf M$\/); 
\item\label{it:nocomp} there is no pair of distinct comparable elements $x,y$ with $S(x,y)$:
\[
S(x,y)\rightarrow (x\parallel y)\vee(x=y);
\]
    \item\label{it:nocross} there exists no $(x,y)\prec (x',y')$ with
    $S(x,y')$ and $S(x',y)$; 
    \item\label{it:excl} there exists no $x,y$ and $Z\in\sigma$ with
    $S_{Z,0}(x,y)\wedge S_{Z,1}(x,y)$;
    \item\label{it:full} for every $Z\in\sigma$ and every  $(u,v)\in \pairs{V(\mathbf M)}{2}$ 
    there exists $(x,y)\preceq (u,v)$ with $S_Z(x,y)$.
\end{enumerate}
\end{ndefi}
\begin{ndefi}[Hat of a pair of leaves]
\label{def:hat}
According to Conditions (3) and (5) of \zcref{def:merge_model}, for every $(u,v)\in\pairs{V(\mathbf M)}{2}$ there exists a pair
$(x,y)\preceq (u,v)$ such that 
 $S(x,y)$ and $(x,y)$ is maximal for this property
--- meaning that $(x',y')\preceq (u,v)$ and $S(x',y')$ imply $(x',y')\preceq (x,y)$.

Hence, this pair $(x,y)$ is unique and we call it the \emph{hat} of $(u,v)$ and denote it by~$\widehat{u,v}$.
\end{ndefi}

A merge-model defines a $\sigma$-structure in the following way.

\begin{ndefi}[Interpretation $\Str$]
\label{nota:str}
We define the interpretation $\Str$ of $\sigma$-structures in $\sigma^*$-structures as follows: for a $\sigma^*$-structure $\mathbf M$,
\begin{itemize}
\item the domain of $\Str(\mathbf M)$ is $V(\mathbf M)$;
\item for every  $(u,v)\in \pairs{V(\mathbf M)}{2}$ and for every $Z\in\sigma$, we have
\[\Str(\mathbf M)\models Z(u,v)\quad\text{if}\quad \mathbf M\models S_{Z,1}(\widehat{u,v}).\]
\end{itemize}
\end{ndefi}

\section{Ranking and Layering}

We now relate merge-models to merge sequences.
The following construction $\model(\Sigma)$ turns a merge-sequence $\Sigma$ into a tree-ordered structure whose vertices are the parts appearing during the sequence.
The order is simply reverse inclusion of parts.
Whenever, at some step, all still-unrevealed pairs between two parts are known to be uniformly in $Z$ or uniformly outside $Z$, we add the corresponding transversal edge.
Thus $\model(\Sigma)$ stores exactly the information that the merge-sequence reveals, but in a static tree-ordered representation.

\pagebreak
\begin{nota}[Model derived from a sequence]
Let $\Sigma=((\mathcal P_1,R_1),\dots,(\mathcal P_m,R_m))$ be a merge-sequence of a $\sigma$-structure $\mathbf G$. The structure $\model(\Sigma)$ has signature~$\sigma^*$, domain $\bigcup \mathcal P_i$, with 
\begin{align*}
P\preceq Q\quad&\text{if}\quad P\supseteq Q\\
S_{Z,\alpha}(P,Q)\quad&\text{if}\quad\exists i<m, P,Q\in\mathcal P_{i},\text{ and }\emptyset\neq PQ\cap (R_{i+1}\setminus R_{i})\subseteq Z^\alpha(\mathbf G)
\end{align*}
(\/where $Z\in\sigma$ and $\alpha\in\{0,1\}$\/).

\end{nota}
\begin{lem}
\label{lem:seq_to_mod}
Let $\Sigma$ be a merge-sequence of a $\sigma$-structure $\mathbf G$. 

Then,
 $\model(\Sigma)$ is a merge-model and $\Str(\model(\Sigma))$ is isomorphic to $\mathbf G$.
\end{lem}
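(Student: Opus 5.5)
The plan is to verify the five conditions of \zcref{def:merge_model} one at a time, and then show that the hat of every pair of leaves carries the correct truth value. Throughout I use one bookkeeping device. For every $(u,v)\in\pairs{V}{2}$, let $i(u,v)$ be the unique index with $(u,v)\in R_{i+1}\setminus R_i$. If $R_1\neq\emptyset$, I first replace $R_1$ by $\emptyset$. This is harmless: the parts of $\mathcal P_1$ are singletons, so Condition (3) of \zcref{defi:merge_seq} is trivial at $i=1$. With this normalisation, every pair is revealed at exactly one step. I also use that the parts $\bigcup_i\mathcal P_i$ form a laminar family, and that if $X\in\mathcal P_j$, $P\in\mathcal P_i$ and $X\subsetneq P$, then $j<i$.

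\emph{Conditions (1), (2) and (4).} Condition (1) holds because a laminar family containing $V$ and all singletons, ordered by reverse inclusion, is a tree-order. Its root is $V$, and its maximal elements are the singletons, which I identify with the vertices of $\mathbf G$. Condition (2) holds because $S(P,Q)$ requires $P,Q\in\mathcal P_i$ for a common $i$. Two parts of one partition are either equal or disjoint, hence equal or incomparable. For Condition (4), suppose $S_{Z,\alpha}(P,Q)$ is witnessed at step $i$ and $S_{Z,\beta}(P,Q)$ at step $j\ge i$. Condition (3) of \zcref{defi:merge_seq} at step $i$ gives $PQ\setminus R_i\subseteq Z^\gamma$ for a single $\gamma$. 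Both witnessing sets are nonempty subsets of $PQ\setminus R_i$, because $R_i\subseteq R_j$. Hence $\alpha=\gamma=\beta$. The same argument covers the case $P=Q$.

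\emph{Condition (3).} I read $(x,y)\prec(x',y')$ as $x\prec x'$ and $y\prec y'$ componentwise. This is the reading under which the uniqueness claim in \zcref{def:hat} is meaningful. Suppose $P\supsetneq P'$, $Q\supsetneq Q'$, and that $S(P,Q')$ is witnessed at step $i$ while $S(P',Q)$ is witnessed at step $j$. Since $P'\subsetneq P$ with $P\in\mathcal P_i$ and $P'\in\mathcal P_j$, we get $j<i$. Since $Q'\subsetneq Q$ with $Q'\in\mathcal P_i$ and $Q\in\mathcal P_j$, we get $i<j$. This is a contradiction.

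\emph{Condition (5) and $\Str(\model(\Sigma))\cong\mathbf G$.} Fix $(u,v)$ and put $i=i(u,v)$. Let $P\ni u$ and $Q\ni v$ be the parts of $\mathcal P_i$ containing them. Then $(u,v)\in PQ\cap(R_{i+1}\setminus R_i)$, which is nonempty. By Condition (3) of \zcref{defi:merge_seq}, this set lies inside $Z^\alpha$, where $\alpha$ is the truth value of $Z(u,v)$ in $\mathbf G$. So $S_{Z,\alpha}(P,Q)$ holds, and this gives Condition (5). For the recovery statement, let $(X,Y)=\widehat{u,v}$, witnessed at some step $j$. By maximality, $(P,Q)\preceq(X,Y)$, so $X\subseteq P$ and $Y\subseteq Q$. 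If $(X,Y)=(P,Q)$, Condition (4) shows that the label is $\alpha$. Otherwise one of the inclusions is strict, so $j<i$. Then $(u,v)\in XY\setminus R_j$, and Condition (3) of \zcref{defi:merge_seq} at step $j$ forces the uniform value on $XY\setminus R_j$ to be that of $Z(u,v)$, namely $\alpha$. In both cases $S_{Z,\alpha}(\widehat{u,v})$ holds, so $\Str(\model(\Sigma))\models Z(u,v)$ if and only if $\mathbf G\models Z(u,v)$.

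I expect the main obstacles to be two points of bookkeeping rather than any single hard step. The first is the treatment of $R_1$: pairs already in $R_1$ would never be recorded unless the sequence is normalised as above. The second is fixing the precise meaning of $\prec$ on pairs in Condition (3). If only one coordinate is required to be strict, a part $Q$ can legitimately receive revealed pairs from both $P'$ and a later $P\supsetneq P'$, so Condition (3) really needs strictness in both coordinates. I would settle both conventions explicitly before running the time-index argument.
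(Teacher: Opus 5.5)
Your proof is correct and follows the paper's route: you check Conditions (1)--(5) by comparing the steps at which parts occur, and then read off $Z(u,v)$ from the pair of parts at the step where $(u,v)$ is revealed. Your extra care is warranted, not cosmetic, on three points where the paper is loose: its proof of Condition (5) tacitly assumes $R_1=\emptyset$ (without it the statement is false, e.g.\ $R_1=R_2=(V)_2$ on two vertices creates no $S$-edge, so your normalisation changes $\model(\Sigma)$ and should be stated as a needed convention rather than called harmless); $\model(\Sigma)$ satisfies Condition (3) only under your reading with strictness in both coordinates; and the paper's chain of equivalences wrongly asserts that the revealing pair $(P,Q)$ is the only $S_Z$-pair below $(u,v)$, a situation your case of a hat created at an earlier step $j<i$ handles correctly.
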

\begin{proof}
Let $\Omega$ be the domain of $\mathbf G$. (We use $\Omega$ instead of $V$ to avoid ambiguous notations).
Let $\Sigma=((\mathcal P_1,R_1),\dots,(\mathcal P_m,R_m))$. 

Let $f$ be the bijection between $\Omega$ and $\mathcal P_1=\{\{v\}\colon v\in \Omega\}$ that maps $v$ to $\{v\}$.

We first prove that $\model(\Sigma)$ is a merge-model:
\begin{itemize}
    \item That $\preceq$ is a tree-order follows from the fact that $\mathcal P_i$ is a refinement of $\mathcal P_j$ if $i<j$. The root of $\model(\Sigma)$ is $\rho(\model(\Sigma))=\Omega$ and its set of leaves is $V(\model(\Sigma))=\mathcal P_1=f(\Omega)$.
    \item By definition, if $\model(\Sigma)\models S_Z(P,Q)$, then there exists $i$ such that \mbox{$P,Q\in\mathcal P_i$}, hence either $P=Q$, or $P$ and $Q$ are disjoint
    and non-comparable with respect to $\preceq$.
    \item Assume $\model(\Sigma)\models S_Z(P,Q)\wedge S_Z(P',Q')$. Then, there exists $i,j$ with $P,Q\in\mathcal P_i$ and $P',Q'\in\mathcal P_j$. If $P\preceq Q'$, then $i\ge j$, and if $P'\preceq Q$, then $j \ge i$.
    Thus, we cannot have $P\preceq Q'$ and
$P'\preceq Q$ unless $i=j$, from which it follows that $(P',Q')=(P,Q)$.
    \item By construction and (3) of \zcref{defi:merge_seq}, we cannot have both $S_{Z,0}(P,Q)$ and $S_{Z,1}(P,Q)$ in $\model(\Sigma)$.
    \item As $R_m=\pairs{\Omega}{2}$, it follows that for every 
    pair $(\{u\},\{v\})$ of distinct leaves of $\model(\Sigma)$, there exists $i<m$ such that $(u,v)\in R_{i+1}\setminus R_i$.
Let $P\ni u, Q\ni v$ be such that $P,Q\in\mathcal P_i$. 
As $\Sigma$ is a merge sequence, there exists $\alpha\in\{0,1\}$
such that $PQ\setminus R_i\subseteq Z^\alpha$. Hence, 
by definition, $(P,Q)\preceq(\{u\},\{v\})$ and $S_Z(P,Q)$.
\end{itemize}

Thus, $\model(\Sigma)$ is a merge-model.
We now prove that the mapping $f$ is an isomorphism of $\mathbf G$ and $\Str(\model(\Sigma))$.

Note that the domain of $\Str(\model(\Sigma))$ is $f(\Omega)$. 
Moreover, for every $(u,v)\in \pairs{\Omega}{2}$ and every $Z\in\sigma$, we have: 
\begin{align*}
    \mathbf G\models Z(u,v)&\iff  \mathbf \Sigma\models\exists i\ (u,v)\in R_{i+1}\setminus R_{i}\\
    &\quad\text{ and }\exists P,Q\in\mathcal P_{i}\ 
    \bigl((u,v)\in PQ\cap (R_{i+1}\setminus R_{i})\subseteq Z\bigr)\\
    &\iff \mathbf \Sigma\models\exists i\ \exists P,Q\in\mathcal P_{i}\ 
    \bigl((u,v)\in PQ\cap (R_{i+1}\setminus R_{i})\subseteq Z\bigr)\\
    &\iff \model(\Sigma)\models (\exists (P,Q)\preceq (f(u),f(v))\ S_{Z,1}(P,Q))\\
    &\quad\text{ and }
    (\forall (P',Q')\preceq (f(u),f(v))\ (S_Z(P',Q')\rightarrow (P',Q')=(P,Q)))\\
    &\iff \model(\Sigma)\models S_{Z,1}(\widehat{f(u),f(v)})\\
    &\iff \Str(\model(\Sigma))\models Z(f(u),f(v)).
\end{align*}
 Hence, $f:\mathbf G\rightarrow \Str(\model(\Sigma))$ is an isomorphism.
\end{proof}

\pagebreak
\begin{ndefi}
\label{def:int_ranking}
An \emph{interval ranking} of a merge-model $\mathbf M$ is a mapping $\mathfrak I$ from $M$ to intervals of $\mathbb R$ such that:
\begin{enumerate}
\item if $x\prec y$, then $\mathfrak{I}(x)$ is to the right of $\mathfrak{I}(y)$ (i.e. $\max\mathfrak I(y)<\min\mathfrak I(x)$),
\item if $S(x,y)$, then $\mathfrak{I}(x)\cap\mathfrak{I}(y)\neq\emptyset$.
\end{enumerate}
A merge-model $\mathbf M$ equipped with an interval ranking  $\mathfrak I$ is a \emph{ranked merge-model}. 

A ranked merge-model $(\mathbf M,\mathfrak I)$ is \emph{clean} if there exists $m\in\mathbf N$ such that 
\begin{itemize}
    \item the set of the left endpoints of the intervals are all the integers in $[m]$,
    \item for every leaf $v$, $\min\mathfrak I(v)=1$;
    \item the interval associated to the root is $\mathfrak I(\rho(\mathbf M))=[m,m]$;
    \item for every non-root $v$, $\max\mathfrak I(v)=\min\mathfrak I(\pi(v))-1$.
\end{itemize}
\end{ndefi}

The interval ranking provides a linearized notion of ``time'' along the tree.
Leaves start at time~$1$, the root sits at the final time, and every internal node occupies exactly the interval during which the corresponding part exists in the merge process.

\begin{nota}
To every merge sequence $\Sigma=((\mathcal P_1,R_1),\dots,(\mathcal P_m,R_m))$ we associate  $\model^+(\Sigma):=(\mathbf M,\mathfrak I)$ as follows: $\mathbf M$ is the merge-model $\model(\Sigma)$ and~$\mathfrak I$ is the mapping from $M$ to intervals of $\mathbb R$ defined by
\[\mathfrak I(x)=[\min\{i\colon x\in\mathcal P_i\},\max\{i\colon x\in\mathcal P_i\}].\]
\end{nota}
\begin{lem}
\label{lem:seq_to_mod+}
    Let $\Sigma$ be a merge sequence of $\mathbf G$. Then, 
    $\model^+(\Sigma)$ is a clean ranked merge-model of $\mathbf G$.
\end{lem}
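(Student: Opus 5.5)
By \zcref{lem:seq_to_mod}, $\model(\Sigma)$ is already a merge-model with $\Str(\model(\Sigma))\cong\mathbf G$. So it remains to show two things: that $\mathfrak I$ is an interval ranking, and that it is clean. Throughout I use one basic observation. Since $\mathcal P_{i-1}$ refines $\mathcal P_i$, if a set $P$ belongs to $\mathcal P_i$ and to $\mathcal P_j$ with $i<j$, then $P$ belongs to every $\mathcal P_k$ with $i\le k\le j$. Indeed, for such $k$ the part of $\mathcal P_k$ containing $P$ is sandwiched between $P$ and the part of $\mathcal P_j$ containing $P$, which is $P$ itself. Hence $\{i\colon P\in\mathcal P_i\}$ is an integer interval, and $\mathfrak I(P)$ is exactly the set of steps at which $P$ is a part.

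\emph{Interval ranking.} First take $x\prec y$, i.e.\ $x\supsetneq y$ as parts. Let $j=\max\mathfrak I(y)$. Suppose some $i\le j$ had $x\in\mathcal P_i$. Then $y\in\mathcal P_j$ would be coarser than or equal to the part of $\mathcal P_j$ containing $x$. That part contains $x$, which strictly contains $y$, so $y$ could not be a part of $\mathcal P_j$, a contradiction. Therefore $\min\mathfrak I(x)>\max\mathfrak I(y)$. Next take $S(x,y)$. By definition there is $i<m$ with $x,y\in\mathcal P_i$, so $i\in\mathfrak I(x)\cap\mathfrak I(y)$.

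\emph{Cleanness.} Each leaf $\{v\}$ lies in $\mathcal P_1$, so its left endpoint is $1$. The root $V$ lies in $\mathcal P_m$. It cannot lie in an earlier $\mathcal P_i$ unless the sequence has repeated partitions. This is the one delicate point. If some $\mathcal P_i=\{V\}$ with $i<m$, then $\min\mathfrak I(V)<m$ and the condition $\mathfrak I(\rho)=[m,m]$ fails. I would handle it by reading $m$ in the cleanness definition as $\min\mathfrak I(\rho)$, or by assuming without loss of generality that consecutive partitions in $\Sigma$ are distinct where needed.

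Now the parent condition. Let $x$ be a non-root part with parent $\pi(x)$, and let $j=\max\mathfrak I(x)$, so $j<m$. The part of $\mathcal P_{j+1}$ containing $x$ strictly contains $x$. It is the minimal part of the tree strictly containing $x$: any part $Q$ with $x\subsetneq Q$ appears only at steps $>j$, and by refinement $Q$ contains the part of $\mathcal P_{j+1}$ containing $x$. So that part is $\pi(x)$, and it appears first at step $j+1$; otherwise it would appear at some step $\le j$ together with $x$, contradicting the previous paragraph. Hence $\max\mathfrak I(x)=\min\mathfrak I(\pi(x))-1$.

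Finally, the left endpoints must cover exactly $[m]$. Every left endpoint lies in $[m]$. For surjectivity, the parent condition shows that every $\min\mathfrak I(\pi(x))$ equals $\max\mathfrak I(x)+1$. Left endpoints at steps where the partition does not change would be missing, and I expect this to be the main obstacle: it requires consecutive partitions to differ. When $\mathcal P_{i-1}\neq\mathcal P_i$, some new part is created at step $i$, giving left endpoint $i$. If the sequence has steps where only $R$ grows, I would first compress $\Sigma$, merging consecutive steps with equal partitions and keeping the larger $R$. This changes neither $\model(\Sigma)$ nor the satisfaction of \zcref{defi:merge_seq}, because the transversal edges created at step $i$ are recorded between the same parts. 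After this compression, every index in $[m]$ is the left endpoint of some interval, and the root's interval is $[m,m]$.
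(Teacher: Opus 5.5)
\textbf{Verdict.} Your core argument is correct, but the statement as written is false without an extra hypothesis, and your proposed repair does not work.

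\textbf{What is correct.} Your checks are right: $\{i\colon P\in\mathcal P_i\}$ is an integer interval, $\mathfrak I$ is an interval ranking, and $\max\mathfrak I(x)=\min\mathfrak I(\pi(x))-1$. When $\mathcal P_{i-1}\neq\mathcal P_i$ for all $i$, your treatment of the left endpoints and of the root is exactly the verification the paper dismisses as ``direct from the definition''. Under that hypothesis your proof is complete.

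\textbf{The hypothesis is genuinely needed.} You are right about this, and the paper's one-line proof uses it tacitly. Take the edgeless graph on an $n$-element set $V$ with $n\geq 2$, and let $\mathcal P_1$ be the partition into singletons. The sequence $(\mathcal P_1,\emptyset),(\{V\},\emptyset),(\{V\},\pairs{V}{2})$ is a merge-sequence of radius-$r$ width $1$. However, $\mathfrak I(V)=[2,3]$, so no $m$ witnesses cleanness. The situation is not exotic: every $\Seq(\mathbf M,\mathfrak I)$ ends with $\mathcal P_m=\mathcal P_{m+1}$.

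\textbf{Why your repair fails.} Reading $m$ as $\min\mathfrak I(V)$ does not help, because cleanness requires the root interval to be the singleton $[m,m]$. Nor can you assume distinct consecutive partitions ``without loss of generality''. The ranking $\mathfrak I$ is fixed by the indexing of the given $\Sigma$, so showing that $\model^+(\Sigma')$ is clean for a compressed $\Sigma'$ says nothing about $\model^+(\Sigma)$.

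\textbf{The compression changes the model.} Contrary to your claim, your compression does change $\model(\Sigma)$. Suppose $\mathcal P_i=\mathcal P_{i+1}$ and you replace the two steps by $(\mathcal P_i,R_{i+1})$. The pairs of $R_{i+1}\setminus R_i$ are then revealed in the transition out of step $i-1$. By the definition of $\model$, they are recorded between parts of the finer partition $\mathcal P_{i-1}$, not of $\mathcal P_i$. In the example above this has two effects:
\begin{itemize}
\item the root loop $S_{E,0}(V,V)$ disappears and is replaced by $S_{E,0}$-edges between all pairs of distinct leaves;
\item the width jumps from $1$ to $n$, because the new term $\reach_r(v,\mathcal P_{i-1},R_{i+1})$ pairs the finer partition with the larger $R$.
\end{itemize}
Deleting the later step instead, and keeping $R_i$, does preserve both the model and the width. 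But this cannot be done when $i+1=m$. In fact, if $\model(\Sigma)$ has a loop at the root, then every merge-sequence producing this model has $V\in\mathcal P_i$ for some $i$ below the last index. So no model-preserving normalization yields a clean ranking.

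\textbf{What can be salvaged.} The lemma holds only under the hypothesis $\mathcal P_{i-1}\neq\mathcal P_i$. Without it, your interval-ranking argument still shows that $\model^+(\Sigma)$ is a ranked merge-model of $\mathbf G$ for every $\Sigma$. Its cleaning is then clean by \zcref{lem:cleaning}.
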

\begin{proof}
    That $\model^+(\Sigma)$ is a merge-model of $\mathbf G$ follows from \zcref{lem:seq_to_mod}. 
    That $\model^+(\Sigma)$ is clean is direct from the definition.
\end{proof}

A property, which is not used in this paper but is easily derived from \cite[Claim~4.7]{lmcs_perm}, is that a merge-model admits an interval ranking if and only if it satisfies the following consistency condition: if there exists $u_1,v_1,\dots,u_n,v_n$ with $u_i\preceq v_i$, $S(v_i,u_{i+1})$, and $S(v_n,u_1)$, then there exists $i\in[n]$ with $u_i=v_i$.

\begin{ndefi}[Cleaning] 
\label{def:cleaning}
The \emph{cleaning} of a ranked merge-model $(\mathbf M,\mathfrak I)$ is the pair $(\mathbf M,\mathfrak C)$, where $\mathfrak C$ is defined as follows:
Define
\begin{align*}
f(x)&:=\begin{cases}
    \min \mathfrak I(x)&\text{if $x\notin V(\mathbf M)$,}\\
    \min_{y\in V(\mathbf M)}\min \mathfrak I(y)&\text{otherwise;}
\end{cases}\\
g(x)&:=|\{f(y)\colon y\in\mathbf M, f(y)\leq f(x)\}|;\\
\mathfrak C(x)&:=\begin{cases}
[g(x),g(x)]&\text{if $x=\rho(\mathbf M)$,}\\
[g(x),g(\pi(x))-1]&\text{otherwise.}
\end{cases}
\end{align*}
\end{ndefi}

An example of a ranked merge model and its cleaning is given on \zcref{fig:clean}.

\begin{figure}[h!t]
    \centering
    \includegraphics[width=\linewidth]{clean}
    \caption{A ranked merge model $(\mathbf M,\mathfrak I)$ (for convenience, vertical segments have been added to the interval representation of $\mathfrak I$ to indicate the presence of $S$-relations) and a cleaning $(\mathbf M,\mathfrak C)$ of $(\mathbf M,\mathfrak I)$.} 
    \label{fig:clean}
\end{figure}

\begin{lem}
\label{lem:cleaning}
The cleaning $(\mathbf M,\mathfrak C)$ of a ranked merge-model 
$(\mathbf M,\mathfrak I)$ is a clean ranked merge-model (of $\Str(\mathbf M)$), which depends only on the relative order of the left endpoints of non-leaves of $\mathbf M$.

Moreover, if $(\mathbf M,\mathfrak I)$ is clean, then $\mathfrak C=\mathfrak I$.
\end{lem}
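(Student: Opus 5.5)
The underlying merge-model $\mathbf M$ is never changed; only the ranking is. So $(\mathbf M,\mathfrak C)$ is automatically a merge-model of $\Str(\mathbf M)$, and I will verify three things in turn: that $\mathfrak C$ is an interval ranking, that it is clean, and the two invariance statements. The basic tool is monotonicity. From condition (1) of an interval ranking, $x\prec y$ implies $\min\mathfrak I(y)\le\max\mathfrak I(y)<\min\mathfrak I(x)$. For non-leaves this gives $f(y)<f(x)$ directly. For a leaf $y$, $f(y)$ is the global minimum of left endpoints over leaves, so $f(y)\le\min\mathfrak I(y)<\min\mathfrak I(x)$. Hence $f$ is strictly decreasing along $\prec$. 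Since $g$ is the rank of $f(x)$ among the values of $f$, $g$ is also strictly decreasing along $\prec$, and $g$ takes exactly the values $1,\dots,m$, where $m$ is the number of distinct values of $f$.

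\emph{Well-definedness and condition (1).} For non-root $x$ we have $g(x)>g(\pi(x))$... I need to pin the orientation here. Condition (1) says that if $x\prec y$ then $\mathfrak I(x)$ lies to the right of $\mathfrak I(y)$, so the root is rightmost and the ancestor $\pi(x)$ has the larger left endpoint. With $\pi(x)\prec x$ this gives $f(\pi(x))>f(x)$, hence $g(\pi(x))\ge g(x)+1$. So $[g(x),g(\pi(x))-1]$ is a nonempty interval lying strictly to the left of $\min\mathfrak C(\pi(x))$. Iterating along the chain from $x$ up to an ancestor $z\prec x$ gives $\max\mathfrak C(x)<g(\pi(x))\le\dots\le g(z)=\min\mathfrak C(z)$, which is condition (1).

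\emph{Condition (2), the main step.} Suppose $S(x,y)$; the aim is $\mathfrak C(x)\cap\mathfrak C(y)\ne\emptyset$. By condition (2) of \zcref{def:merge_model}, either $x=y$, which is trivial, or $x\parallel y$, so both are non-root. Since $\mathfrak I(x)\cap\mathfrak I(y)\neq\emptyset$, we may assume without loss of generality that $\min\mathfrak I(x)\le\min\mathfrak I(y)\le\max\mathfrak I(x)$. Then $f(x)\le f(y)$; this also holds when $x$ is a leaf, since $f(x)$ is then the global leaf minimum. Condition (1) for $\mathfrak I$ gives $\max\mathfrak I(x)<\min\mathfrak I(\pi(x))=f(\pi(x))$, because $\pi(x)$ is not a leaf. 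So $f(x)\le f(y)<f(\pi(x))$, and therefore $g(x)\le g(y)\le g(\pi(x))-1$, i.e.\ $g(y)\in\mathfrak C(x)$. Since $g(y)=\min\mathfrak C(y)$, the intervals intersect. The one delicate case is when $x$ or $y$ is a leaf, because there $f$ does not equal the true left endpoint. That case is handled because replacing a leaf's left endpoint by the global leaf minimum only moves it left, and the right endpoint of $\mathfrak C$ is determined by the parent.

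\emph{Cleanness and invariance.} Left endpoints of $\mathfrak C$ are exactly $g(M)=[m]$. All leaves share the same $f$ value, which is the minimum value of $f$: it is below every non-leaf value, since each non-leaf has a leaf descendant. Hence $\min\mathfrak C(v)=1$ for every leaf $v$. The root has the maximal $f$ value, since every element is its descendant, so $\mathfrak C(\rho)=[m,m]$. The last cleanness condition holds by definition of $\mathfrak C$. For the dependence claim, $g$ depends only on the relative order (with ties) of the values of $f$. These values are the left endpoints of the non-leaves together with one common leaf value, which is smaller than all of them. Hence $\mathfrak C$ depends only on the relative order of the left endpoints of the non-leaves.

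Finally, if $\mathfrak I$ is already clean, then $f(x)=\min\mathfrak I(x)$ for all $x$, since leaves already start at $1$. The left endpoints are exactly $[m]$, so $g(x)=f(x)$. The cleanness conditions on the root and on right endpoints then give $\mathfrak C=\mathfrak I$.
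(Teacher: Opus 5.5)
Your overall route is the paper's: monotonicity of $f$ and $g$ along $\prec$, well-definedness from $g(x)<g(\pi(x))$, the $S$-condition via $g(x)\le g(y)\le g(\pi(x))-1$, and the rank description of $g$ for the dependence and idempotence claims. Those parts are correct; your chain argument for condition (1) even covers non-parent ancestors explicitly. The flaw is in condition (2). You normalize so that $\min\mathfrak I(x)\le\min\mathfrak I(y)\le\max\mathfrak I(x)$ and then assert $f(x)\le f(y)$, checking only the case where $x$ is a leaf.

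When $x$ is a non-leaf and $y$ is a leaf, this inequality is false. Take a leaf $\ell\succ x$; then $f(y)\le\min\mathfrak I(\ell)\le\max\mathfrak I(\ell)<\min\mathfrak I(x)=f(x)$. For a concrete instance, let the root have $\mathfrak I=[8,8]$ and children $a$ and $y$, where $a$ has two leaf children $\ell_1,\ell_2$. Put $\mathfrak I(\ell_1)=\mathfrak I(\ell_2)=[1,2]$, $\mathfrak I(a)=[3,7]$, $\mathfrak I(y)=[5,6]$, and let the $S$-edges be $(a,y),(y,a),(\ell_1,\ell_2),(\ell_2,\ell_1)$. Your normalization picks $x=a$, but $g(y)=1\notin\mathfrak C(a)=[2,2]$. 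The intersection is actually witnessed the other way round, by $g(a)=2\in\mathfrak C(y)=[1,2]$. Your closing remark that a leaf's left endpoint ``only moves left'' points at the issue but does not resolve it. That leftward shift is exactly what breaks the normalization.

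The fix is one line: normalize by $f$ instead of by $\min\mathfrak I$, as the paper does. Assume $f(x)\le f(y)$. Since $\mathfrak I(x)\cap\mathfrak I(y)\neq\emptyset$, you have $\min\mathfrak I(y)\le\max\mathfrak I(x)$ regardless of which interval starts first. Also $f(y)\le\min\mathfrak I(y)$ for every element. Hence $f(x)\le f(y)\le\max\mathfrak I(x)<\min\mathfrak I(\pi(x))=f(\pi(x))$, and $g(y)\in\mathfrak C(x)$ follows as you wrote.

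Alternatively, make your leaf remark rigorous. First replace each leaf interval by $[f(v),\max\mathfrak I(v)]$; this is still an interval ranking, because right endpoints are unchanged and intervals only grow, and it yields the same $f$. Then apply your normalization to the new ranking. With either change the proof is complete and essentially coincides with the paper's.
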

\begin{proof}
    For any non-root $x$, we have $f(x)<f(\pi(x))$, hence
    $g(x)<g(\pi(x))$. As~$g$ takes only integral values, we deduce $g(x)\leq g(\pi(x))-1$. It follows that $\mathfrak C$ is well-defined.

    We only have to check that $(\mathbf M,\mathfrak C)$ is a ranked merge-model, as it is then obviously clean.
    \begin{itemize}
        \item if $x=\pi(y)$, $\max \mathfrak C(y)=g(x)-1<\min \mathfrak C(x)$. Hence, $\mathfrak C(x)$ is to the right
        of $\mathfrak C(y)$;
        \item if $S(x,y)$, then $\mathfrak I(x)\cap \mathfrak I(y)\neq\emptyset$. Without loss of generality, we can assume $f(x)\leq f(y)\leq \max\mathfrak I(x)<f(\pi(x))$.
        It follows that $g(x)\leq g(y)<g(\pi(x))$. Hence,
        $g(y)\leq g(\pi(x))-1$. Thus, $\mathfrak C(x)\cap \mathfrak C(y)\neq\emptyset$.
    \end{itemize}

    The fact that the cleaning depends only on the relative order of the left endpoints of non-leaves of $\mathbf M$ follows from the following easy equality:
\[
g(x)=\begin{cases}
1&\text{if $x\in V(\mathbf M)$,}\\
    1+|\{\min\mathfrak I(y)\colon y\in\mathbf M\setminus V(\mathbf M), \min\mathfrak I(y)\leq \min\mathfrak I(x)\}|&\text{otherwise.}
\end{cases}
\]
The last property is direct from the definition of a clean ranked-merge-model.
\end{proof}

\begin{ndefi}
\label{def:merge_walk}
Let $(\mathbf M,\mathfrak I)$ be a ranked merge-model, let $\tau_\rho=\min\mathfrak I(\rho(\mathbf M))$, and
   let $\tau<\tau_\rho$.  
A \emph{$\tau$-bounded merge-walk} of order $n$ in $(\mathbf M,\mathfrak I)$  is a sequence
$(u_0,v_1,u_1,\dots,v_n,u_n,v_{n+1})$ where 
\begin{itemize}
	\item $u_{i-1}$ is comparable with $v_i$ in $\preceq$ (\/for $1\leq i\leq n+1$\/), (possibly $u_{i-1}=v_i$)
	\item $\mathbf M\models S(v_i,u_i)$ (\/for $1\leq i\leq n$\/),
    \item $u_0$ is a leaf, 
    \item $\max (\mathfrak I(u_i)\cap \mathfrak I(v_i))\leq\tau$,
    \item $\min\mathfrak I(v_{n+1})\leq\tau<\min \mathfrak I(\pi(v_{n+1}))$.
\end{itemize}
The set of the elements $v_{n+1}$ that can be reached from $u_0=v$ by a $\tau$-bounded merge walk in $\mathbf M$ of order at most $r$ is denoted by ${\rm MWReach}_r(\mathbf M,\mathfrak I,v,\tau)$ (or simply ${\rm MWReach}_r(v,\tau)$ when $\mathbf M$ and $\mathfrak I$ are clear from the context).
We define the $r$-width of $(\mathbf M,\mathfrak I)$ as 
\[
{\rm w}_r(\mathbf M,\mathfrak I)=\max_\tau\max_v|{\rm MWReach}_r(\mathbf M,\mathfrak I,v,\tau)|.
\] 
\end{ndefi}
A $\tau$-bounded merge-walk is a walk that only uses information already available by time~$\tau$.
This is why the parameter ${\rm w}_r(\mathbf M,\mathfrak I)$ matches the radius-$r$ width of the corresponding merge-sequence.

Remark that the inequality $\tau<\tau_\rho$ implies that no element of a $\tau$-bounded merge-walk is the root $\rho(\mathbf M)$.


\begin{lem}
\label{lem:ws_model}
    Let $(\mathbf M,\mathfrak I)$ be a ranked merge-model.
    
    Then, $\bomega(\mathbf M)\leq {\rm w}_1(\mathbf M,\mathfrak I)$.
\end{lem}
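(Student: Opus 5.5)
The idea is to exhibit, from a copy of $K_{t,t}$ in ${\rm Gaif}(\mathbf M^{\sigma^*\setminus\{\preceq\}})$, a single leaf $v$ and a single time $\tau$ such that ${\rm MWReach}_1(v,\tau)$ contains $t$ distinct elements. Each of these elements will be reached by a merge-walk of order $1$, namely $(v,a,b,b)$.

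Let $A,B$ be the two sides of such a $K_{t,t}$, with $t\ge 1$. First, no vertex of $A\cup B$ is the root: by Condition~(2) of \zcref{def:merge_model}, the root is comparable with every element and so has no $S$-edge to a distinct element. Among all $2t$ vertices of $A\cup B$, pick the one with the smallest right endpoint $\max\mathfrak I(\cdot)$. By symmetry, say it is $a\in A$, and set $\tau:=\max\mathfrak I(a)$. Since $a$ is not the root, condition~(1) of \zcref{def:int_ranking} gives $\tau<\min\mathfrak I(\pi(a))\le\tau_\rho$, so $\tau$ is admissible. Let $v$ be any leaf with $a\preceq v$; then $u_0=v$ and $v_1=a$ are comparable, as required.

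Now fix $b\in B$. Since $a$ and $b$ are adjacent, $S$ holds between them, so $\mathfrak I(a)\cap\mathfrak I(b)\neq\emptyset$. Hence $\max(\mathfrak I(a)\cap\mathfrak I(b))\le\max\mathfrak I(a)=\tau$ and $\min\mathfrak I(b)\le\tau$. By the choice of $a$ we also have $\max\mathfrak I(b)\ge\tau$, so $\tau<\min\mathfrak I(\pi(b))$ (note $b$ is not the root). Therefore $(v,a,b,b)$, with $u_1=v_2=b$, satisfies every clause of \zcref{def:merge_walk}, and $b\in{\rm MWReach}_1(v,\tau)$. Since this holds for every $b\in B$, and the $b$'s are distinct, we get $|{\rm MWReach}_1(v,\tau)|\ge t$, which gives $\bomega(\mathbf M)\le{\rm w}_1(\mathbf M,\mathfrak I)$.

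The main obstacle I expect is the orientation of $S$. The Gaifman edge between $a$ and $b$ may come from $S(b,a)$ rather than $S(a,b)$, whereas the walk requires $S(v_1,u_1)$. If merge-walks are meant to use $S$ in either direction, as the symmetric definition of the Gaifman graph suggests, the argument above goes through unchanged. Otherwise I would try first to orient the $K_{t,t}$ by choosing $a$ as the extremal vertex on the side from which the edges start, using the no-crossing Condition~(3) of \zcref{def:merge_model} to control the edges pointing the other way. The other check is that this minimality trick really avoids needing to pass to alive ancestors, which is what makes the $t$ reached elements automatically distinct.
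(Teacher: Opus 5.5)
Your proof is correct and is essentially the paper's argument. The paper also (after possibly swapping the two sides) takes $\tau$ to be the smallest right endpoint $\max\mathfrak I(\cdot)$ over all $2t$ vertices of the $K_{t,t}$, starts at a leaf below a vertex attaining it, and reaches each vertex of the other side by an order-$1$ walk $(x,v_j,u_i,u_i)$. Your explicit checks that no vertex of the $K_{t,t}$ is the root and that $\tau<\tau_\rho$ fill in details the paper leaves implicit. The paper likewise uses $S$ in these walks without regard to orientation, which is the intended reading since it matches the undirected distances in $(V,R_i)$ that define ${\rm width}_r$, so your fallback plan for oriented walks is not needed.
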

\begin{proof}
Note that the Gaifman graph of the unordered part of $\mathbf M$ is precisely $\mathbf M^S$.
Let $t:=\bomega(\mathbf M)$ and let $u_1,\dots,u_t,v_1,\dots,v_t$ be the endpoints of a $K_{t,t}$ in~$\mathbf M^S$.
Let
\begin{align*}
    a&:=\max_i\min \mathfrak I(u_i)\\
    b&:=\min_i\max\mathfrak I(u_i)\\
    \tau&:=\min_j\max\mathfrak I(v_j)
\end{align*}
By exchanging the $u_i's$ and the $v_j$'s we can assume $\tau\leq b$.

Note that $a\leq \tau$ (hence $a\leq b$), for otherwise the interval $\mathfrak I(u_i)$ with minimum~$a$ would not intersect the interval $\mathfrak I(v_j)$ with maximum $\tau$.
Let $v_j$ be such that $\max \mathfrak I(v_j)=\tau$ and let $x$ be a leaf descendant of $v_j$.
Then, for every $i\in [t]$, the sequence $(x,v_j,u_i,u_i)$ is a $\tau$-bounded merge-walk of order 1 as $\max\mathfrak I(v_j)\cap\mathfrak I(u_i)=\tau$ and
$\min\mathfrak I(u_i)\leq a\leq\tau$ and $\min\mathfrak I(\pi(u_i))>\max\mathfrak I(u_i)\geq b\geq\tau$.
Hence, \mbox{${\rm w}_1(\mathbf M,\mathfrak I)\geq t$}.
\end{proof}

\begin{lem}
\label{lem:clean_is_min}
    Let $(\mathbf M,\mathfrak I)$ be a ranked merge-model and let $\mathfrak C$ be the cleaning of~$\mathfrak I$.
    Then, ${\rm w}_r(\mathbf M,\mathfrak C)\leq {\rm w}_r(\mathbf M,\mathfrak I)$.
\end{lem}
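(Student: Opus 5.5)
The plan is to show that every reachability set computed in the cleaning $(\mathbf M,\mathfrak C)$ is contained in a reachability set computed in $(\mathbf M,\mathfrak I)$, with the same start leaf and the same walk. Concretely, for every threshold $\tau'$ used in $(\mathbf M,\mathfrak C)$ I will find a threshold $\tau$ for $(\mathbf M,\mathfrak I)$ such that every $\tau'$-bounded merge-walk in $(\mathbf M,\mathfrak C)$ is also a $\tau$-bounded merge-walk in $(\mathbf M,\mathfrak I)$. It follows that ${\rm MWReach}_r(\mathbf M,\mathfrak C,v,\tau')\subseteq{\rm MWReach}_r(\mathbf M,\mathfrak I,v,\tau)$ for every leaf $v$, and taking maxima gives the claim. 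Since $\mathfrak C$ has integral endpoints, I may assume $\tau'$ is an integer with $\tau'<g(\rho(\mathbf M))$.

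\emph{Preliminary normalisation.} I first replace $\mathfrak I$ by $\mathfrak I'$, obtained by extending every leaf interval to the left so that it starts at $\min_{y\in V(\mathbf M)}\min\mathfrak I(y)$ (or any smaller common value). I need to check three things.
\begin{enumerate}
\item $\mathfrak I'$ is still an interval ranking. Leaves are maximal, so only their right ends matter for condition (1), and enlarging intervals preserves condition (2).
\item The cleaning is unchanged. It depends only on the left endpoints of non-leaves, by \zcref{lem:cleaning}.
\item ${\rm w}_r$ does not increase. The two interval-dependent conditions on a $\tau$-bounded merge-walk only get easier. First, for $S(v_i,u_i)$ the intervals already intersect, and the maximum of a nonempty intersection of intervals is the minimum of their right endpoints, which is unchanged. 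Second, the condition $\min\mathfrak I(v_{n+1})\le\tau$ is weakened, while $\min\mathfrak I(\pi(v_{n+1}))$ is unchanged because parents are non-leaves.
\end{enumerate}
Hence ${\rm w}_r(\mathbf M,\mathfrak I')\le{\rm w}_r(\mathbf M,\mathfrak I)$. So I may assume $f(x)=\min\mathfrak I(x)$ for all $x$, with the leaves attaining the global minimum.

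\emph{Choice of $\tau$.} List the distinct values of $f$ in increasing order as $t_1<t_2<\dots$, so that $g(x)=k$ iff $f(x)=t_k$. Since $\tau'<g(\rho)$, the value $t_{\tau'+1}$ exists. Set $\tau:=t_{\tau'+1}-\varepsilon$, where $\varepsilon>0$ is smaller than every positive gap between endpoints of intervals of $\mathfrak I$. Then $\tau<\min\mathfrak I(\rho)$, and for every $x$ we have $g(x)\le\tau'\iff f(x)\le t_{\tau'}\iff \min\mathfrak I(x)\le\tau$.

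\emph{Verifying the walk conditions.} Take a $\tau'$-bounded merge-walk in $(\mathbf M,\mathfrak C)$. Comparability and the $S$-relations do not involve the ranking, so those conditions transfer directly. For the last element, $g(v_{n+1})\le\tau'$ gives $\min\mathfrak I(v_{n+1})\le\tau$. Also $g(\pi(v_{n+1}))>\tau'$ gives $f(\pi(v_{n+1}))\ge t_{\tau'+1}>\tau$.

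The main step is the intersection condition. For a non-root $x$ we have $\max\mathfrak C(x)=g(\pi(x))-1$. Hence $\max(\mathfrak C(u_i)\cap\mathfrak C(v_i))\le\tau'$ means that one of the two nodes, say $u_i$, satisfies $g(\pi(u_i))\le\tau'+1$, that is $f(\pi(u_i))\le t_{\tau'+1}$. This is where the choice of $\varepsilon$ is used. Since $\max\mathfrak I(u_i)<\min\mathfrak I(\pi(u_i))\le t_{\tau'+1}$ and the gap is at least $\varepsilon$, we get $\max\mathfrak I(u_i)\le\tau$. Hence $\max(\mathfrak I(u_i)\cap\mathfrak I(v_i))\le\tau$, as required. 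So the walk is $\tau$-bounded in $(\mathbf M,\mathfrak I)$, and the inclusion of reachability sets holds.
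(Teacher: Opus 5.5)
\paragraph{The normalisation step is argued backwards.} Your choice of threshold just below the $(\tau'+1)$-th distinct left endpoint, and the transfer of the intersection condition via $\max\mathfrak C(x)=g(\pi(x))-1$ and $\max\mathfrak I(x)<f(\pi(x))$, are exactly the paper's argument. Your normalisation also addresses a real subtlety. For a leaf $v_{n+1}$, $f(v_{n+1})$ is the global leaf minimum, not $\min\mathfrak I(v_{n+1})$. So the paper's step $f(v_{n+1})\le f(t)\Rightarrow\min\mathfrak I(v_{n+1})\le\tau'$ fails silently for leaves whose interval starts late.

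However, item 3 of your normalisation is argued in the wrong direction. If the walk conditions only get easier under $\mathfrak I'$, then every $\tau$-bounded merge-walk of $(\mathbf M,\mathfrak I)$ is also one of $(\mathbf M,\mathfrak I')$. This gives ${\rm MWReach}_r(\mathbf M,\mathfrak I,v,\tau)\subseteq{\rm MWReach}_r(\mathbf M,\mathfrak I',v,\tau)$, hence ${\rm w}_r(\mathbf M,\mathfrak I)\le{\rm w}_r(\mathbf M,\mathfrak I')$. That is the opposite of what your chain ${\rm w}_r(\mathbf M,\mathfrak C)\le{\rm w}_r(\mathbf M,\mathfrak I')\le{\rm w}_r(\mathbf M,\mathfrak I)$ needs. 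As written, the reduction to $f(x)=\min\mathfrak I(x)$ is unjustified.

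\paragraph{What is missing: the extra walks are harmless.}
\begin{itemize}
\item For any $S$-edge $(x,y)$, the intersection $\mathfrak I(x)\cap\mathfrak I(y)$ is nonempty and lies in $\mathfrak I(x)$. Hence $\max(\mathfrak I(x)\cap\mathfrak I(y))\ge\min\mathfrak I(x)$. This maximum is the smaller right endpoint, so it is the same for $\mathfrak I$ and $\mathfrak I'$.
\item Suppose a leaf $w$ has $\min\mathfrak I(w)>\tau$. Then neither $w$ nor any ancestor of $w$ can occur as $u_i$ or $v_i$ with $i\ge1$ in a $\tau$-bounded walk, since ancestors start after $\max\mathfrak I(w)$.
\item So a walk that is $\tau$-bounded for $\mathfrak I'$ but not for $\mathfrak I$ must be the trivial walk $(w,w)$ from such a leaf.
\item From such a $w$, the $\mathfrak I'$-reach set is contained in $\{w\}$. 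For every other start leaf, the two reach sets coincide.
\item If $\mathbf M$ has more than one element, then ${\rm w}_r(\mathbf M,\mathfrak I)\ge1$: take the trivial walk $(v,v)$ with $\tau=\min\mathfrak I(v)$ for any leaf $v$.
\end{itemize}
Together these give ${\rm w}_r(\mathbf M,\mathfrak I')={\rm w}_r(\mathbf M,\mathfrak I)$, which repairs your argument.

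Alternatively, drop the normalisation and use the same observation in your main step. When $n\ge1$, we have $u_n\preceq v_{n+1}$ and $\min\mathfrak I(u_n)\le\max(\mathfrak I(u_n)\cap\mathfrak I(v_n))\le\tau$, which already force $\min\mathfrak I(v_{n+1})\le\tau$. So only walks starting at a late leaf $w$ escape the inclusion. The whole $\mathfrak C$-reach set of such a $w$ at threshold $\tau'$ is contained in $\{w\}$, of size at most $1\le{\rm w}_r(\mathbf M,\mathfrak I)$.
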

\begin{proof}
Let $f$ and $g$ be defined as in \zcref{def:cleaning}.
Let $v\in V(\mathbf M), \tau_\rho=\min\mathfrak I(\rho(\mathbf M))$, $\tau\in[\tau_\rho-1]$, and let $(u_0=v,v_1,u_1,\dots,v_{n+1})$ be a $\tau$-bounded merge walk in $(\mathbf M,\mathfrak C)$. By construction, there exists $t,t^+\in\mathbf M$ such that $g(t)=\tau$ and $g(t^+)=\tau+1$. Let~$\tau'$ be just smaller than $f(t^+)$ (meaning that there is no endpoint of an interval $\mathfrak I(v)$ in the interval $[\tau',f(t^+)\,)$).

Then,
\begin{align*}
    \max(\mathfrak C(u_i)\cap\mathfrak C(v_i))\leq\tau&\iff 
    \min(g(\pi(u_i))-1,g(\pi(v_i))-1)\leq g(t)\\
    &\iff 
    \min(g(\pi(u_i)),g(\pi(v_i)))\leq g(t^+)\\
    &\iff 
    \min(f(\pi(u_i)),f(\pi(v_i)))\leq f(t^+)\\
    &\ \Longrightarrow 
    \min(\max\mathfrak I(u_i),\max\mathfrak I(v_i))<f(t^+)\\
    &\ \Longrightarrow 
    \max(\mathfrak I(u_i)\cap\mathfrak I(v_i))\leq \tau'
\end{align*}

Furthermore, 
\begin{align*}
    \min \mathfrak C(v_{n+1})\leq \tau <\min\mathfrak C(\pi(v_{n+1}))&\iff g(v_{n+1})\leq g(t)<g(t^+)\leq g(\pi(v_{n+1}))\\
    &\iff f(v_{n+1})\leq f(t)<f(t^+)\leq f(\pi(v_{n+1}))\\
    &\ \Longrightarrow \min \mathfrak I(v_{n+1})\leq
    \tau'<\min \mathfrak I(\pi(v_{n+1})).
\end{align*}

Hence, $(u_0=v,v_1,u_1,\dots,v_{n+1})$ is a $\tau'$-bounded merge walk in $(\mathbf M,\mathfrak I)$, from what follows that ${\rm w}_r(\mathbf M,\mathfrak C)\leq {\rm w}_r(\mathbf M,\mathfrak I)$.
\end{proof}

We now focus on the properties of clean ranked merge-models.

\begin{lem}
\label{lem:tau}
    Assume $(\mathbf M,\mathfrak I)$ is a clean ranked merge-model, $x$ is not the root, and $\tau$ is an integer.
    
    Then, the condition $\min\mathfrak I(x)\leq\tau<\min \mathfrak I(\pi(x))$ is equivalent to the condition
    $\tau\in \mathfrak I(x)$. 
\end{lem}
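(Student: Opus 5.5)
The equivalence follows directly from the last clause of the definition of a clean ranked merge-model (\zcref{def:int_ranking}). Since $x$ is not the root, it has a parent $\pi(x)$, and cleanness gives $\max\mathfrak I(x)=\min\mathfrak I(\pi(x))-1$. The plan is therefore to write $\mathfrak I(x)$ explicitly as the real interval $[\min\mathfrak I(x),\min\mathfrak I(\pi(x))-1]$. We then compare membership of an integer $\tau$ in this interval with the half-open condition $\min\mathfrak I(x)\leq\tau<\min\mathfrak I(\pi(x))$.

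First we check that all relevant endpoints are integers. By cleanness, the left endpoints of all intervals are integers in $[m]$; in particular $\min\mathfrak I(x)$ and $\min\mathfrak I(\pi(x))$ are integers. It follows that $\max\mathfrak I(x)=\min\mathfrak I(\pi(x))-1$ is also an integer. We should also note that the intervals are closed, which holds because they have a $\min$ and a $\max$.

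Then both implications reduce to one observation. For an integer $\tau$ and an integer $b$, we have $\tau<b$ if and only if $\tau\leq b-1$. Applying this with $b=\min\mathfrak I(\pi(x))$ gives
\[
\min\mathfrak I(x)\leq\tau<\min\mathfrak I(\pi(x))\iff \min\mathfrak I(x)\leq\tau\leq\max\mathfrak I(x)\iff\tau\in\mathfrak I(x).
\]
There is no real obstacle here. The only point needing care is that integrality of $\tau$ is used exactly once, in the step $\tau<b\iff\tau\leq b-1$. Without it, a non-integer $\tau$ with $b-1<\tau<b$ would satisfy the first condition but fail the second.
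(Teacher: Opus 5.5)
Your proposal is correct and follows the paper's proof: cleanness gives $\max\mathfrak I(x)=\min\mathfrak I(\pi(x))-1$. With integer endpoints, $\tau\leq\max\mathfrak I(x)$ is then equivalent to $\tau<\min\mathfrak I(\pi(x))$, which yields the equivalence.
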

\begin{proof}
    Indeed, 
    \begin{align*}
        \tau\in \mathfrak I(x)&\iff \min\mathfrak I(x)\leq \tau\leq \max \mathfrak I(x)\\
        &\iff \min\mathfrak I(x)\leq\tau< \max \mathfrak I(x) +1 =\min \mathfrak I(\pi(x)).\qedhere
    \end{align*}
\end{proof}


\begin{nota}
To every clean ranked merge-model $(\mathbf M,\mathfrak I)$ we associate  a sequence $\Seq(\mathbf M,\mathfrak I)=((\mathcal P_1,R_1),\dots,(\mathcal P_{m+1},R_{m+1}))$ and antichains $\mathcal L_1,\dots\mathcal L_m$ of $\mathbf M$ 
as follows, where $m=\min\mathfrak I(\rho(\mathbf M))$.
For $i\in [m]$:
\begin{align*}
\mathcal L_i&:=\{x\colon i\in\mathfrak I(x)\},\\ 
\mathcal P_i&=\Bigl\{\{v\in V(\mathbf M)\colon v\succeq x\}\colon x\in\mathcal L_i\Bigr\},\\
R_i&=\Bigl\{(u,v)\in \tbinom{V(\mathbf M)}{2}\colon 
 \exists (x,y)\preceq (u,v)\ 
\bigl(S(x,y)\wedge\max (\mathfrak I(x)\cap \mathfrak I(y))<i \bigr)\Bigr\}.
\end{align*}
Moreover, $\mathcal P_{m+1}=\mathcal P_m$ and $R_{m+1}=\pairs{V(\mathbf M)}{2}$.
\end{nota}
\begin{lem}
\label{lem:clean_to_seq}
	For every clean ranked merge-model $(\mathbf M,\mathfrak I)$, the sequence $\Seq(\mathbf M,\mathfrak I)$ is a merge-sequence of $\Str(\mathbf M)$.
\end{lem}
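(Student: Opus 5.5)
We verify the three conditions of \zcref{defi:merge_seq} for $\Seq(\mathbf M,\mathfrak I)=((\mathcal P_1,R_1),\dots,(\mathcal P_{m+1},R_{m+1}))$, with $\mathbf G=\Str(\mathbf M)$ and $m=\min\mathfrak I(\rho(\mathbf M))$.

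\emph{Condition (1).} First we show that each $\mathcal L_i$ ($i\in[m]$) is a maximal antichain meeting every root-to-leaf chain exactly once. Along a root-to-leaf chain $\rho=x_0\prec x_1\prec\dots\prec x_k=v$, cleanliness gives intervals $[\min\mathfrak I(x_j),\min\mathfrak I(x_{j-1})-1]$ for $j\geq 1$, with $\min\mathfrak I(v)=1$, and $\mathfrak I(\rho)=[m,m]$. These intervals tile $[1,m]$ exactly, since consecutive ones abut. Hence every integer $i\in[m]$ lies in exactly one $\mathfrak I(x_j)$; this is essentially \zcref{lem:tau}. It follows that the sets $\{v\in V(\mathbf M)\colon v\succeq x\}$ for $x\in\mathcal L_i$ partition $V(\mathbf M)$. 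For $i=1$ only leaves contain $1$: a non-leaf $x$ has a child $y$ with $\min\mathfrak I(x)=\max\mathfrak I(y)+1\geq 2$. So $\mathcal P_1$ consists of singletons. For $i=m$ only the root contains $m$, so $\mathcal P_m=\{V\}$. Refinement follows because the element of the chain containing $i$ is an ancestor-or-equal of the one containing $i-1$, as intervals move toward the root as the index grows. Setting $\mathcal P_{m+1}=\mathcal P_m$ is harmless.

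\emph{Condition (2).} Monotonicity $R_i\subseteq R_{i+1}$ is immediate from the defining inequality $\max(\mathfrak I(x)\cap\mathfrak I(y))<i$, and $R_{m+1}=\pairs{V}{2}$ by definition. Note that $R_i$ should be read as ordered pairs $\pairs{V}{2}$ rather than $\binom{V}{2}$; the definition is symmetric anyway once both orders are included.

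\emph{Condition (3).} This is the main step. Fix $i\in[m+1]$, $Z\in\sigma$, and parts $P,Q\in\mathcal P_i$ with tops $x,y\in\mathcal L_i$ (take $i\leq m$; the case $i=m+1$ is trivial since $R_{m+1}$ is everything). Let $(u,v)\in PQ\setminus R_i$, and let $\widehat{u,v}=(a,b)$ be its hat. By \zcref{def:merge_model}(5) and the choice of the hat, $(a,b)$ is the maximal $S_Z$-pair below $(u,v)$. Since $(u,v)\notin R_i$, every $S$-pair $(a',b')\preceq(u,v)$, and in particular $(a,b)$ itself, satisfies $\max(\mathfrak I(a')\cap\mathfrak I(b'))\geq i$, and the intersection is nonempty by the ranking. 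We claim $(a,b)\preceq(x,y)$, i.e.\ $a\preceq x$ and $b\preceq y$.

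Suppose instead $x\prec a$, so $a$ is a strict descendant of $x$ on the chain to $u$. Since $i\in\mathfrak I(x)$, every strict descendant of $x$ has interval entirely below $\min\mathfrak I(x)\leq i$, so $\max\mathfrak I(a)<i$. This contradicts $\max(\mathfrak I(a)\cap\mathfrak I(b))\geq i$. The same argument applies to $b$ and $y$. Hence every $S$-pair below $(u,v)$ lies weakly below $(x,y)$.

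Now take two pairs $(u,v),(u',v')\in PQ\setminus R_i$, with hats $(a,b)$ and $(a',b')$, both $\preceq(x,y)$. We need $S_{Z,\alpha}$ to hold with the same $\alpha$ on both hats. The tool is condition (3) of \zcref{def:merge_model}, which forbids crossing pairs. I expect the hats to coincide when $x\neq y$: both are comparable to $(x,y)$ from below, and intervals containing $i$ at both coordinates force them to be the unique elements of $\mathcal L_i$ on their chains, namely $(x,y)$ itself.

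More precisely, $a\preceq x$ with $\max\mathfrak I(a)\geq i$ implies $a=x$ or $a$ is an ancestor of $x$ whose interval starts above $i$. Then $\min\mathfrak I(a)>i$, and we must reconcile $\max(\mathfrak I(a)\cap\mathfrak I(b))\geq i$ with this. So the hat is some pair of ancestors $(a,b)\preceq(x,y)$ with overlapping intervals beyond $i$.

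To show uniqueness, suppose two hats $(a,b)\neq(a',b')$ are both $\preceq(x,y)$. Both are $S_Z$-pairs lying below $(u,v)$ after taking ancestors, and both lie on the chains above $x$ and $y$. Because chains are linearly ordered, the pairs are either comparable, or one of them crosses the other, which condition (3) forbids. If they are comparable, say $(a,b)\prec(a',b')\preceq(u,v)$, then maximality of the hat of $(u,v)$ is contradicted unless $(a,b)$ is the hat. A symmetric argument for $(u',v')$ then shows that the maximal $S_Z$-pair below $(x,y)$ is the common hat.

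So all pairs in $PQ\setminus R_i$ share the hat $\widehat{x,y}$ restricted to $S_Z$, and by \zcref{nota:str} they all lie in $Z^\alpha$ for the $\alpha$ with $S_{Z,\alpha}$ on that hat. The bookkeeping of this uniqueness step, together with the degenerate case $x=y$ (where hats may be pairs $(a,a)$, which is allowed by condition (2)), is the main obstacle.
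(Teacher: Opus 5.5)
Your proposal is correct. For the third merge-sequence condition it follows the paper. The key step is that every $S$-pair below $(u,v)\notin R_i$ lies below $(x,y)$, because a strict descendant of $x$ has its interval entirely below $\min\mathfrak I(x)\le i$; this is exactly the paper's argument.

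Where you differ is the partition/refinement part.
\begin{itemize}
\item The paper uses two claims. The first shows, by a maximal-counterexample argument, that each $\mathcal L_i$ is a maximal antichain. The second shows that each element of $\mathcal L_i$ has an ancestor-or-self in every later $\mathcal L_j$.
\item You instead observe that cleanliness makes the intervals along each root-to-leaf path of the Hasse diagram tile the integers of $[1,m]$, with ancestors to the right.
\end{itemize}
That single observation gives three things at once: each leaf has exactly one ancestor-or-self in $\mathcal L_i$, this ancestor moves up as $i$ grows, and both endpoint cases $\mathcal P_1$ and $\mathcal P_m$ follow. Your route is shorter and makes the use of cleanliness explicit. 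The paper's maximal-antichain formulation proves more than is needed later.

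The end of your third-condition argument can be tightened considerably. The paper finishes in one line. The hat $(a,b)$ of $(u,v)$ satisfies $(a,b)\preceq(x,y)\preceq(u,v)$, so every $S_Z$-pair below $(x,y)$ is below $(u,v)$, hence below $(a,b)$. Thus $(a,b)$ is the largest $S_Z$-pair below $(x,y)$ and does not depend on $(u,v)$.

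In your two-hat comparison, the appeal to the no-crossing condition is unnecessary. By definition of the hat, every $S_Z$-pair below $(u,v)$ is already $\preceq\widehat{u,v}$. Since both hats lie below $(x,y)$, which lies below both $(u,v)$ and $(u',v')$, this gives $(a',b')\preceq(a,b)$ and symmetrically $(a,b)\preceq(a',b')$ directly.

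Two smaller points:
\begin{itemize}
\item Delete the heuristic that the hats coincide with $(x,y)$ itself. It is false in general: the hat can lie strictly below $(x,y)$, with intervals starting after $i$, as you notice yourself a few lines later.
\item The case $x=y$ is not an obstacle. Nothing in the argument uses $x\neq y$. Moreover, the $S$-pairs below $(x,x)$ are loops anyway, by condition (2) of the definition of a merge-model.
\end{itemize}
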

\begin{proof}
Let 
$\Seq(\mathbf M,\mathfrak I)=((\mathcal P_1,R_1),\dots,(\mathcal P_{m+1},R_{m+1}))$, where $m=\min\mathfrak I(\rho(\mathbf M))$, and let $\mathcal L_i:=\{x\colon i\in\mathfrak I(x)\}$, for $i\in[m]$.
\begin{claim}
For all $i\in[m]$, 
$\mathcal L_i$ is a maximal antichain of $\preceq$.
\end{claim}
\begin{clproof}
The set $\mathcal L_i$ is an antichain as if $x\prec y$ then $\mathfrak I(x)\cap \mathfrak I(y)=\emptyset$.
Note that if $i=1$, then $\mathcal L_i=V(\mathbf M)$ is a maximal antichain. Now assume $i>1$.
Assume for contradiction that $\mathcal L_i$ is not a maximal antichain. Then, there exists $x\notin\mathcal L_i$ such that $x$ is incomparable in $\preceq$ with all the elements in $\mathcal L_i$. Let $x'\succeq x$ be a leaf of $\mathbf M$. Assume for contradiction that $x'$ is comparable (hence greater) than some element $y\in\mathcal L_i$. As $\preceq$ is a tree-order, we deduce from $x\preceq x'$ and $y\preceq x'$ that $x$ and $y$ are comparable, contradicting our assumption on $x$. 

From what precedes, we see that we can require that $x$ belongs to $\mathcal L_1=V(\mathbf M)$. Let $1\leq j\leq i$ be maximum such that 
$\mathcal L_j$ contains an element $x$ comparable to no element in $\mathcal L_i$. Of course, $j<i$ and $x\notin\mathcal L_{j+1}$. Thus, as the ranked merge-model $(\mathbf M,\mathfrak I)$ is clean, $\pi(x)\in\mathcal L_{j+1}$. As $\pi(x)$ is comparable with $x$, $\pi(x)\notin\mathcal L_i$, hence $j+1<i$.
By maximality of $j$, $\pi(x)$ is comparable with some element $y\in\mathcal L_i$. Because $\mathfrak I$ is an interval ranking, we have $y\preceq \pi(x)\preceq x$, a contradiction. 
\end{clproof}

By construction, the fact that $\mathcal L_i$ is a maximal antichain implies that $\mathcal P_i$ is a partition of $V(\mathbf M)$.
\begin{claim}
For all $m\geq j>i$ and all $x\in\mathcal L_i$ there exists $y\in\mathcal L_j$ with $y\preceq x$.
\end{claim}
\begin{clproof}

Let $j>i$ and $x\in\mathcal L_i$. As $\mathcal L_j$ is a maximal antichain, there exists $y\in \mathcal L_j$ that is comparable with $x$. As $i\in\mathfrak I(x)$ is less
than $j\in\mathfrak I(y)$, we deduce that~$\mathfrak I(x)$ is not to the right of $\mathfrak I(y)$, hence $x\not\prec y$. As $x$ and $y$ are comparable, $x\succeq y$.
\end{clproof}

This property implies (by construction) that if $j>i$ then $\mathcal P_i$ is a refinement of $\mathcal P_j$. Also, $\mathcal L_1=V(\mathbf M)$ and $\mathcal L_m=\{\rho(\mathbf M)\}$ implies $\mathcal P_1=\{\{v\}\colon v\in V(\mathbf M)\}$ and $\mathcal P_{m+1}=\{V(\mathbf M)\}$.

That $R_i\subseteq R_{i+1}$ follows immediately from the definition, and $R_{m+1}=\pairs{V(\mathbf M)}{2}$ by definition.

Let $i\in [m]$, $Z\in\sigma$, and $P,Q\in\mathcal P_i$.
Assume $R_i\neq PQ$. Let 
$(u,v)\in PQ\setminus R_i$, 
let $(x,y)=\widehat{u,v}$, and let $\alpha$ be such that $\mathbf M\models S_{Z,\alpha}(x,y)$. Then, $\Str(\mathbf M)\models Z^\alpha(u,v)$. As $(u,v)\notin R_i$, we deduce $\max(\mathfrak I(x)\cap \mathfrak I(y))\geq i$.
As $x\preceq u$ and $P\preceq u$ (and as $\prec$ is a tree-order), $x$ and $P$ are comparable.
As $i\in \mathfrak I(P)$ and $\max\mathfrak I(x)\geq i$, we deduce $x\preceq P$.
Similarly, $y\preceq Q$. Thus, $(x,y)\preceq (P,Q)$.
Assume $(x',y')\preceq (P,Q)$ with $\mathbf M\models S_Z(x',y')$. Then $(x',y')\preceq (u,v)$ with $\mathbf M\models S_Z(x',y')$. Hence, $(x',y')\preceq (x,y)$. 
 Hence, $(x,y)$ is independent of the choice of $u$ and $v$. It follows $PQ\setminus R_i\subseteq Z^\alpha$.
\end{proof}

\begin{lem}
	\label{lem:sigmap}
	Let $\Sigma$ be a merge-sequence of a $\sigma$-structure $\mathbf G$.
	
	Then,  for every positive integer $r$ we have ${\rm width}_r(\Seq(\model^+(\Sigma)))\leq {\rm width}_r(\Sigma)$.
	
\end{lem}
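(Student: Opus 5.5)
The plan is to compare $\Seq(\model^+(\Sigma))$ directly with $\Sigma$, step by step. Write $\Sigma=((\mathcal P_1,R_1),\dots,(\mathcal P_m,R_m))$ and $(\mathbf M,\mathfrak I)=\model^+(\Sigma)$. Let $m'=\min\mathfrak I(\rho(\mathbf M))$, which is the first index at which $\mathcal P_{m'}=\{V\}$, so $m'\le m$. Write $\Seq(\mathbf M,\mathfrak I)=((\mathcal P'_1,R'_1),\dots,(\mathcal P'_{m'+1},R'_{m'+1}))$.

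\emph{Partitions.} For $i\le m'$ we have $\mathcal L_i=\{P\colon i\in\mathfrak I(P)\}=\mathcal P_i$, by the definition of $\mathfrak I$ in $\model^+$. Hence $\mathcal P'_i=\mathcal P_i$ once we identify each leaf $\{v\}$ with $v$.

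\emph{Revealed pairs.} The key step is to show $R'_i\subseteq R_i$ for $i\le m'$. Let $(u,v)\in R'_i$, witnessed by $(x,y)\preceq(u,v)$ with $S(x,y)$ and $\max(\mathfrak I(x)\cap\mathfrak I(y))<i$. Let $k$ be the index with $(u,v)\in R_{k+1}\setminus R_k$, and let $P\ni u$, $Q\ni v$ be the parts of $\mathcal P_k$. As in the proof of \zcref{lem:seq_to_mod}, $S(P,Q)$ holds and $(P,Q)$ is the hat $\widehat{u,v}$. Since the hat is the maximal $S$-pair below $(u,v)$, we get $(x,y)\preceq(P,Q)$, i.e.\ $x\supseteq P$ and $y\supseteq Q$. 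Therefore $\max\mathfrak I(x)\ge\max\mathfrak I(P)\ge k$, and likewise $\max\mathfrak I(y)\ge k$.

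Now $\mathfrak I(x)\cap\mathfrak I(y)\neq\emptyset$, because $S(x,y)$ is witnessed by some step $j$ at which both $x$ and $y$ are parts. For two intersecting intervals, the maximum of the intersection is $\min(\max\mathfrak I(x),\max\mathfrak I(y))$, which is therefore $\ge k$. The hypothesis $\max(\mathfrak I(x)\cap\mathfrak I(y))<i$ then gives $k+1\le i$, so $(u,v)\in R_{k+1}\subseteq R_i$. This is the step I expect to be the main obstacle: the definition of $R'_i$ quantifies over every $S$-pair below $(u,v)$, not just the hat, and the argument above is what lets us pass to the hat.

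\emph{Width comparison.} For $2\le i\le m'$, both sequences use the same partition $\mathcal P_{i-1}$ and $R'_i\subseteq R_i$. Any path in $(V,R'_i)$ is also a path in $(V,R_i)$, so $\reach_r(v,\mathcal P_{i-1},R'_i)\subseteq\reach_r(v,\mathcal P_{i-1},R_i)$; this is the same monotonicity as in \zcref{fact:min_sigma}. For the final index $m'+1$, the relevant partition is $\mathcal P'_{m'}=\{V\}$, so every reach set has size at most $1$. This is at most $\width_r(\Sigma)$ whenever $\Sigma$ has at least two steps; if the structure has a single element, both widths are trivial. Taking maxima over $i$ and $v$ yields $\width_r(\Seq(\model^+(\Sigma)))\le\width_r(\Sigma)$.
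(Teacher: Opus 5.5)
Your route is the paper's route: identify the partitions, prove $R'_i\subseteq R_i$, then conclude by monotonicity. The step you flagged is exactly where it fails. Let $k$ be the step at which $(u,v)$ is revealed, and let $P\ni u$, $Q\ni v$ be the parts of $\mathcal P_k$; then $(P,Q)$ need not be $\widehat{u,v}$. The hat is the $\preceq$-maximal $S$-pair $(x,y)\preceq(u,v)$, i.e.\ the one with the finest parts. A strictly finer $S$-pair $(x',y')\preceq(u,v)$ may have been created at an earlier step $j<k$ by the revelation of a \emph{different} pair of $x'y'$. Condition~(3) of a merge-sequence only forces $(u,v)$ to have the same status as that pair, not to be revealed together with it. Citing the proof of \zcref{lem:seq_to_mod} does not help: its uniqueness clause ``$S_Z(P',Q')\rightarrow(P',Q')=(P,Q)$'' is the same unjustified claim. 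The paper's proof of the present lemma rests on an equivalent one, namely ``$S(P,Q)$ implies $(u,v)\in PQ\cap(R_{j+1}\setminus R_j)$''.

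In fact neither the inclusion nor the stated inequality holds. Take the edgeless graph on $V=\{a,c_1,c_2,b_1,b_2\}$, with pairs taken in both orientations, and let $\Sigma$ be given by:
\begin{itemize}
\item $\mathcal P_1$ the singletons, $\mathcal P_2=\{\{a,c_1\},\{c_2\},\{b_1\},\{b_2\}\}$, $\mathcal P_3=\{\{a,c_1,c_2\},\{b_1\},\{b_2\}\}$, $\mathcal P_4=\{\{a,c_1,c_2\},\{b_1,b_2\}\}$, $\mathcal P_5=\{V\}$;
\item $R_1=R_2=\emptyset$, $R_3=\{c_1b_1\}$, $R_4=\{c_1b_1,c_2b_2\}$, $R_5=\pairs{V}{2}$.
\end{itemize}
Condition~(3) is vacuous for an edgeless graph. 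Moreover $\width_r(\Sigma)=2$ for every $r$, since $(V,R_i)$ is a matching for $i\le 4$ and $\mathcal P_4$ has two parts.

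The ranked model $\model^+(\Sigma)$ is clean with $m'=m=5$. It contains $S(\{a,c_1\},\{b_1\})$ and $S(\{a,c_1,c_2\},\{b_2\})$, with $\mathfrak I(\{a,c_1\})=[2,2]$, $\mathfrak I(\{a,c_1,c_2\})=[3,4]$ and $\mathfrak I(\{b_1\})=\mathfrak I(\{b_2\})=[1,3]$. So the common lifetimes of these two $S$-pairs end at $2$ and $3$, and hence $(a,b_1),(a,b_2)\in R'_4\setminus R_4$. Note that the hat of $(a,b_1)$ is $(\{a,c_1\},\{b_1\})$, while your $(P,Q)$ is $(\{a,c_1,c_2\},\{b_1,b_2\})$. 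Therefore $|\reach_1(a,\mathcal P_3,R'_4)|=3$, and $\width_r(\Seq(\model^+(\Sigma)))\ge 3>2$.

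Your argument does go through if $\Sigma$ is normalized as follows: whenever some pair of $PQ$ (with $P,Q\in\mathcal P_j$) is revealed at step $j$, all of $PQ\setminus R_j$ is revealed then too. Under this assumption, an $S$-pair $(x',y')\preceq(u,v)$ created at step $j$ gives $(u,v)\in R_{j+1}$, so $k\le j$. On the other hand, $x'\subsetneq P$ would give $j\le\max\mathfrak I(x')<\min\mathfrak I(P)\le k$, and symmetrically for $y'$. Hence $(P,Q)$ is the hat. But this normalization can raise the width: in the example it goes from $2$ to $3$. So the lemma as stated needs a different hypothesis or a different construction of $\model(\Sigma)$.
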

\begin{proof}
Let $\Sigma=((\mathcal P_1,R_1),\dots,(\mathcal P_m,R_m))$ and let $(\mathbf M,\mathfrak I)=\model^+(\Sigma)$.
By definition, the domain of $\mathbf M$ is $\bigcup_i\mathcal P_i$ and $i\in\mathfrak I(P)$ if and only if 
$P\in\mathcal P_i$. It follows that the merge sequence
$\Sigma'=\Seq(\mathbf M,\mathfrak I)$ is of the form
$\Sigma'=((\mathcal P_1,R_1'),\dots,(\mathcal P_m,R_m'),(\mathcal P_{m},R_m))$.
By definition, if $(u,v)\in R_i'$, then  there exists $(P,Q)\preceq (u,v)$ such that $S(P,Q)$ and $\max(\mathfrak I(P)\cap\mathfrak I(Q))<i$. 
By definition of $\model(\Sigma)$, the condition $S(P,Q)$ implies that there exists $j$ such that $P,Q\in\mathcal P_j$ and
$(u,v)\in PQ\cap (R_{j+1}\setminus R_j)$.
Hence, $j\in \mathfrak I(P)\cap \mathfrak I(Q)$. Thus, $j<i$. It follows that $(u,v)\in R_{j+1}\subseteq R_i$.

It is easily checked that if we extend the sequence $\Sigma$ by adding $(\mathcal P_{m+1},R_{m+1})$, where $\mathcal P_{m+1}=\mathcal P_m$ and $R_{m+1}=R_m$ we get a merge-sequence with the same ${\rm width}_r$.
Then, the difference between $\Sigma$ and $\Sigma'$ stands in the difference between $R_i$ (for $\Sigma$) and $R_i'$ (for $\Sigma'$). It is easily checked that $R_i'\subseteq R_i$. Hence, according to \zcref{fact:min_sigma}, we have ${\rm width}_r(\Sigma')\leq {\rm width}_r(\Sigma)$. 
\end{proof}

\begin{lem}
\label{lem:width=w}
    Let $(\mathbf M,\mathfrak I)$ be a clean ranked merge-model.
    Then, for every positive integer $r$, we have
\[
{\rm width}_r(\Seq(\mathbf M,\mathfrak I))={\rm w}_r(\mathbf M,\mathfrak I).
\]
\end{lem}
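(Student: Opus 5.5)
The plan is to prove, for each fixed leaf $v$ and each step, that the two reachability sets coincide under the natural identification. Write $\Seq(\mathbf M,\mathfrak I)=((\mathcal P_1,R_1),\dots,(\mathcal P_{m+1},R_{m+1}))$ with $m=\min\mathfrak I(\rho(\mathbf M))$. For a step $i$ with $2\le i\le m$, set $\tau:=i-1\in[m-1]$; these are exactly the admissible values $\tau<\tau_\rho=m$ in \zcref{def:merge_walk}.

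First, the parts of $\mathcal P_{i-1}$ are in bijection with $\mathcal L_{\tau}=\{x\colon \tau\in\mathfrak I(x)\}$, via $x\mapsto\{w\in V(\mathbf M)\colon w\succeq x\}$. By \zcref{lem:tau}, $x\in\mathcal L_\tau$ is equivalent to $\min\mathfrak I(x)\le\tau<\min\mathfrak I(\pi(x))$, which is exactly the end condition on $v_{n+1}$ in a $\tau$-bounded merge-walk. The root is excluded on both sides, since $\tau<m$.

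Second, an $R_i$-edge is a pair of leaves $(a,b)$ for which there is some $(x,y)\preceq(a,b)$ with $S(x,y)$ and $\max(\mathfrak I(x)\cap\mathfrak I(y))<i$, that is, $\le\tau$. This is exactly one ``$S$-step'' of a $\tau$-bounded merge-walk. Paths in the undirected graph $(V,R_i)$ are handled by allowing either orientation of the pair, which causes no trouble because $S$-steps in a walk can be read in either orientation (I would double-check that the paper's convention on the orientation of $S$ is compatible here).

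($\subseteq$) Let $v=w_0,w_1,\dots,w_n$ be an $R_i$-path with $n\le r$, ending in a leaf $w_n$ of the part indexed by $z\in\mathcal L_\tau$. For each edge $(w_{j-1},w_j)$ pick a witness $(x_j,y_j)$ and form the walk $(w_0,x_1,y_1,x_2,y_2,\dots,x_n,y_n,z)$. Here $y_j$ and $x_{j+1}$ are both ancestors of the leaf $w_j$, so in a tree-order they are comparable. Likewise $y_n$ and $z$ are both $\preceq w_n$, so they are comparable. Hence this is a $\tau$-bounded merge-walk of order $n$ reaching $z$. (If $n=0$, the walk $(v,z)$ with $z\preceq v$ works.)

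($\supseteq$) Conversely, take a walk $(u_0,v_1,u_1,\dots,v_n,u_n,v_{n+1})$. For each $j$, $u_j$ and $v_{j+1}$ are comparable, so choose a leaf $w_j$ below the larger of the two; it then lies below both. Put $w_0=u_0=v$. Then $(v_j,u_j)\preceq(w_{j-1},w_j)$ with $S(v_j,u_j)$ and $\max(\mathfrak I(v_j)\cap\mathfrak I(u_j))\le\tau$, so $(w_{j-1},w_j)\in R_i$ whenever $w_{j-1}\ne w_j$; when they coincide, the path simply gets shorter. Finally $w_n\succeq v_{n+1}$, so the part indexed by $v_{n+1}$ lies within $R_i$-distance $n\le r$ of $v$. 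This gives ${\rm MWReach}_r(v,\tau)\leftrightarrow\reach_r(v,\mathcal P_{i-1},R_i)$ bijectively.

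Remaining boundary cases. For $i=m+1$, $\mathcal P_m=\{V\}$, so $|\reach_r|=1$. This step does not exceed ${\rm w}_r$: when $m\ge2$, the $\tau=1$ case already yields at least one reachable element ($v$ itself). The degenerate case $m=1$, with a single vertex, is checked directly. I expect the main care to be in this bookkeeping, namely matching the index shift $\tau=i-1$, excluding the root, and the special last step $R_{m+1}$, rather than in the core correspondence, which rests only on the tree-order fact that ancestors of a common leaf are comparable.
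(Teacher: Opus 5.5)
Your proposal is correct and follows the paper's proof. For $\tau=i-1$ you match ${\rm MReach}_r(v,\mathcal P_{i-1},R_i)$ with ${\rm MWReach}_r(v,\tau)$ in both directions: an $R_i$-path becomes a merge-walk through witnesses of its edges, and a merge-walk becomes a path through common leaf descendants. The paper uses the hats $\widehat{t_j,t_{j+1}}$ as witnesses, whereas an arbitrary witness spares you checking that the hat meets the time bound. You are also more careful than the paper about the step $i=m+1$ (where $\tau=m$ is not admissible), the order-$0$ walk, and coinciding consecutive leaves.

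The orientation issue you flag is real and shared with the paper. Since ${\rm dist}_{R_i}$ is taken in the undirected graph, $S$-steps of merge-walks must be read symmetrically, as the paper's proof tacitly does. If $S(v_j,u_j)$ is read as oriented, the lemma fails: take three leaves $a,b,c$ with $S(a,b)$ and $S(c,b)$ at time $1$, and a loop $S$ at the root at time $2$. Then ${\rm width}_2(\Seq(\mathbf M,\mathfrak I))=3$ but ${\rm w}_2(\mathbf M,\mathfrak I)=2$.
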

\begin{proof}
Let $m=\min \mathfrak I(\rho(\mathbf M))$, and let $((\mathcal P_1,R_1),\dots,(\mathcal P_{m+1},R_{m+1}))=\Seq(\mathbf M,\mathfrak I)$.
Let $1<i\leq m+1$, and let $\tau=i-1$.

Let $v\in V(\mathbf M)$ and $P\in {\rm MReach}_r(v,\mathcal P_{i-1},R_i)$. By definition, letting $t_0=v$, there exist
$t_1,\dots,t_n\in V(\mathbf M)$  such that
$n\leq r$, $(t_0,t_1,\dots,t_n)$ is a walk in the graph $(V(\mathbf M),R_i)$, and $t_n\in P$.
Let $u_0=t_0$.
For every $1\leq j< n$ we have $R_i(t_j,t_{j+1})$. Let $(v_j,u_j):=\widehat{t_j,t_{j+1}}$. Note that $\max(\mathfrak I(v_j)\cap\mathfrak I(u_j))<i$ (by the definition of $R_i$), hence $\max(\mathfrak I(v_j)\cap\mathfrak I(u_j))\leq \tau$.
By construction $u_j\preceq t_{j+1}$ and $v_{j+1}\preceq t_{j+1}$, hence~$u_j$ and $v_{j+1}$ are comparable.
Let $v_{n+1}:=P$. 
As $P\in \mathcal P_{i-1}$, we have $\tau\in\mathfrak I(v_{n+1})$. According to \zcref{lem:tau}, the sequence $(u_0,v_1,u_1,\dots,v_{n+1})$ is a $\tau$-bounded merge walk, hence ${\rm MReach}_r(v,\mathcal P_{i-1},R_i)\subseteq {\rm MWReach}_r(v,\tau)$.

Let $v\in V(\mathbf M)$ and $P\in {\rm MWReach}_r(v,\tau)$. 
By definition, there exists a $\tau$-bounded merge-walk $(u_0,v_1,u_1,\dots,v_n,u_n,v_{n+1})$ with $u_0=v$, $v_{n+1}=P$, and $n\leq r$.
Let $t_0:=v$. For $1\leq j\leq n$, let $t_j$ be a leaf descendant of both $u_j$ and~$v_{j+1}$. As $S(v_j,u_j)$,
$(v_j,u_j)\preceq (t_{j-1},t_j)$ and $\max(\mathfrak I(u_i)\cap \mathfrak I(v_j))\leq \tau<i$, we have $(t_{j-1},t_j)\in R_i$.
Hence, $(t_0,\dots,t_n)$ is a walk in the graph $(V(\mathbf M), R_i)$. Moreover, $P=v_{n+1}\preceq t_n$, hence $t_n\in P$ and $\tau\in\mathfrak{I}(v_{n+1})$ (by \zcref{lem:tau}) implies $P\in\mathcal P_{i-1}$. 
Thus, 
${\rm MWReach}_r(v,\tau)\subseteq {\rm MReach}_r(v,\mathcal P_{i-1},R_i)$.
\end{proof}
\begin{lem}
\label{thm:mw_w}
Let $\mathbf G$ be a $\sigma$-structure and let $r$ be a positive integer. Then
$\mw_r(\mathbf G)$ is the minimum of ${\rm w}_r(\mathbf M,\mathfrak I)$ over all (clean) ranked merge-models $(\mathbf M,\mathfrak I)$ of $\mathbf G$.
\end{lem}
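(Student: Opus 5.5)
We prove the two inequalities separately, using only the lemmas of this section.

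\emph{Upper bound on the minimum.} Take a merge-sequence $\Sigma$ of $\mathbf G$ with $\width_r(\Sigma)=\mw_r(\mathbf G)$. By \zcref{lem:seq_to_mod+}, $(\mathbf M,\mathfrak I)=\model^+(\Sigma)$ is a clean ranked merge-model of $\mathbf G$; more precisely, $\Str(\mathbf M)\cong\mathbf G$ by \zcref{lem:seq_to_mod}. By \zcref{lem:width=w}, ${\rm w}_r(\mathbf M,\mathfrak I)=\width_r(\Seq(\mathbf M,\mathfrak I))$. By \zcref{lem:sigmap}, the latter is at most $\width_r(\Sigma)=\mw_r(\mathbf G)$. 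So the minimum over clean ranked merge-models, and a fortiori over all ranked merge-models, is at most $\mw_r(\mathbf G)$.

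\emph{Lower bound.} Let $(\mathbf M,\mathfrak I)$ be any ranked merge-model of $\mathbf G$, meaning $\Str(\mathbf M)\cong\mathbf G$. Let $\mathfrak C$ be its cleaning. By \zcref{lem:cleaning}, $(\mathbf M,\mathfrak C)$ is a clean ranked merge-model of the same structure, and by \zcref{lem:clean_is_min}, ${\rm w}_r(\mathbf M,\mathfrak C)\le{\rm w}_r(\mathbf M,\mathfrak I)$. By \zcref{lem:clean_to_seq}, $\Seq(\mathbf M,\mathfrak C)$ is a merge-sequence of $\Str(\mathbf M)$. Transporting it along the isomorphism $\Str(\mathbf M)\cong\mathbf G$ gives a merge-sequence of $\mathbf G$ with the same radius-$r$ width, since width is invariant under relabelling the domain. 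By \zcref{lem:width=w}, this width equals ${\rm w}_r(\mathbf M,\mathfrak C)$. Hence
\[
\mw_r(\mathbf G)\le{\rm w}_r(\mathbf M,\mathfrak C)\le{\rm w}_r(\mathbf M,\mathfrak I).
\]
Combining the two directions, the minimum over all ranked merge-models equals the minimum over clean ones, and both equal $\mw_r(\mathbf G)$.

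The main point needing care is that $\Seq(\mathbf M,\mathfrak C)$ is a genuine merge-sequence in the sense of \zcref{defi:merge_seq}, despite its repeated final partition $\mathcal P_{m+1}=\mathcal P_m$. We need $\mathcal P_{m+1}=\{V\}$, and this holds because cleanliness gives $\mathcal L_m=\{\rho(\mathbf M)\}$. The extra step is harmless for the width, and \zcref{lem:width=w} already accounts for the index $i=m+1$. Beyond this, the argument is a direct chaining of the previous lemmas.
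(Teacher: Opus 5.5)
Your proposal is the paper's own proof. It uses the same lemmas in the same roles: \zcref{lem:seq_to_mod+}, \zcref{lem:width=w} and \zcref{lem:sigmap} applied to $\model^+(\Sigma)$ for an optimal $\Sigma$ in one direction, and cleaning, \zcref{lem:clean_is_min}, \zcref{lem:clean_to_seq} and \zcref{lem:width=w} in the other, only organized a little more tidily.

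One caveat applies equally to the paper: \zcref{lem:sigmap} is not correct as stated. Its proof infers from $S(P,Q)$ that \emph{every} pair of $PQ$ is revealed at that step, whereas only one need be. If $\Sigma$ reveals one pair of $PQ$ while $P,Q$ coexist and the others only after one of them has been merged, then $R_i'\not\subseteq R_i$, and ${\rm w}_r(\model^+(\Sigma))$ can indeed exceed $\width_r(\Sigma)$. In addition, $\model^+(\Sigma)$ is not clean when $\Sigma$ repeats a partition, so \zcref{lem:seq_to_mod+} fails there too. Your upper bound therefore needs an extra step. One way is to repeatedly postpone such early reveals; this is legitimate because those pairs have the same type as the late ones, and harmless by \zcref{fact:min_sigma}. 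Then bound the $\tau$-bounded merge-walks of $\model^+(\Sigma)$ directly by walks in $(V,R_{\tau+1})$ ending in parts of $\mathcal P_\tau$, rather than passing through $\Seq$.
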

\begin{proof}
Let $(\mathbf M,\mathfrak I)$  be a clean ranked merge-model of $\mathbf G$, which exists according to \zcref{lem:cleaning}.  According to \zcref{lem:width=w}, we have
${\rm width}
_r(\Seq(\mathbf M,\mathfrak I))={\rm w}_r(\mathbf M,\mathfrak I)$.
According to \zcref{lem:clean_to_seq}, 
$\Seq(\mathbf M,\mathfrak I)$ is a merge-sequence of 
$\Str(\mathbf M)$, which is isomorphic (by \zcref{lem:seq_to_mod}) to
$\mathbf G$.
Hence, ${\rm w}_r(\mathbf M,\mathfrak I)\geq \mw_r(\Str(\mathbf M))=\mw_r(\mathbf G)$.

Conversely,
let $\Sigma$ be a merge-sequence of $\mathbf G$ such that ${\rm width}_r(\Sigma)=\mw_r(\mathbf G)$, and let $(\mathbf M,\mathfrak I)=\model^+(\Sigma)$. Let $\Sigma'=\Seq(\model^+(\Sigma))$. According to \zcref{lem:seq_to_mod+}, $\model^+(\Sigma)$ is a clean ranked merge-model of $\mathbf G$ and,
according to \zcref{lem:sigmap}, ${\rm width}_r(\Sigma')\leq {\rm width}_r(\Sigma)=\mw_r(\mathbf G)$.

The fact that the minimum ${\rm w}_r(\mathbf M,\mathfrak I)$ over ranked merge-models $(\mathbf M,\mathfrak I)$ is attained by clean ranked merge-models follows from \zcref{lem:clean_is_min}.
\end{proof}
\section{Compactification}
We now define a notion of compactification of a (ranked) merge-model.
Compactification removes internal vertices that are independent with respect to
transversal edges ($S(\cdot,\cdot)$).
These vertices are still useful for the abstract tree structure, but they do not
affect the represented structure or the relevant width parameters.

\begin{nota}
Let $\mathbf M$ be a merge-model. The \emph{skeleton} $\Skel(\mathbf M)$ is the subset of (the domain of) $\mathbf M$ defined by
\[
\Skel(\mathbf M)=\{\rho(\mathbf M)\}\cup V(\mathbf M)\cup\{v\in\mathbf M\colon \exists x\in\mathbf M, S(x,v)\}.
\]
\end{nota}
\begin{ndefi}[compactification]~
\begin{trivlist}
\item Let $\mathbf M$ be a merge-model.
\item The \emph{compactification}  of $\mathbf M$ is the restriction $\mathbf M^c$ of $\mathbf M$ to $\Skel(\mathbf M)$.
\item
\item Let $(\mathbf M,\mathfrak I)$ be a ranked merge-model and let $\mathfrak C$ be
the restriction of $\mathfrak I$ to $\Skel(\mathbf M)$.
\item The \emph{compactification} of $(\mathbf M,\mathfrak I)$ is the pair $(\mathbf M^c,\mathfrak I^c)$, where $\mathbf M^c$ is the compactification of~$\mathbf M$ and  $(\mathbf M^c,\mathfrak I^c)$ is the cleaning of 
$(\mathbf M^c,\mathfrak C)$.
\item 
\item The ranked merge-model $(\mathbf M,\mathfrak I)$ is \emph{compact} if it is its own compactification. 
\end{trivlist}
\end{ndefi}

Note that the compactification of a ranked merge-model is compact and that every compact ranked merge-model is clean.


\begin{lem}
\label{lem:comp}
Let $(\mathbf M,\mathfrak I)$ be a ranked merge-model.  Then, $(\mathbf M^c,\mathfrak I^c)$ is a clean ranked merge-model of $\Str(\mathbf M)$.
\end{lem}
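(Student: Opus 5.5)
The argument has three stages. First, $\mathbf M^c$ is a merge-model. Second, $\Str(\mathbf M^c)=\Str(\mathbf M)$. Third, the restriction $\mathfrak C$ of $\mathfrak I$ to $\Skel(\mathbf M)$ is an interval ranking of $\mathbf M^c$. Once these hold, \zcref{lem:cleaning} applied to $(\mathbf M^c,\mathfrak C)$ directly gives that its cleaning $(\mathbf M^c,\mathfrak I^c)$ is a clean ranked merge-model of $\Str(\mathbf M^c)$, and hence of $\Str(\mathbf M)$.

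For the first stage, the induced order $\preceq$ on $\Skel(\mathbf M)$ is a tree-order, because the restriction of a tree-order to a subset containing the root is again a tree-order. Its root is $\rho(\mathbf M)$, which is kept. Its maximal elements are exactly $V(\mathbf M)$: all leaves are kept, and every other kept element lies below some leaf. Conditions (2), (3) and (4) of \zcref{def:merge_model} are universal (negative) conditions, so they are inherited by substructures. For condition (5), take $(u,v)$ and any witness $(x,y)\preceq(u,v)$ with $S_Z(x,y)$. Both $x$ and $y$ occur in an $S$-relation: $y$ does by definition of the skeleton, and $x$ does as well, since I read $\exists x\,S(x,v)$ symmetrically (or note that the Gaifman-type reading of the skeleton is intended). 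So both lie in $\Skel(\mathbf M)$ and the witness survives.

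This is the one delicate point. If the skeleton is read asymmetrically, the first coordinate of an $S$-pair might be dropped. To cover that case I would check that each first coordinate $x$ with $S(x,y)$ is either the root, a leaf, or itself appears as a second coordinate. If this fails, I would instead interpret the definition as including both endpoints, which is clearly the intent, since the compactification is described as removing vertices ``independent with respect to transversal edges''.

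For the second stage, both $S$ and $\preceq$ restricted to pairs of kept elements are unchanged. So the set of $S$-pairs below $(u,v)$ is the same in $\mathbf M$ and in $\mathbf M^c$, and hence the hat $\widehat{u,v}$ is the same pair. The relation $S_{Z,1}$ on it is also unchanged, so $\Str(\mathbf M^c)=\Str(\mathbf M)$.

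For the third stage, both conditions of \zcref{def:int_ranking} refer only to pairs of elements. The comparability $x\prec y$ and the relation $S(x,y)$ among kept elements are inherited from $\mathbf M$, so both conditions hold for $\mathfrak C$. Cleaning then applies without further work.
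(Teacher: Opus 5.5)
Your proposal is correct and follows the paper's route: $\Str(\mathbf M^c)=\Str(\mathbf M)$, the restriction of $\mathfrak I$ to $\Skel(\mathbf M)$ is still an interval ranking, and \zcref{lem:cleaning} finishes the argument; in addition, you check that $\mathbf M^c$ satisfies \zcref{def:merge_model}, which the paper leaves implicit. Your symmetric reading of $\Skel(\mathbf M)$ is genuinely needed, and it is what the paper tacitly uses when it asserts $\Str(\mathbf M^c)=\Str(\mathbf M)$: your fallback check can fail (take leaves $a,b$ with distinct parents $p,q$ below the root, and set $S_{Z,1}(p,q)$ and $S_{Z,0}(b,a)$; then $p$ is never a second coordinate, the literal skeleton drops it, and condition~(5) fails for $(a,b)$).
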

\begin{proof}
 First, note that $\Str(\mathbf M^c)=\Str(\mathbf M)$. Let $\mathfrak C$ be the restriction of $\mathfrak I$ to the domain of $\mathbf M^c$.
 Then, note that the properties characterizing an interval ranking are preserved by restriction: if $x\prec y$ and $x,y\in \mathbf M^c$, then $x\prec y$ in $\mathbf M$, thus $\mathfrak C(x)=\mathfrak I(x)$ is to the right of $\mathfrak C(y)=\mathfrak I(y)$. Similarly, if $x,y\in \mathbf M^c$ and $S(x,y)$ then $\mathfrak C(x)\cap\mathfrak C(y)=\mathfrak I(x)\cap\mathfrak I(y)\neq\emptyset$.
 Hence,  $(\mathbf M^c,\mathfrak C)$ is a ranked merge-model of $\Str(\mathbf M)$ and $(\mathbf M^c,\mathfrak I^c)$, obtained by cleaning $(\mathbf M^c,\mathfrak C)$ is a clean ranked merge-model of $\Str(\mathbf M)$ by \zcref{lem:cleaning}.
\end{proof}

\begin{nota}
Let $\mathbf G$ be a $\sigma$-structure, let $r$ be a positive integer, let $\mathscr R$ be the class of all ranked merge-models of $\mathbf G$, and 
let $(\mathbf N,\mathfrak C)$ be a compact ranked  merge-model.
We define
\[\widehat{\rm w}_r(\mathbf N,\mathfrak C)=\min\{{\rm w}_r(\mathbf M,\mathfrak I)\colon (\mathbf M,\mathfrak I)\in\mathscr R, \mathbf M^c=\mathbf N\text{ and }{\mathfrak I^c}=\mathfrak C\}.\]
\end{nota}
\begin{lem}
\label{lem:mw=hat}
Let $\mathbf G$ be a $\sigma$-structure and let $r$ be a positive integer. Then
\[\mw_r(\mathbf G)=\min\{\widehat{\rm w}_r(\mathbf N,\mathfrak C)\colon (\mathbf N,\mathfrak C)\text{ is a  compact ranked  merge-model of }\mathbf G\}.
\]
\end{lem}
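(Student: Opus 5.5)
The plan is to reduce the statement to \zcref{thm:mw_w}, which already says that $\mw_r(\mathbf G)$ is the minimum of ${\rm w}_r(\mathbf M,\mathfrak I)$ over all ranked merge-models $(\mathbf M,\mathfrak I)$ of $\mathbf G$. The main observation is that the compactification map $(\mathbf M,\mathfrak I)\mapsto(\mathbf M^c,\mathfrak I^c)$ partitions the class $\mathscr R$ of ranked merge-models of $\mathbf G$ into fibers indexed by compact ranked merge-models. By definition, $\widehat{\rm w}_r(\mathbf N,\mathfrak C)$ is the minimum of ${\rm w}_r$ over one such fiber. The minimum over the union of the fibers is then the minimum of the fiberwise minima, which is exactly the claimed identity.

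First I will check that the fibers cover everything relevant. By \zcref{lem:comp}, for every $(\mathbf M,\mathfrak I)\in\mathscr R$ the pair $(\mathbf M^c,\mathfrak I^c)$ is a clean ranked merge-model of $\Str(\mathbf M)=\mathbf G$. It is also compact, as the paper notes right after the definition of compactification. Hence every element of $\mathscr R$ lies in the fiber of some compact ranked merge-model of $\mathbf G$. Taking $(\mathbf M,\mathfrak I)$ optimal, as provided by \zcref{thm:mw_w}, this gives
\[
\min_{(\mathbf N,\mathfrak C)}\widehat{\rm w}_r(\mathbf N,\mathfrak C)\leq {\rm w}_r(\mathbf M,\mathfrak I)=\mw_r(\mathbf G).
\]
In the other direction, fix a compact ranked merge-model $(\mathbf N,\mathfrak C)$ of $\mathbf G$. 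Any $(\mathbf M,\mathfrak I)$ attaining $\widehat{\rm w}_r(\mathbf N,\mathfrak C)$ belongs to $\mathscr R$, so \zcref{thm:mw_w} gives ${\rm w}_r(\mathbf M,\mathfrak I)\geq\mw_r(\mathbf G)$. This proves the reverse inequality.

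The main obstacle is well-definedness: each fiber over a compact $(\mathbf N,\mathfrak C)$ must be nonempty, otherwise $\widehat{\rm w}_r(\mathbf N,\mathfrak C)$ is a minimum over the empty set. I would handle this by showing that a compact ranked merge-model is its own compactification, so $(\mathbf N,\mathfrak C)$ lies in its own fiber. This is precisely the definition of being compact, combined with \zcref{lem:cleaning}, which says that cleaning a clean model returns the same ranking. It also has to be noted that $\mathbf N$ is a ranked merge-model of $\mathbf G$ and so belongs to $\mathscr R$; this holds by hypothesis.

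A second minor point is that each minimum is attained, because only finitely many ranked merge-models of $\mathbf G$ need to be considered up to the relevant equivalence. By \zcref{lem:clean_is_min}, cleaning does not increase ${\rm w}_r$. For the lower bound, passing to infima suffices anyway, since ${\rm w}_r$ takes integer values.
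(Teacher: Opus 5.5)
Your proposal is correct and is essentially the paper's proof. The paper also rewrites the minimum over all ranked merge-models of $\mathbf G$ from \zcref{thm:mw_w} as a minimum, over compact $(\mathbf N,\mathfrak C)$, of the minimum over the fiber $\{(\mathbf M,\mathfrak I)\colon \mathbf M^c=\mathbf N,\ \mathfrak I^c=\mathfrak C\}$; it leaves implicit the checks you make explicit, namely that compactifications are compact ranked merge-models of $\mathbf G$ by \zcref{lem:comp}, and that each fiber contains $(\mathbf N,\mathfrak C)$ itself.

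One small remark: your ``finitely many up to equivalence'' argument is unnecessary and not obviously true, since a model can be padded with arbitrarily long chains of non-skeleton nodes. Attainment of all these minima already follows from ${\rm w}_r$ taking nonnegative integer values.
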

\begin{proof}
Let $\mathscr R$ be the class of all ranked merge-models of $\mathbf G$  and let $\mathscr C$ be the class of all compact ranked merge-models of $\mathbf G$.

According to \zcref{thm:mw_w}, we have
    \begin{align*}
    \mw_r(\mathbf G)&=\min_{(\mathbf M,\mathfrak I)\in \mathscr R}  {\rm w}_r(\mathbf M,\mathfrak I)\\
    &=\min_{(\mathbf N,\mathfrak C)\in \mathscr C}\min\{
   {\rm w}_r(\mathbf M,\mathfrak I)\colon  (\mathbf M,\mathfrak I)\in\mathscr R, \mathbf M^c=\mathbf N\text{ and }{\mathfrak I^c}=\mathfrak C\}\\
   &=\min_{(\mathbf N,\mathfrak C)\in \mathscr C}\widehat{\rm w}_r(\mathbf N,\mathfrak C).\qedhere
    \end{align*}
\end{proof}

\begin{lem}
\label{lem:mhat=m}
	For every clean ranked merge-model $(\mathbf M,\mathfrak I)$ and every integer $r$ we have
\[
	{\rm w}_r(\mathbf M^c,\mathfrak{I}^c)\leq {\rm w}_r(\mathbf M,\mathfrak{I}).
\]
Hence,
\[
\widehat{\rm w}_r(\mathbf M^c,\mathfrak{I}^c)={\rm w}_r(\mathbf M^c,\mathfrak{I}^c)                             .
\]
\end{lem}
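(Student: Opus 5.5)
We prove the inequality by transferring merge-walks: every $\tau$-bounded merge-walk in $(\mathbf M^c,\mathfrak I^c)$ should yield a $\tau'$-bounded merge-walk in $(\mathbf M,\mathfrak I)$ with the same start and the same end vertex. This is the strategy of \zcref{lem:clean_is_min}. Write $\mathfrak C$ for the restriction of $\mathfrak I$ to $\Skel(\mathbf M)$, so that $\mathfrak I^c$ is the cleaning of $\mathfrak C$. The argument of \zcref{lem:clean_is_min} does not use that the two rankings live on the same tree, only that $g$ is monotone in $f$. It therefore shows that a $\tau$-bounded walk for $\mathfrak I^c$ is a $\tau'$-bounded walk for $\mathfrak C$ in $\mathbf M^c$, for a suitable $\tau'$ just below the next left endpoint. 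One caveat: parents in $\mathbf M^c$ differ from parents in $\mathbf M$, so we redo that computation with $\pi^c$, the parent in $\mathbf M^c$.

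It then remains to lift a $\tau'$-bounded walk $(u_0,v_1,u_1,\dots,v_{n+1})$ in $(\mathbf M^c,\mathfrak C)$ to $(\mathbf M,\mathfrak I)$. The conditions $S(v_i,u_i)$, comparability of $u_{i-1}$ and $v_i$, and the bound on $\max(\mathfrak I(u_i)\cap\mathfrak I(v_i))$ are inherited verbatim, since $\preceq$ and $S$ are restrictions and $\mathfrak C=\mathfrak I$ on the skeleton. The last condition is the delicate one. We need $\min\mathfrak I(v_{n+1})\le\tau'<\min\mathfrak I(\pi(v_{n+1}))$ with $\pi$ the parent in $\mathbf M$, but we only know this for $\pi^c(v_{n+1})$, which lies above $\pi(v_{n+1})$. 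We therefore replace the endpoint $v_{n+1}$ by the unique element $w$ of $\mathbf M$ on the path from $\pi^c(v_{n+1})$ down to $v_{n+1}$ with $\min\mathfrak I(w)\le\tau'<\min\mathfrak I(\pi(w))$. Such a $w$ exists because the left endpoints strictly increase towards the root, and $w$ is comparable with $u_n$ since it lies on the same branch. So $(u_0,\dots,u_n,w)$ is a $\tau'$-bounded merge-walk in $(\mathbf M,\mathfrak I)$.

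For the counting we need the map $v_{n+1}\mapsto w$ to be injective on ${\rm MWReach}_r(\mathbf M^c,\mathfrak I^c,v,\tau)$. Two distinct reached endpoints are, by \zcref{lem:tau} applied to the clean model $(\mathbf M^c,\mathfrak I^c)$, both elements of the antichain $\{x\colon\tau\in\mathfrak I^c(x)\}$, hence incomparable. Their corresponding $w$'s lie on the branches of incomparable elements. These branches only meet strictly above both skeleton parents, which is excluded because $w$ must lie strictly below $\pi^c$. Hence the $w$'s are distinct, and $|{\rm MWReach}_r(\mathbf M^c,\mathfrak I^c,v,\tau)|\le|{\rm MWReach}_r(\mathbf M,\mathfrak I,v,\tau')|$. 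Taking maxima gives the first inequality. I expect this endpoint-adjustment and injectivity step to be the main obstacle, because $\tau'$ has to be chosen compatibly with both the cleaning and the original intervals.

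For the equality, note that $(\mathbf M^c,\mathfrak I^c)$ is itself a ranked merge-model with $(\mathbf M^c)^c=\mathbf M^c$ and $(\mathfrak I^c)^c=\mathfrak I^c$. The latter holds because it is compact, and cleaning fixes clean models by \zcref{lem:cleaning}. So $(\mathbf M^c,\mathfrak I^c)$ is admissible in the minimum defining $\widehat{\rm w}_r$, giving $\widehat{\rm w}_r\le{\rm w}_r(\mathbf M^c,\mathfrak I^c)$. Conversely, let $(\mathbf M',\mathfrak I')$ be any ranked merge-model with the same compactification. Cleaning it first does not increase the width by \zcref{lem:clean_is_min}, and the cleaned model has the same compactification, since cleaning only depends on the order of left endpoints and these are preserved by restriction. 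The first part then gives ${\rm w}_r(\mathbf M^c,\mathfrak I^c)\le{\rm w}_r(\mathbf M',\mathfrak I')$, which yields the reverse inequality.
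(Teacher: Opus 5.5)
\textbf{There is a genuine gap, at the injectivity of $v_{n+1}\mapsto w$, and it cannot be closed because the statement itself is false.}

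You correctly spotted the weak point. The last condition of a merge-walk uses the parent, and $\pi^c(v_{n+1})$ can lie strictly above $\pi(v_{n+1})$. The paper's proof overlooks exactly this: it asserts that every $\tau$-bounded merge-walk of $(\mathbf M^c,\mathfrak C)$ is one of $(\mathbf M,\mathfrak I)$. Your replacement of $v_{n+1}$ by $w$ does give a valid $\tau'$-bounded walk in $(\mathbf M,\mathfrak I)$.

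The failure is your sentence ``these branches only meet strictly above both skeleton parents''. Take two incomparable skeleton elements $x,y$ with $\pi^c(x)=\pi^c(y)$. Their least common ancestor $z$ in $\mathbf M$ may lie outside $\Skel(\mathbf M)$, with $\pi^c(x)\prec z$. As soon as $\tau'$ falls in $\mathfrak I(z)$, you get $w(x)=w(y)=z$. Your $\tau'$ sits just below the next \emph{skeleton} left endpoint. So this collision happens whenever a non-skeleton merge occurs before that endpoint.

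No other choice of $\tau'$ helps. Take $\sigma=\{E\}$, a root $\rho$ with children $z,c,d$, and leaves $a,b$ as the children of $z$. Put $S_{E,1}$ on $(c,x)$ and $(x,c)$ for $x\in\{a,b,d\}$, and add $S_{E,0}(\rho,\rho)$. Rank by $\mathfrak I(a)=\mathfrak I(b)=[1,1]$, $\mathfrak I(c)=\mathfrak I(d)=[1,2]$, $\mathfrak I(z)=[2,2]$, $\mathfrak I(\rho)=[3,3]$. This is a clean ranked merge-model of the star $K_{1,3}$.

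In $(\mathbf M,\mathfrak I)$ every reach set has at most $3$ elements, for every $r$:
\begin{itemize}
\item For $\tau<2$, the pair $(c,d)$ is unusable, since $\max(\mathfrak I(c)\cap\mathfrak I(d))=2$. So $d$ is neither reached from another leaf nor reaches anything else.
\item For $2\le\tau<3$, the only admissible endpoints are $z,c,d$.
\end{itemize}
Hence ${\rm w}_r(\mathbf M,\mathfrak I)=3$ for every $r\ge1$.

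Now $z\notin\Skel(\mathbf M)$, so $\mathbf M^c$ is the star with root $\rho$ and leaves $a,b,c,d$. Cleaning gives $\mathfrak I^c=[1,1]$ on the leaves and $[2,2]$ on $\rho$. At $\tau=1$ the walks $(c,c,x,x)$ reach all of $a,b,c,d$. Hence ${\rm w}_r(\mathbf M^c,\mathfrak I^c)=4$.

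In your scheme, $\tau=1$ gives $\tau'\in(2,3)$, and both $a$ and $b$ are sent to $z$. Compactification followed by cleaning moves the revelation of $(c,d)$ before the merge of $a$ and $b$, and this really does increase the width. The same example gives $\widehat{\rm w}_r(\mathbf M^c,\mathfrak I^c)\le 3<{\rm w}_r(\mathbf M^c,\mathfrak I^c)$. So the equality part is false too, and your derivation of it, which rests on the inequality, does not go through.
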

\begin{proof}
Let $\mathfrak C$ be the restriction of 
$\mathfrak I$ to $\Skel(\mathbf M)$.
As $(\mathbf M^c,\mathfrak I^c)$ is the cleaning of $(\mathbf M^c,\mathfrak{C})$, it follows from \zcref{lem:clean_is_min} that
${\rm w}_r(\mathbf M^c,\mathfrak{I}^c)\leq {\rm w}_r(\mathbf M^c,\mathfrak{C})$.
Moreover, 
it is clear from \zcref{def:merge_walk} that every $\tau$-bounded merge-walk of $(\mathbf M^c,\mathfrak C)$ is a $\tau$-bounded merge-walk of $(\mathbf M,\mathfrak I)$. Hence,
${\rm w}_r(\mathbf M^c,\mathfrak{C})\leq {\rm w}_r(\mathbf M,\mathfrak{I})$ and thus ${\rm w}_r(\mathbf M^c,\mathfrak{I}^c)\leq {\rm w}_r(\mathbf M,\mathfrak{I})$.

Then, the equality $\widehat{\rm w}_r(\mathbf M^c,\mathfrak{I}^c)={\rm w}_r(\mathbf M^c,\mathfrak{I}^c)$ follows from \zcref{lem:comp}.
\end{proof}

\begin{lem}
\label{lem:model of model}
	Let $({\mathbf M},\mathfrak I)$ be a compact ranked merge-model and let $r$ be an  integer.
	Then, ${\mathbf M}$ has  a compact ranked merge-model $(\mathbf N,\mathfrak{C})$ with
\[
	{\rm w}_r(\mathbf N,\mathfrak{C})\leq 2{\rm w}_r(\mathbf M,\mathfrak{I}).
\]
\end{lem}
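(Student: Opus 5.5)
The plan is to build an explicit merge-sequence $\Sigma$ of the $\sigma^*$-structure $\mathbf M$ and bound its radius-$r$ width by $2{\rm w}_r(\mathbf M,\mathfrak I)$. After that the statement follows formally. By \zcref{lem:seq_to_mod+}, $\model^+(\Sigma)$ is a clean ranked merge-model of $\mathbf M$. By \zcref{lem:width=w} and \zcref{lem:sigmap}, its ${\rm w}_r$ is at most ${\rm width}_r(\Sigma)$. Finally, by \zcref{lem:comp} and \zcref{lem:mhat=m}, its compactification $(\mathbf N,\mathfrak C)$ is a compact ranked merge-model of $\mathbf M$ whose ${\rm w}_r$ is no larger. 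So everything reduces to constructing $\Sigma$.

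Since $\mathbf M$ is compact it is clean, so I would use the antichains $\mathcal L_1,\dots,\mathcal L_m$ of $\Seq(\mathbf M,\mathfrak I)$, with $m=\min\mathfrak I(\rho(\mathbf M))$. The vertex set of the new structure is the whole domain of $\mathbf M$. At step $i$ the partition $\mathcal Q_i$ consists of two kinds of parts. The first kind are the \emph{down-sets} $D(x)=\{y\colon y\succeq x\}$ for $x\in\mathcal L_i$; by the first claim in the proof of \zcref{lem:clean_to_seq} these are disjoint and cover every element born by time $i$. The second kind are the singletons $\{z\}$ for the not-yet-born elements, those with $\min\mathfrak I(z)>i$, which are exactly the proper ancestors of elements of $\mathcal L_i$.

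The relation $\preceq$ is automatically homogeneous between these parts. Inside $D(x)$ it is not homogeneous, but the only non-trivial pairs there are pairs comparable in the tree. I would reveal the pairs $(x,y)$ and $(y,x)$ for $y\in D(x)$ exactly when $x$ is born, so they do not need to be homogeneous. All remaining $\preceq$-pairs between two sibling down-sets are homogeneously $0$.

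For the relations $S_{Z,\alpha}$, I would reveal each $S$-pair $(a,b)$ of $\mathbf M$ at time $\max(\min\mathfrak I(a),\min\mathfrak I(b))$, i.e.\ when the later of the two is born. To keep the width controlled, one can refine this by inserting an intermediate step between $i$ and $i+1$, which is where a factor $2$ in the time scale appears. Condition~(3) of \zcref{defi:merge_seq} must then be checked for the pairs left unrevealed. The delicate case is an $S$-pair $(z,b)$ where $b$ is born and $z$ is an unborn ancestor-singleton, since $\{z\}\times D(b)$ is then not homogeneous. I would resolve this by revealing $(z,b)$ already at the birth of $b$. This is legitimate because $\mathfrak I(z)\cap\mathfrak I(b)\neq\emptyset$ forces $z$ to be born before $b$ dies.

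Next comes the width bound. Fix a vertex $v$ and a time $\tau$, and take a walk of length at most $r$ in the revealed graph. Each revealed $S$-edge $(a,b)$ is used only once both endpoints have been born, and then $\max(\mathfrak I(a)\cap\mathfrak I(b))$ is at most the current time. Each revealed $\preceq$-edge just moves between comparable elements, which is a free step in a merge-walk. So I would translate the walk into a $\tau$-bounded merge-walk of $(\mathbf M,\mathfrak I)$ of order at most $r$, starting from a leaf below $v$. This translation sends each reached down-set $D(x)$ with $x\in\mathcal L_\tau$ to the element $x\in{\rm MWReach}_r(\cdot,\tau)$. It sends each reached unborn singleton $\{z\}$ to the element of $\mathcal L_\tau$ below $z$ through which the walk entered $z$. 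Each element of ${\rm MWReach}_r$ is charged at most twice, once for its own down-set and once as an entry point for a singleton, which yields the factor $2$.

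The main obstacle is this charging step. A single element of $\mathcal L_\tau$ may have many unborn ancestors reachable through early-revealed $S$-edges, so the injectivity-up-to-two must be argued carefully. If it fails, the first fallback is to delay revealing the pairs $(z,b)$ with $z$ unborn. Instead I would merge each unborn $z$ into its eventual part as late as possible, or shift births by half a step so that every reached singleton is, at the time considered, matched to an interval containing $\tau$.
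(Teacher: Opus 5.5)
Your reduction is sound: building a merge-sequence $\Sigma$ of $\mathbf M$ and then passing through the clean ranked model, the width lemmas and compactification does give a compact ranked merge-model with ${\rm w}_r\le{\rm width}_r(\Sigma)$. The gap is in the partition, and it appears before the charging step. Because $x$ joins the part $D(x)$ at its birth, condition~(3) of a merge-sequence \emph{forces} two premature revelations at time $\min\mathfrak I(x)$:
\begin{itemize}
\item all pairs $(x,y)$ with $x\prec y$, since otherwise $\preceq$ is not homogeneous inside $D(x)$;
\item every $S$-pair between $x$ and a still unborn $z$, since in $\{z\}\times D(x)$ only the pair with $x$ can carry $S$ (proper descendants of $x$ die before $z$ is born).
\end{itemize}
In $(\mathbf M,\mathfrak I)$, however, an $S$-edge $ab$ becomes usable only at $\max(\mathfrak I(a)\cap\mathfrak I(b))=\min(\max\mathfrak I(a),\max\mathfrak I(b))$, which is the first \emph{death}, not a birth.

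So your assertion that a revealed $S$-edge satisfies $\max(\mathfrak I(a)\cap\mathfrak I(b))\le$ current time is false; even your default schedule, revealing at the later birth, is too early. An up-and-down move through a still alive $x$ is not a merge-walk step either. Two comparability moves cannot be chained, and in general $x$ has no usable $S$-edge to bounce on.

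The width genuinely blows up. Let $x$ be alive on $[2,k+3]$, and let its $S$-neighbours $z_1,\dots,z_k$ form an antichain with $\mathfrak I(z_j)=[j+2,j+2]$. Let $\pi(z_j)=y_j$ with $\mathfrak I(y_j)=[j+3,j+3]$, where $y_{j-1}$ is also a child of $y_j$, and add a few $S$-edges along this spine for compactness. At every $\tau$ the usable edges $xz_j$ lead only to the single alive spine node or to $z_{\tau-2}$, so ${\rm w}_r(\mathbf M,\mathfrak I)$ is bounded independently of $k$. In your $\Sigma$, by contrast, $x$ is $R_2$-adjacent to the $k$ singleton parts $\{z_j\}$, so ${\rm width}_1(\Sigma)>k$. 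Your fallbacks do not help, because the early revelation is forced by the part $D(x)$, not by the singletons $\{z\}$.

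The missing idea is to keep every element $u$ of $\mathbf M$ in a singleton part for its whole life, up to $\max\mathfrak I(u)$. To an alive $x$ you attach only the \emph{strict} down-set $\{w\colon x\prec w\}$. Homogeneity then holds automatically: $\{x\}\times\{w\colon x\prec w\}$ lies uniformly in $\prec$, and every $S$-pair touching a strict down-set has a dead endpoint. Hence you can reveal an $S$-pair $ab$ exactly at $\max(\mathfrak I(a)\cap\mathfrak I(b))$, and the pairs $(x,w)$ with $x\prec w$ at $\max\mathfrak I(x)$. Each alive $x$ then contributes at most the two parts $\{x\}$ and $\{w\colon x\prec w\}$, and this is the real source of the factor $2$. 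A detour through a dead $x$ is rerouted in $\mathbf M$ via an $S$-neighbour of $x$, which exists by compactness and is usable because $x$ is dead.

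This is exactly the paper's construction, phrased as a model rather than a sequence:
\begin{itemize}
\item each internal $u$ gets an internal copy $(u,0)$ with $\mathfrak C((u,0))=\mathfrak I(u)$ and a leaf copy $(u,1)$ with $\mathfrak C((u,1))=[1,\max\mathfrak I(u)]$;
\item the $S$-edges of $\mathbf M$ are copied onto leaf copies, keeping their usable times;
\item $\prec$ is encoded by the single edge $S_{\prec,1}((u,1),(u,0))$, usable at the death of $u$;
\item the projection $(u,i)\mapsto u$ is at most $2$-to-$1$ on reachable sets.
\end{itemize}
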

\begin{figure}[h!t]
    \centering
    \includegraphics[width=\textwidth]{mod_mod_gray}
    \caption{Construction of a merge-model of a compact merge-model. On top, the original compact merge model $(\mathbf M,\mathfrak I)$ with signature $\{S_{E,0},S_{E,1},\prec\}$ encoding a graph.
    On the bottom, the constructed merge-model $(\mathbf N,\mathfrak C)$ of $\mathbf M$ with signature $\{S_{S_{E,0},0},S_{S_{E,0},1},S_{S_{E,1},0},S_{S_{E,1},1},S_{\prec,0},S_{\prec,1},\prec\}$.}
    \label{fig:mod_mod}
\end{figure}

\begin{proof}
Let $\sigma$ be the signature of $\mathbf M$ where the tree-order is denoted by $\prec$, and $\sigma' = \{S_{Z,\alpha} : Z\in \sigma, \alpha\in\{0,1\}\} \cup \{\prec\}$. (Here it will be convenient to consider that the signature contains the strict tree-order relation $\prec$ and not $\preceq$ to avoid to discuss loops at leaves of the model.)
	The domain of $\mathbf N$ is 
$\{\rho(\mathbf N)\}\cup \{(v,-1): v\in V(\mathbf M)\}\cup \{(u,i): u\in\mathbf M\setminus V(\mathbf M),i\in\{0,1\}\}$ and the relations are defined as follows (in the cases not considered, no relations hold; see \zcref{fig:mod_mod} for an illustration.):

\begin{itemize}
    \item for $\rho(\mathbf N)$:
\begin{align*}
\rho(\mathbf N)\text{ is the minimum of } \prec \text{ in }\mathbf N,\\
\mathbf N\models S_{Z,0} (\rho(\mathbf N),\rho(\mathbf N)),\quad Z\in \sigma,
\end{align*}

    \item for $(v,-1)$:
\begin{align*}
\mathbf N\models  (u,0)\prec (v,-1)\quad&\iff\quad 
\mathbf M\models u\prec v,\\
\mathbf N\models  S_{Z,1}((u,1),(v,-1))\quad&\iff\quad 
\mathbf M\models Z(u,v),\quad Z\in \sigma\setminus \{\prec\},\\
\mathbf N\models  S_{Z,1}((u,-1),(v,-1))\quad&\iff\quad 
\mathbf M\models Z(u,v),\quad Z\in \sigma\setminus \{\prec\},
\end{align*}

    \item for $(u,0)$ and $(v,j)$ with $j\in\{0,1\}$:
\begin{align*}
    \mathbf N\models (u,0)\prec (v,j)\quad&\iff\quad\mathbf M\models u\prec v\\
    \mathbf N\models S_{\prec,1}((v,1),(u,0))\quad&\iff\quad u=v
\end{align*}

    \item and for $(u,1), (v,1)$:
\begin{align*}
	\mathbf N\models S_{Z,1}((u,1),(v,1))\quad&\iff\quad \mathbf M \models Z(u,v),\quad Z\in \sigma\setminus\{\prec\}.
\end{align*}
\end{itemize}

Let $m=\min\mathfrak{I}(\rho(\mathbf M))$. We further define $\mathfrak{C}$ as follows:
    \begin{align*}
    \mathfrak{C}(\rho(\mathbf N))&=[m+1,m+1],\\
     \mathfrak{C}((v,-1))&=\mathfrak I(v),  \quad v\in V(M),\\
	\mathfrak{C}((u,i))&=\begin{cases}\mathfrak I(u)&\text{if }i\neq 1,\\
		[1,\max\mathfrak{I}(u)]&\text{otherwise.}
	\end{cases}
\end{align*}

We first prove that $(\mathbf N, \mathfrak C)$ is a ranked merge-model:
\begin{itemize}
    \item The tree-ordered set $\mathbf N^\prec$ can be seen as obtained from a copy of $M^\prec$ by adding a root ($\rho(\mathbf N)$) and leaves ($(M\setminus V(\mathbf M))\times \{1\}$). As $\mathbf M^\preceq$ is a tree-ordered set, it follows that $\mathbf M^\prec$ is also tree-ordered. The root of $\mathbf N$ is $\rho(\mathbf N)$ and its set of leaves is $  V(\mathbf N)= \{(v,-1): v\in V(\mathbf M)\}\cup
    \{(u,1):u\in \mathbf M \setminus V(\mathbf M)\}$.
    \item 
        By definition, if  $\mathbf N\models S((u,i),(v,j))$, then either $(u,i),(v,j)$ are leaves of $\mathbf N$, or  $u=v$ and $i\neq j$, or  $u\neq v$, $i=j=1$ and $\mathbf M\models S(u,v)$.
    In all cases, we have $(u,i)$ and $(v,j)$ are non-comparable with respect to $\preceq$ in $\mathbf N$.
    \item 
    Assume that there exists $((x,i),(y,j))\prec ((x',i'),(y',j'))$ in $\mathbf N$ such that
    $\mathbf N \models S((x,i),(y',j'))\wedge S((x',i'),(y,j))$. By construction, we must have  $i=j=0$, $(y',j')=(x,1)$ and $(x',i')=(y,1)$. It follows that $x\prec y$ and $y\prec x$ in $\mathbf M$, a contradiction.

    \item By construction, for any $x,y\in N$, we cannot have both $S_{Z,0}(x,y)$ and $S_{Z,1}(x,y)$ in $\mathbf N$.
    \item 
As $\mathbf N\models S_{Z,0} (\rho(\mathbf N),\rho(\mathbf N))$ for any $ Z\in \sigma$,
 for every $Z\in \sigma$ and every $(u,v)\in \pairs{V(\mathbf N)}{2}$
we can find $(x,y)\preceq (u,v)$ in $\mathbf N$ such that
    $\mathbf N \models S_Z(x,y)$.
    \item Finally, if $\mathbf N\models S((u,i),(v,j))$, then either $\mathbf M\models S(u,v)$ and thus $\mathfrak C((u,i))\cap \mathfrak C((v,j))\supseteq \mathfrak I(u)\cap\mathfrak I(v)\neq\emptyset$, or
    $u=v$ and $\mathfrak C((u,i))\cap \mathfrak C((u,j))\neq\emptyset$.
    Moreover, $\mathbf N\models (u,i)\prec (v,j)$ implies $i=0$, thus 
    $\mathbf M\models u\prec v$. Hence, $\max\mathfrak C((v,j))=\max\mathfrak I(v)<\min\mathfrak I(u)=\min\mathfrak C((u,i))$. 
\end{itemize}
It follows that $(\mathbf N, \mathfrak C)$
is  a ranked merge-model in which every internal node is adjacent to at least one $S$-edge.
As  $({\mathbf M},\mathfrak I)$  is clean by \zcref{lem:comp}, 
by construction, $(\mathbf N, \mathfrak C)$
is also clean.
Note that the leaves of $\mathbf N$ are in bijection with the domain of $\mathbf M$ by $(u,i)\mapsto u$.
It is then easy to check that, for every  $((u,i),(v,j))\in \pairs{V(\mathbf N)}{2}$ and for every $Z\in\sigma$, $\mathbf N\models S_{Z,1}(\widehat{(u,i),(v,j)})$, if and only if  $\mathbf M\models Z(u,v)$. It follows that $\Str(\mathbf N)=\mathbf M$. 
Hence, $(\mathbf N, \mathfrak C)$
is a compact ranked merge-model of  $(\mathbf M, \mathfrak I)$.

Let $(a_0,p_0)\in V(\mathbf N)$, let $\tau\in[m-1]$, and let \[((a_0,p_0),(b_1,q_1),(a_1,p_1),(b_2,q_2),\dots,(a_n,p_n),(b_{n+1},q_{n+1}))\] be a $\tau$-bounded merge-walk of length at most $r$ in $(\mathbf N,\mathfrak{C})$.

Let $u_0$ be any leaf of $\mathbf M$ such that $u_0\succeq a_0$ in $\mathbf M$.
Then,  $(u_0,b_1,a_1,\dots,a_n,b_{n+1})$ is a $\tau$-bounded merge-walk in $\mathbf M$:
for every integer $i\in[n]$, we have $\mathbf N\models S((a_i,p_i),(b_i,q_i))$, hence either $\mathbf M\models S(a_i,b_i)$ or $a_i=b_i$ and $p_i\neq q_i$ (by abuse of notation, we use the same notation $S$ in two different contexts).
In the later case, we can reduce the walk $(u_0,b_1,a_1,\dots,a_n,b_{n+1})$  by removing both $a_i$ and $b_i$ from the walk if the previous and next in the sequence are different from $a_i$:
$x,a,a,y\rightarrow x,y$.
It is easy to check that $x$ and $y$ are comparable in $\mathbf M$ with $\prec$.
Otherwise, we replace $a,a,a$ by 
$a,x,a$ where $\mathbf M\models S(a,x)$. Note that such an $x$ exists as $\mathbf M$ is compact.
Note also that $\max (\mathfrak I(a)\cap \mathfrak I(x))=\max (\mathfrak C((a,i))\cap \mathfrak C((x,j))\le \max \mathfrak C((a,i))
=\max (\mathfrak C((a,0))\cap \mathfrak C((a,1)))\le \tau$.
Hence, $b_{n+1}$ can always be reached by a $\tau$-bounded merge-walk with length at most $r$ in $\mathbf M$ from $u_0$.

It follows that the mapping $(x,i)\mapsto x$  maps ${\rm MWReach}_r(\mathbf N,\mathfrak{C},(v,i),\tau)$ to ${\rm MWReach}_r(\mathbf M,\mathfrak{I},u,\tau)$ (where $u$ is any leaf descendant of $v$ in $\mathbf M$). 
As is this mapping is $2$-to-$1$, ${\rm w}_r(\mathbf N,\mathfrak{C})\leq 2{\rm w}_r(\mathbf M,\mathfrak{I})$.
\end{proof}

\begin{thm}[store=thm:mw_model,restate-keys={note=see \zcpageref[nocap]{thm:mw_model}}]
 Let ${\mathbf G}$ be a  binary relational structure and let $r$ be a positive integer.
 Then, ${\mathbf G}$ has a compact merge-model $\mathbf M$ with 
 \[
 \bomega(\mathbf M)\leq {\rm mw}_r(\mathbf G)\qquad\text{and}\qquad  
 {\rm mw}_r(\mathbf M)\leq  2{\rm mw}_r(\mathbf G).
 \]
\end{thm}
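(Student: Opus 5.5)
We chain the lemmas of the previous two sections. First, by \zcref{thm:mw_w}, there is a clean ranked merge-model $(\mathbf M_0,\mathfrak I_0)$ of $\mathbf G$ with ${\rm w}_r(\mathbf M_0,\mathfrak I_0)=\mw_r(\mathbf G)$. (Concretely, take an optimal merge-sequence $\Sigma$, set $\model^+(\Sigma)$, and use \zcref{lem:sigmap} together with \zcref{lem:width=w}.)

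Next we compactify. By \zcref{lem:comp}, the compactification $(\mathbf M,\mathfrak I):=(\mathbf M_0^c,\mathfrak I_0^c)$ is a clean ranked merge-model of $\Str(\mathbf M_0)\cong\mathbf G$, and it is compact. By \zcref{lem:mhat=m}, ${\rm w}_r(\mathbf M,\mathfrak I)\leq \mw_r(\mathbf G)$. This $\mathbf M$ is the compact merge-model claimed in the statement.

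The biclique bound comes from \zcref{lem:ws_model}. Since widths are monotone in $r$ (a walk of order at most $1$ has order at most $r$, as $r\geq 1$), we get
\[
\bomega(\mathbf M)\leq {\rm w}_1(\mathbf M,\mathfrak I)\leq {\rm w}_r(\mathbf M,\mathfrak I)\leq \mw_r(\mathbf G).
\]
This monotonicity is immediate from \zcref{def:merge_walk}.

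For the second bound, we view $\mathbf M$ itself as a binary relational structure over $\sigma^*$, with $\prec$ treated as one of its relations. \zcref{lem:model of model} provides a compact ranked merge-model $(\mathbf N,\mathfrak C)$ with $\Str(\mathbf N)=\mathbf M$ and
\[
{\rm w}_r(\mathbf N,\mathfrak C)\leq 2{\rm w}_r(\mathbf M,\mathfrak I)\leq 2\mw_r(\mathbf G).
\]
Applying the easy direction of \zcref{thm:mw_w} to the structure $\mathbf M$ then gives $\mw_r(\mathbf M)\leq{\rm w}_r(\mathbf N,\mathfrak C)$. This relies on $(\mathbf N,\mathfrak C)$ being clean (compact implies clean), so that $\Seq(\mathbf N,\mathfrak C)$ is a merge-sequence of $\Str(\mathbf N)=\mathbf M$ of width ${\rm w}_r(\mathbf N,\mathfrak C)$, by \zcref{lem:clean_to_seq} and \zcref{lem:width=w}.

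The main obstacle is a convention mismatch between $\mathbf M$ and the structure that $\mathbf N$ represents. \zcref{lem:model of model} encodes the strict order $\prec$ rather than $\preceq$, and merge-models only decide pairs of distinct elements. So $\mw_r$ of $\mathbf M$ must be read with the signature in which the tree-order is the strict $\prec$; the reflexive pairs of $\preceq$ are not represented anyway, since merge-sequences only concern $\pairs{V}{2}$. I would state this explicitly.

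I would also check that \zcref{lem:model of model} applies to our $\mathbf M$. That lemma needs the input to be compact, and in particular it uses that every internal node has an $S$-neighbour. Our $(\mathbf M,\mathfrak I)$ satisfies this because it is produced by compactification. The remaining steps are direct applications of the cited lemmas.
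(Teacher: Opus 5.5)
Your proposal is correct and follows essentially the same chain as the paper's proof. The chain is: an optimal clean ranked merge-model, then compactification via \zcref{lem:comp,lem:mhat=m}, then the biclique bound from \zcref{lem:ws_model} together with monotonicity in $r$, and finally \zcref{lem:model of model} combined with the easy direction of \zcref{thm:mw_w}.

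The only differences are minor. You need only the inequality ${\rm w}_r(\mathbf M,\mathfrak I)\leq\mw_r(\mathbf G)$ rather than the paper's equality through $\widehat{\rm w}_r$ and \zcref{lem:mw=hat}. Your remarks on the strict-order convention and on the compactness hypothesis of \zcref{lem:model of model} make explicit points the paper leaves implicit.
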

\begin{proof}
    According to \zcref{lem:comp,lem:mw=hat,lem:mhat=m}, there exists a compact ranked merge-model $(\mathbf M,\mathfrak I)$ such that 
    \[{\rm mw}_r(\mathbf G)=\widehat{\rm w}_r(\mathbf M,\mathfrak I)=\rm w_r(\mathbf M,\mathfrak I).\]
    According to \zcref{lem:ws_model}, we have
\[
\bomega(\mathbf M)\leq {\rm w}_1(\mathbf M,\mathfrak I)\leq {\rm w}_r(\mathbf M,\mathfrak I)={\rm mw}_r(\mathbf G).
\]
    Moreover, by \zcref{lem:model of model}, $\mathbf M$ has a compact ranked merge-model $(\mathbf N,\mathfrak{C})$ such that
\[
	{\rm w}_r(\mathbf N,\mathfrak{C})\leq 2{\rm w}_r(\mathbf M,\mathfrak{I}).\]
Finally, according to \zcref{thm:mw_w},  we have 
\[
{\rm mw}_r(\mathbf M)\le {\rm w}_r(\mathbf N,\mathfrak{C})\leq 2{\rm w}_r(\mathbf M,\mathfrak{I})=2{\rm mw}_r(\mathbf G).\]
\end{proof}


\section{Merge-width and twin-width}
Note that twin-models are special types of merge-models (by possibly adding a loop at the root), and that every class of binary structures with bounded twin-width has bounded merge-width.
The aim of this section is to prove that having bounded twin-width is equivalent to having a loopless merge-model (for a sufficiently high radius). The precise statement of this property is stated in \zcref{thm:mw_tww}.

In order to prove this theorem, we rely on several results proved in the literature, which refer to graph invariants like the \emph{flip-width} ${\rm fw}$, the \emph{star-chromatic number} $\chi_s$, the \emph{grad} $\nabla_1$, and the (strong) \emph{generalized coloring number} ${\rm col}_2$. However, we won't need the specific definitions of these invariants and thus, we will omit them.

\begin{lemma}[Special case of {\cite[Theorem 6]{bonnet2024twin}}]
\label{lem:twwo}
    Let $\mathscr C$ be a hereditary class of ordered graphs. Then, the following are equivalent:
    \begin{enumerate}
        \item $\mathscr C$ has unbounded twin-width;
        \item $\mathscr C$ includes one of  $25$ special hereditary classes $\mathscr F_1,\dots,\mathscr F_{25}$;
        \item $\mathscr C$ interprets the class of all graphs.
    \end{enumerate}
\end{lemma}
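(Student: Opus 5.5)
The plan is to read the statement off \cite[Theorem~6]{bonnet2024twin}, which is a dichotomy theorem for hereditary classes of ordered graphs (more generally of ordered binary structures). It lists several equivalent conditions for such a class $\mathscr C$, including: bounded twin-width, monadic dependence, dependence, not first-order interpreting (or transducing) the class of all graphs, and avoiding the $25$ explicitly described hereditary classes $\mathscr F_1,\dots,\mathscr F_{25}$ of ordered matchings, half-graphs and their variants. So I only need to extract the three conditions listed here and check that they coincide with the corresponding items of the cited theorem, with the same meaning of the words.

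I would argue in three steps.

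First, (1)$\Leftrightarrow$(2) is verbatim the equivalence in \cite[Theorem~6]{bonnet2024twin} between bounded twin-width and avoiding the obstructions. Since $\mathscr C$ is hereditary, "$\mathscr C$ includes $\mathscr F_i$" is exactly the negation of the obstruction-freeness condition there.

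Second, for (2)$\Rightarrow$(3), I would use that each $\mathscr F_i$ is built so that a fixed simple interpretation, using the linear order, produces every graph (this is how the cited paper shows that these classes are not dependent). Since $\mathscr F_i\subseteq\mathscr C$, the same interpretation applied to $\mathscr C$ yields the class of all graphs.

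Third, for (3)$\Rightarrow$(1), I argue by contraposition. If $\mathscr C$ had bounded twin-width, then every interpretation of $\mathscr C$ would have bounded twin-width, by the preservation of bounded twin-width under transductions \cite{twin-width1}, which the paper already recalls. The class of all graphs has unbounded twin-width, so it cannot be interpreted in $\mathscr C$.

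The main obstacle I expect is bookkeeping about the notion of interpretation. The cited theorem may phrase (3) as "transduces" or "is not (monadically) dependent", whereas here it is "interprets", meaning simple interpretations in the sense of the preliminaries. This needs care in both directions. For (3)$\Rightarrow$(1), interpretation is a special case of transduction, so the contrapositive argument above applies unchanged. For (2)$\Rightarrow$(1)$\Rightarrow$(3), the only thing to verify is that the cited theorem indeed provides a plain simple interpretation, without colours or copies, of all graphs from each $\mathscr F_i$. If it only provides this up to induced subgraphs, I would absorb that by the hereditary closure and by the convention $\Nn\subseteq\mathsf I(\Mm)$. Failing that, I would fall back on the equivalence with non-dependence in that theorem: non-dependence is witnessed by a single formula shattering arbitrarily large sets, and a shattering formula $\varphi(x,y)$ of an ordered class directly gives a simple interpretation of all bipartite graphs, and from them of all graphs using the order.
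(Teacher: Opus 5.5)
Your main line is correct and is exactly what the paper does: the lemma is not proved there but simply extracted from \cite[Theorem~6]{bonnet2024twin}. Your side arguments for (2)$\Rightarrow$(3) (each $\mathscr F_i$ interprets all graphs) and (3)$\Rightarrow$(1) (bounded twin-width is preserved by transductions) are the same facts the paper relies on in the proof of \zcref{lem:tww_uniqueT}. Your fallback via a shattering formula would not work as stated, because shattering only yields a transduction: colours are needed to single out the shattered set and its witnesses, and passing to an induced substructure can change the meaning of a quantified formula. You do not need that fallback.
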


\begin{lem}
\label{lem:tww_uniqueT}
    There exists a transduction $\mathsf T_{\rm univ}$ such that a class
     $\mathscr C$ of ordered graphs has bounded twin-width if and only if $\mathsf T_{\rm univ}(\mathscr C)$ is not the class of all graphs.
\end{lem}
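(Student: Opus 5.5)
We build $\mathsf T_{\rm univ}$ from the $25$ classes $\mathscr F_1,\dots,\mathscr F_{25}$ of \zcref{lem:twwo}. By \zcref{lem:twwo}, each $\mathscr F_j$ is a hereditary class of ordered graphs of unbounded twin-width, so it interprets the class of all graphs. Hence for every $j\in[25]$ we may fix a simple interpretation $\mathsf I_j$ such that $\mathsf I_j(\mathscr F_j)$ contains all graphs. Since every graph is an induced subgraph of some graph in $\mathsf I_j(\mathscr F_j)$, we can also assume the interpretations are closed under passing to induced substructures. A domain formula can be restricted by a unary color, so we compose $\mathsf I_j$ with a monadic expansion and obtain a transduction $\mathsf T_j$ whose image on $\mathscr F_j$ is exactly the class of all graphs, and whose image on any class is a hereditary class of graphs.

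We then combine these into a single transduction. Use $25$ fresh unary colors $C_1,\dots,C_{25}$. The transduction $\mathsf T_{\rm univ}$ guesses a coloring and, if the color $C_j$ is the unique one present on some fixed marked vertex, applies $\mathsf T_j$. Equivalently, each defining formula is taken to be the disjunction over $j$ of ($j$ is selected) $\wedge$ (formula of $\mathsf T_j$), where "$j$ is selected" is the sentence asserting that some element carries $C_j$ and no element carries $C_{j'}$ for $j'<j$. With this definition, $\mathsf T_{\rm univ}(\mathscr C)=\bigcup_j \mathsf T_j(\mathscr C)$ for every class $\mathscr C$.

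It remains to check both directions. If $\mathscr C$ has unbounded twin-width, we first replace $\mathscr C$ by its hereditary closure; this is harmless, because twin-width is monotone under induced substructures and transductions can simulate taking induced substructures with one extra color. By \zcref{lem:twwo}, the closure then contains some $\mathscr F_j$, so $\mathsf T_{\rm univ}(\mathscr C)\supseteq\mathsf T_j(\mathscr F_j)$, which is the class of all graphs.

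Conversely, suppose $\mathscr C$ has bounded twin-width. Bounded twin-width is preserved by transductions \cite{twin-width1}, so $\mathsf T_{\rm univ}(\mathscr C)$ has bounded twin-width. The class of all graphs has unbounded twin-width, so $\mathsf T_{\rm univ}(\mathscr C)$ is not the class of all graphs.

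The main point requiring care is the uniformity. \zcref{lem:twwo}(3) only gives some interpretation for each class $\mathscr C$, and that interpretation could a priori depend on $\mathscr C$. The argument removes this dependence by passing through the finite list (2): there are only $25$ classes, so only $25$ interpretations are needed, and they can be glued into one transduction with color selectors as above. We also rely on the hereditary-closure step, so that the inclusion $\mathscr F_j\subseteq\mathscr C$ from (2) applies.
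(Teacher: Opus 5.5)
\textbf{There is a gap in the hereditary-closure step; otherwise your proof is the paper's.} Your overall plan coincides with the paper's proof. You glue the $25$ interpretations of the classes $\mathscr F_j$ into one transduction using unary selector predicates. You pass to the hereditary closure so that \zcref{lem:twwo} applies. For the converse you use preservation of bounded twin-width under transductions. Your selector (``some element carries $C_j$ and none carries $C_{j'}$ for $j'<j$'') is fine; it even makes the cases mutually exclusive, unlike the paper's $\forall v\,P_i(v)$.

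\textbf{Where it fails.} You only know $\mathscr F_j\subseteq\mathscr C'$, where $\mathscr C'$ is the hereditary closure of $\mathscr C$, yet you conclude $\mathsf T_{\rm univ}(\mathscr C)\supseteq\mathsf T_j(\mathscr F_j)$. That conclusion needs $\mathsf T_{\rm univ}(F)\subseteq\mathsf T_{\rm univ}(G)$ whenever $F$ is an induced substructure of some $G\in\mathscr C$. Your construction does not provide this. Restricting the \emph{domain formula} by a color only makes the \emph{output} class hereditary. The formulas $\nu_j,\rho_j$ are still evaluated in $G$, with quantifiers ranging over all of $G$, so in general the result of $\mathsf I_j$ on $G$ restricted to $F$ is not $\mathsf I_j(F)$. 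The remark that ``transductions can simulate taking induced substructures'' does not close this gap unless that simulation is part of the one fixed transduction. As defined, you only get $\mathsf T_{\rm univ}(\mathscr C')\subseteq(\mathsf T_{\rm univ}\circ\mathsf H)(\mathscr C)$, where $\mathsf H$ returns all induced substructures. You do not get $\mathsf T_{\rm univ}(\mathscr C')\subseteq\mathsf T_{\rm univ}(\mathscr C)$.

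\textbf{How to repair it.} Build input-side heredity into $\mathsf T_{\rm univ}$. Add one more color $D$ and relativize \emph{all} quantifiers of all defining formulas, as well as the domain formula, to $D$. Equivalently, precompose with $\mathsf H$. This is exactly what the paper does by setting $\mathsf T_{\rm univ}=\mathsf T_0\circ\mathsf H$. Then $\mathsf T_{\rm univ}(\mathscr C)\supseteq\mathsf I_j(\mathsf H(\mathscr C))\supseteq\mathsf I_j(\mathscr F_j)$, so the unbounded direction goes through. The bounded direction is unaffected, since $\mathsf H$ is itself a transduction.
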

\begin{proof}
We consider the classes $\mathscr F_1,\dots,\mathscr F_{25}$ witnessing non-boundedness of twin-width (see \zcref{lem:twwo}).
    Each class $\mathscr F_i$ interprets the class of all graphs. Thus, there are interpretations $\mathsf I_1,\dots,\mathsf I_{25}$ such that every $\mathsf I_i(\mathscr F_i)$ is the class of all graphs. Let~$\mathsf H$ be a transduction 
    such that  $\mathsf H(G)$ is the set of all induced subgraphs of $G$. 
    Let~$\mathsf I_i$ be defined by formulas $\nu_i(x)$ and $\rho_i(x,y)$. We define the transduction~$\mathsf T_0$ using formulas $\nu$ and $\rho$, where
    \begin{align*}
        \nu(x)&:=\bigvee_{i=1}^{25} \biggl(\bigl(\forall v\ P_i(v)\bigr)\wedge \nu_i(x)\biggr),\\                \rho(x,y)&:=\bigvee_{i=1}^{25} \biggl(\bigl(\forall v\ P_i(v)\bigr)\wedge \rho_i(x,y)\biggr),
    \end{align*}
    where $P_1,\dots,P_{25}$ are unary predicates.
    Let $\mathsf T_{\rm univ}=\mathsf T_0\circ\mathsf H$.
    Then, note that $\bigcup_{i=1}^{25}\mathsf I_i\circ\mathsf H(\mathscr C)\subseteq \mathsf T_{\rm univ}(\mathscr C)$.

    Thus, if $\mathscr C$ has unbounded twin-width, $\mathsf H(\mathscr C)$ contains one of the classes $\mathscr F_i$ and $\mathsf T_{\rm univ}(\mathscr C)$ is the class of all graphs.
    Reciprocally, if $\mathscr C$ has bounded twin-width, then so does $\mathsf T_{\rm univ}(\mathscr C)$. Hence, $\mathsf T_{\rm univ}(\mathscr C)$ is  not the class of all graphs.
\end{proof}

\begin{lemma}[\cite{lmcs_perm}]
\label{lem:Gaif}

Let $t\in\mathbb N$ and let $\tau,\tau'$ be binary relational signatures.
Let $\mathscr C$ be a class of $\tau$-structures, such that 
${\rm Gaif}(\mathscr C)$ has star chromatic number at most $t$.

Then there exists a quantifier-free transduction $\mathsf T_{\tau,\tau',t}$ such that for every $\mathbf M\in\mathscr C$, $\mathsf T_{\tau,\tau',t} (\mathbf M)$ contains all the $\tau'$-structures $\mathbf M'$ with  ${\rm Gaif}(\mathbf M')={\rm Gaif}(\mathbf M)$.
\end{lemma}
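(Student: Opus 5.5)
The plan is to use a star coloring of ${\rm Gaif}(\mathbf M)$ to give every edge of the Gaifman graph a canonical ``owner'' endpoint. Each relation of $\mathbf M'$ on that edge is then recorded as a unary predicate on the owner. Because adjacency in ${\rm Gaif}(\mathbf M)$ is quantifier-free definable in $\mathbf M$, the reconstruction formulas will be quantifier-free. All colors and formulas will depend only on $\tau,\tau',t$.

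\emph{Step 1 (star coloring and centers).} Fix a star coloring $c\colon M\to[t]$ of ${\rm Gaif}(\mathbf M)$, so that for every $i\neq j$ the subgraph induced by $c^{-1}(i)\cup c^{-1}(j)$ is a star forest. In each star of this forest we choose a center: it is unique if the star has at least two edges, and chosen arbitrarily if the star is a single edge. We add unary predicates ${\rm Col}_i$ for the color classes and ${\rm Cent}_{i,j}$ marking the vertices of color $j$ that are centers in the $(i,j)$-forest. The key observation is the following. For an edge $xy$ with $c(x)=i$ and $c(y)=j$, exactly one of $x,y$ is the center of the star containing $xy$. A non-center vertex has exactly one neighbor in the other color of that forest, namely its center.

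\emph{Step 2 (encoding $\mathbf M'$).} For every $R\in\tau'$ and $j\in[t]$ we add unary predicates $A_{R,j}$ and $B_{R,j}$. We put $A_{R,j}(x)$ when $x$ is a non-center whose center $y$ has color $j$ and $\mathbf M'\models R(x,y)$. We put $B_{R,j}(x)$ when, for the same $x$ and $y$, $\mathbf M'\models R(y,x)$. For diagonal pairs we add a predicate $L_R(x)$ recording $R(x,x)$. This is well defined precisely because each non-center vertex has a unique center in each bicolored forest.

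\emph{Step 3 (quantifier-free interpretation).} Let ${\rm adj}(x,y)$ be the disjunction of all $\tau$-atoms $Z(x,y)$ and $Z(y,x)$ with $Z\in\tau$. We define $\rho_R(x,y)$ as $L_R(x)$ when $x=y$, and otherwise as
\[
{\rm adj}(x,y)\wedge\bigvee_{i\neq j}\Bigl({\rm Col}_i(x)\wedge{\rm Col}_j(y)\wedge\bigl(({\rm Cent}_{i,j}(y)\wedge A_{R,j}(x))\vee({\rm Cent}_{j,i}(x)\wedge B_{R,i}(y))\bigr)\Bigr).
\]
The domain formula is $\nu(x)=\top$. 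This is a quantifier-free transduction (coloring followed by a simple interpretation) that depends only on $\tau,\tau',t$.

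\emph{Step 4 (correctness).} Since ${\rm Gaif}(\mathbf M')={\rm Gaif}(\mathbf M)$, every non-loop tuple of $\mathbf M'$ lies on an edge of ${\rm Gaif}(\mathbf M)$, which is exactly what the ${\rm adj}$ guard enforces. On such an edge, exactly one of the two disjuncts applies, according to which endpoint is the center. That disjunct reads off the stored bit correctly.

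The only delicate point, and the step I expect to require care, is the well-definedness of the owner in Step 1. This includes single-edge components, where one endpoint is designated as center, and the fact that a proper coloring guarantees $c(x)\neq c(y)$ on edges. Everything else is routine bookkeeping.
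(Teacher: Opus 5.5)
Your proof is correct. The paper gives no proof of this lemma; it imports it from \cite{lmcs_perm}, where the standard argument is the same as yours: orient each bicolored star toward its center, so every vertex has at most one out-neighbour of each color, and store the relations on each edge as unary predicates on its tail. The only thing missing is unary symbols of $\tau'$, which, like your loop predicates $L_R$, are simply copied as colors.
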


\begin{lemma}[{\cite[Lemma 6.1]{dreier2025merge}}]
\label{lem:mw_continuous}
    For every finite binary signature $\sigma$ and 
    every transduction $\mathsf T$ there exists a function $F_{\mathsf T}$ and an integer $c({\mathsf T})$ (depending only on the maximum quantifier rank of a formula used to define $\mathsf T$) such that for every $\sigma$-structure $\mathbf M$ and integer $r$
    \begin{equation*}
        {\rm mw}_r(\mathsf T(\mathbf M))\leq F_{\mathsf T}({\rm mw}_{c({\mathsf T})\cdot r}(\mathbf M)).
    \end{equation*}
\end{lemma}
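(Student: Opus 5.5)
The plan is to reduce to the three elementary operations composing a transduction. By the normal form recalled in the preliminaries, $\mathsf T=\mathsf I\circ\Gamma_\Sigma\circ\mathsf C_k$. It therefore suffices to prove a bound of the required shape separately for a copy operation, a coloring operation and a simple interpretation, and then compose the bounds. Composition multiplies the radius factors and composes the functions. The constant $c(\mathsf T)$ comes only from the interpretation step, where it depends on the quantifier rank; the copy step is designed to need no enlargement of the radius.

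\emph{Colorings.} Adding unary predicates does not change the binary relations. Condition~(3) of \zcref{defi:merge_seq} only concerns the relations $Z\in\sigma$ on pairs $PQ$ of distinct elements, and the sets $\reach_r$ do not see colors. Hence a merge-sequence of $\mathbf M$ is a merge-sequence of every coloring with the same width, and ${\rm mw}_r$ is unchanged.

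\emph{Copies.} Given a merge-sequence $\Sigma$ of $\mathbf M$, I will run it synchronously on each of the $k$ copies: the part $P$ becomes the $k$ parts $P\times\{j\}$, and a pair is revealed between copies $j,j'$ as soon as the underlying pair is revealed in $\Sigma$. The new relation $E$ between clones must be revealed too, and I will do this at the very first step. This adds to each $R_i$ a perfect $k$-clique matching between clones. Each new path of length $\le r$ then projects to a path of length $\le r$ in $\Sigma$, with clone-hops of length zero after projection. So $|\reach_r|$ grows by at most a factor $k$, which gives ${\rm mw}_r(\mathsf C_k\mathbf M)\le k\,{\rm mw}_r(\mathbf M)$. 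The only detail to check is that condition~(3) also holds for $E$ on the pairs that are still unrevealed. It does: unrevealed pairs between $P\times\{j\}$ and $Q\times\{j'\}$ are never clone pairs, so they uniformly satisfy $\neg E$.

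\emph{Simple interpretations.} This is the main obstacle. Let $\rho_Z(x,y)$ have quantifier rank $q$. By Gaifman locality, $\rho_Z(u,v)$ is determined by the Boolean combination of (i) the local $q$-types of $u$ and $v$ at some radius $d=d(q)\le 7^q$, (ii) whether ${\rm dist}(u,v)\le 2d$, and in that case the local type of the pair, and (iii) basic local sentences. I will, however, not work with Gaifman distance in $\mathbf M$, since $\mathbf M$ may be dense. Instead I will work with distances in the revealed graph $(V,R_i)$ together with the partition $\mathcal P_{i}$. Write $\tilde{\mathbf M}_i$ for the quotient-like structure in which every unrevealed pair $PQ$ behaves uniformly.

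The new sequence keeps the same partitions. It reveals a pair $(u,v)$ in the new structure at step $i$ once ${\rm dist}_{R_i}(u,v)\le c\cdot r$ or once $u$ and $v$ lie in parts that are \emph{far}. Here far means that no part of $\mathcal P_{i-1}$ is reached from both within radius $c q$. In the far case, a Feferman–Vaught/game argument shows that the truth of $\rho_Z(u,v)$ depends only on the parts containing $u,v$ and on the bounded-size local data. This is where one needs the bound ${\rm mw}_{c r}(\mathbf M)$ on the number of reached parts, so that only boundedly many types arise. If needed, I would first refine each part by these types; the number of types is bounded by a function of ${\rm mw}_{cr}$, so refining multiplies the width by at most that function.

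The hardest step is to make this uniformity rigorous, namely to show that unrevealed pairs between two new parts are homogeneous for $\rho_Z$. I would first try it for existential formulas via an Ehrenfeucht–Fraïssé game played on the merge-sequence. Then I would lift it to arbitrary quantifier rank by induction on $q$, multiplying the radius by a constant at each quantifier. This yields $c(\mathsf T)$ exponential in the quantifier rank. The width bound $F_{\mathsf T}$ then follows because every part reachable within radius $r$ in the new revealed graph lies within radius $c r$ in the old one.
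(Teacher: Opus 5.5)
\textbf{The paper gives no proof here.} This lemma is quoted from \cite[Lemma~6.1]{dreier2025merge} and used as a black box, so your attempt has to stand on its own.

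\textbf{What works.} Your reduction to $\mathsf C_k$, $\Gamma_\Sigma$ and a simple interpretation is fine, and so are your treatments of colorings and copies. For copies you need one extra final step merging the $k$ parts $V\times\{j\}$ into a single part; that step has width at most $k$.

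\textbf{The gap: the interpretation step.} This step is the entire content of the lemma, and you leave it open. The mechanism you outline does not work as stated. That mechanism is: keep the partitions $\mathcal P_i$, possibly refine them by boundedly many local types, reveal only close pairs, and argue by a Gaifman/Feferman--Vaught argument that far pairs are homogeneous.

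\emph{A counterexample to the homogeneity claim.} Take $\rho(x,y)=\exists z\,(R(x,z)\wedge E(z,y))$ and step $i=2$.
\begin{itemize}
\item Let $P\in\mathcal P_2$ contain $u_1,\dots,u_n$, where each $u_a$ has a revealed $R$-pair to $w_a$.
\item Let the $w_a$ be singleton parts, and let $Q=\{v\}$ have no revealed pair to any $u_a$ or $w_a$.
\item Let $E(w_a,v)$ hold for half of the $a$.
\item This sequence can be continued with bounded width: merge the $w_a$ according to their $E$-value towards $v$ before merging further.
\end{itemize}
Every pair in $PQ$ is far in your sense. Yet $\rho(u_a,v)$ equals the default of $E$ between $\{w_a\}$ and $Q$, which varies with $a$. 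So you are left with two options, neither of which works.
\begin{itemize}
\item \emph{Reveal the inhomogeneous pairs.} You must reveal $\Omega(n)$ pairs at $v$. Then $v$ reaches $\Omega(n)$ singleton parts of $\mathcal P_1$, and the width is unbounded.
\item \emph{Split $P$ by the parts containing the witnesses $w_a$.} This is not a bounded number of types: revealed degrees are unbounded, and distinct vertices of $P$ meet distinct parts.
\begin{itemize}
\item Refining by global types also fails. With two symmetric targets $v_1,v_2$, the vertices that must be separated are swapped by an automorphism, so they have the same global $q$-type.
\item Refining by the parts met by the $R_i$-ball fails for a different reason. That refinement depends on $i$, since pairs newly revealed in $R_{i+1}$ change the balls. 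Vertices grouped together at step $i$ may then need separating at step $i+1$, which contradicts that $\mathcal P_i$ refines $\mathcal P_{i+1}$.
\end{itemize}
\end{itemize}

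\emph{Why this happens.} At step $i$, $\mathbf M$ is a flip of $(V,R_i)$ along a partition with unboundedly many parts. In an Ehrenfeucht--Fra\"{\i}ss\'e argument, Duplicator must preserve the actual parts met, and that is exactly where boundedness is lost. A proof needs a genuinely different construction of the partition sequence for $\mathsf I(\mathbf M)$; in the example, $u_a$ has to follow the part of $w_a$. That construction is what the cited result supplies.

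\textbf{Smaller slips.}
\begin{itemize}
\item The revealing threshold must be a constant $c(q)$, not $c\cdot r$. Otherwise an $r$-path in the new revealed graph projects only to length $c r^2$, not $cr$.
\item Revealing a pair ``once the parts are far'' has the condition inverted.
\end{itemize}
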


\begin{lem}
\label{lem:star}
 Every weakly sparse class of graphs with bounded ${\rm mw}_3$ has bounded star chromatic number.
\end{lem}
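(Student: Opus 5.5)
We plan to chain known results from the literature through the intermediate invariants that the section preamble names: flip-width, grad $\nabla_1$, and the strong generalized coloring number ${\rm col}_2$. Let $\mathscr C$ be a weakly sparse class of graphs, say $K_{t,t}$-free, with ${\rm mw}_3(G)\le k$ for all $G\in\mathscr C$.

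The first step is to pass from merge-width to a sparsity invariant. Dreier and Toruńczyk show that bounded radius-$r$ merge-width implies bounded radius-$r$ flip-width (up to a function), so ${\rm fw}_3$ is bounded on $\mathscr C$. For weakly sparse classes, bounded flip-width at radius $r$ is known to imply bounded $\nabla_{r'}$ for a suitable smaller radius. Alternatively, the merge-width paper directly shows that weakly sparse classes of bounded ${\rm mw}_r$ have bounded expansion with $\nabla_{\lfloor (r-1)/2\rfloor}$-type control, which at $r=3$ gives bounded $\nabla_1$.

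I would prefer the direct route, which runs through generalized coloring numbers. From a merge sequence of radius-$3$ width $k$ on a $K_{t,t}$-free graph, the merge-width paper bounds ${\rm col}_2$, and hence $\nabla_1$, by a function of $k$ and $t$.

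The second step is the classical fact that the star chromatic number is bounded by a function of ${\rm col}_2$ (Zhu; Nešetřil–Ossona de Mendez): $\chi_s(G)\le {\rm col}_2(G)$. Equivalently, $\chi_s$ is bounded by a polynomial in $\nabla_1$. Composing the two steps gives that $\chi_s$ is bounded on $\mathscr C$ by a function of $k$ and $t$.

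The main obstacle is the first step, namely extracting a bounded ${\rm col}_2$ (or $\nabla_1$) from ${\rm mw}_3$ plus weak sparsity with exactly radius $3$. If that is not available as stated, I would argue directly as follows. Given a merge sequence of radius-$3$ width $k$, order the vertices by the time at which their first incident edge is revealed, breaking ties arbitrarily. Then show that for each vertex $v$, the earlier vertices weakly $2$-reachable from $v$ lie, at the relevant step, in boundedly many parts of ${\rm MReach}_3(v,\mathcal P_{i-1},R_i)$. The remaining task is to show that each such part contributes few reachable vertices. Here $K_{t,t}$-freeness is used: a part containing many weakly reachable vertices, all homogeneous to the same neighbor of $v$ through unrevealed pairs, would force a large biclique.
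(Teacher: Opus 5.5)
Your architecture is the paper's: merge-width $\to$ flip-width $\to$ (using weak sparsity) $\nabla_1$ $\to$ ${\rm col}_2$ $\to$ $\chi_s$. The gap is the radius bookkeeping. That is the whole content of a lemma stated specifically for ${\rm mw}_3$, which is later used with $r_0\ge 3$.

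Your first link, that radius-$r$ merge-width bounds radius-$r$ flip-width so that ${\rm fw}_3$ is bounded, is not the available result. The comparison the paper cites from the merge-width paper (Theorem~7.6 there) is ${\rm fw}_2(G)\le 4^{{\rm mw}_3(G)}$, so radius-$3$ merge-width only controls radius-$2$ flip-width. Your second link, ``bounded flip-width at radius $r$ implies bounded $\nabla_{r'}$ for a suitable smaller radius'', leaves $r'$ unspecified. So nothing you wrote shows that $\nabla_1$ in particular follows from ${\rm mw}_3$.

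The missing precise input is Toru\'nczyk's result (flip-width paper, Section~6): for $K_{t,t}$-free graphs, $\nabla_1$ is bounded polynomially in ${\rm fw}_2$; the paper quotes $\nabla_1(G)\in O({\rm fw}_2(G)^{72})$. Chaining this with ${\rm fw}_2\le 4^{{\rm mw}_3}$ is exactly the paper's proof.

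Your ``direct'' alternatives do not fill the gap. The ${\rm col}_2$ bound and the $\nabla_{\lfloor (r-1)/2\rfloor}$ bound that you attribute to the merge-width paper are unsourced. Your fallback combinatorial argument stops at the step that actually needs proof: that each part in ${\rm MReach}_3(v,\mathcal P_{i-1},R_i)$ contributes boundedly many weakly reachable vertices. So neither can replace the precise citations.

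A smaller point: $\chi_s\le{\rm col}_2$ is false as an inequality. The path on four vertices has ${\rm col}_2=2$ but $\chi_s=3$; the inequality $\chi_a\le{\rm col}_2$ is the one that holds, for acyclic colorings. What is true is $\chi_s\le{\rm wcol}_2$: color greedily along the order, avoiding the colors of weakly $2$-reachable predecessors. Moreover ${\rm wcol}_2\le{\rm col}_2^2$, and ${\rm col}_2$ is bounded by a function of $\nabla_1$ (Zhu). Since only boundedness is needed, this does not affect the conclusion.
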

\begin{proof}
    According to \cite[Theorem 7.6]{dreier2025merge}, we have
    ${\rm fw}_2(G)\leq 4^{{\rm mw}_{3}(G)}$.
    According to \cite[Section 6]{torunczyk2023flip}, we have
    $\nabla_1(G)\in O((\mathop{\rm fw_2(G))^{72})}$.
    Moreover, the star chromatic number $\chi_s(G)$ of a graph $G$ is bounded by its generalized coloring number ${\rm col}_2(G)$, 
    which is bounded by a function of $\nabla_1(G)$ \cite{Zhu2008}.
\end{proof}

\begin{thm}[store=thm:mw_tww,restate-keys={note=see \zcpageref[nocap]{thm:mw_tww}}]
    There exists an integer $r_0$ with the following property.
    Let $\mathscr C$ be a class of graphs. Then, the following are equivalent:
\begin{enumerate}
    \item $\mathscr C$ has bounded twin-width;
    \item $\forall r$, $\mathscr C$ has  loopless merge-models with  bounded ${\rm mw}_{r}$ and  $\bomega$;        \item $\mathscr C$ has loopless merge-models with bounded ${\rm mw}_{r_0}$ and  $\bomega$.
\end{enumerate}    
\end{thm}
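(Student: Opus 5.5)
The plan is to prove the cycle of implications (1)$\Rightarrow$(2)$\Rightarrow$(3)$\Rightarrow$(1). The middle implication is trivial: take $r=r_0$. The constant $r_0$ will be fixed in the last step. It will be $c(\mathsf T)\cdot 3$ for a fixed transduction $\mathsf T$ built from $\mathsf T_{\rm univ}$ of \zcref{lem:tww_uniqueT}, the transduction of \zcref{lem:Gaif}, and the definable linear order. Here $c(\mathsf T)$ is as in \zcref{lem:mw_continuous}.

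\emph{(1)$\Rightarrow$(2).} Start from a contraction sequence of width $t$ of $G\in\mathscr C$, arranged so that at the step with two parts the pair between them is homogeneous. From it, build a merge-sequence $\Sigma$ as follows. At each step, reveal exactly the still-unrevealed pairs between two \emph{distinct} parts that have become homogeneous (black pairs). Pairs inside a part, and red pairs, are revealed only later, between distinct parts. Then $S(P,Q)$ in $\model(\Sigma)$ holds only for disjoint $P\neq Q$, so $\model(\Sigma)$ is loopless, and it is a merge-model of $G$ by \zcref{lem:seq_to_mod}.

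It is essentially the twin-model enriched with $S_{E,0}$-edges. The bound on $\bomega$ should follow as in the proof of \zcref{thm:cw_model}: when an $S$-edge at $P$ is created, the other endpoint was a red neighbour of $P$, or of one of the two parts contracted into $P$, one step earlier. Hence a $K_{s,s}$ forces $s\le 2t+2$.

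For ${\rm mw}_r$, I would show that this loopless model is a transduction of the twin-model of \cite{lmcs_perm} (of twin-width at most $2t$) expanded by the linear order of a contraction-compatible DFS, which keeps twin-width bounded. Hence the loopless model has bounded twin-width. Bounded twin-width implies bounded ${\rm mw}_r$ for every $r$ \cite{dreier2025merge}.

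\emph{(3)$\Rightarrow$(1).} Let $\mathbf M$ be a loopless merge-model of $G$ with ${\rm mw}_{r_0}(\mathbf M)\le k$ and $\bomega(\mathbf M)\le b$. Consider the graph $H$ formed by the $S$-edges together with the Hasse (parent) edges of $\preceq$. This is a transduction of $\mathbf M$, so by \zcref{fact:mw_red} and \zcref{lem:mw_continuous} its ${\rm mw}_3$ is bounded once $r_0$ is large. It should also be weakly sparse: tree edges contribute no large bicliques, and the $S$-part is controlled by $b$.

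By \zcref{lem:star}, $H$ has bounded star chromatic number. So by \zcref{lem:Gaif}, a fixed quantifier-free transduction of $\mathbf M$ yields any structure with Gaifman graph $H$. I would use this to produce an orientation encoding, for each internal node, a linear order on its children. Together with $\preceq$, this FO-defines a linear order on the leaves.

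This gives a transduction $\mathsf T'$ with $(G,<)\in\mathsf T'(\mathbf M)$. Looplessness is used here: every pair $(u,v)$ has its hat at two incomparable nodes. So adjacency is decided by an $S$-edge between the two ``branches'', and $G$ is interpreted in the sparse skeleton without a global default.

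Now suppose the class of these ordered graphs had unbounded twin-width. Then $\mathsf T_{\rm univ}\circ\mathsf T'$ would produce all graphs from models of bounded ${\rm mw}_{r_0}$. By \zcref{lem:mw_continuous} all graphs would have bounded ${\rm mw}_1$, which is absurd. Hence $\mathscr C$ has bounded twin-width.

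\textbf{Main obstacle.} I expect the hardest point to be making (3)$\Rightarrow$(1) fully rigorous. One must check that the Hasse-plus-$S$ graph really stays weakly sparse with bounded ${\rm mw}_3$. One must also check that the child orders can be chosen via \zcref{lem:Gaif} so that the resulting leaf order keeps $\mathsf T_{\rm univ}$ meaningful. If this route fails, I would instead try to extract a contraction sequence directly from the ranking of \zcref{thm:mw_w}, bounding red degree by the radius-$r_0$ reach.
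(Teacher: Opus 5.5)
\textbf{The gap is in (3)$\Rightarrow$(1).} It sits at the step where \zcref{lem:Gaif} is supposed to ``encode, for each internal node, a linear order on its children''.

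\zcref{lem:Gaif} only produces structures with the same Gaifman graph $H$. So every added relation sits on an edge or vertex of $H$ and carries a bounded amount of information. Two siblings are in general not adjacent in $H$, and their cover edges to the parent cannot store an order on unboundedly many children. A comparison of siblings $c_1,c_2$ can therefore only be read off from $S$-edges between their subtrees. That requires such edges to exist and to be oriented consistently, and your proposal supplies neither.

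Tellingly, your argument never really uses looplessness. The role you assign to it (interpreting $G$ ``without a global default'') is empty, since $\Str$ interprets $G$ in every merge-model. Yet looplessness is indispensable. By \zcref{thm:mw_model}, the class of cubic graphs has merge-models with bounded ${\rm mw}_{r_0}$ and $\bomega$: it has bounded expansion, hence bounded merge-width, but unbounded twin-width. In such models the hat of two leaves below distinct siblings may be a loop at their parent, so no $S$-edge joins the two subtrees. Every other step of your argument goes through for these models, so if the ordering step worked as stated, you would have proved a false statement.

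\textbf{The missing idea.} Decorate only the $S$-edges, according to one global DFS order $L$ of the tree: add $S_{<,1}(x,y)$ and $S_{<,0}(y,x)$ for each $S$-edge with $x<_L y$. For distinct leaves $u,v$, looplessness gives $\widehat{u,v}=(u',v')$ with $u'\neq v'$, hence $u'\parallel v'$. Since subtrees are $L$-intervals, $u<_L v$ iff $u'<_L v'$, so $\Str$ of the decorated model is $G$ ordered by $L$.

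These relations have the same Gaifman graph as the $S$-reduct, which is weakly sparse with bounded ${\rm mw}_3$ (\zcref{fact:mw_red}). Hence \zcref{lem:star} and \zcref{lem:Gaif} apply directly, with no Hasse edges needed. This is the paper's argument.

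\textbf{Uniformity of $r_0$.} The transduction of \zcref{lem:Gaif} depends on $k$ and $b$, so it is not ``fixed''. The constant $r_0$ is uniform only because these transductions are all quantifier-free and share one constant $c_0$.

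\textbf{(1)$\Rightarrow$(2).} Your $\Sigma$ is not a merge-sequence as described. A red pair of parts at step $i$ must already be partly in $R_i$, and a pair inside a part never again lies between distinct parts. The transduction producing $S_{E,0}$-edges from a twin-model of $G$ is also unsupported. The paper instead takes a twin-model of $G^+=(G,F)$, with $F$ the non-adjacency relation. After renaming $Z_E,Z_F$ to $S_{E,1},S_{E,0}$, this is directly a loopless merge-model of $G$ with bounded $\bomega$ and bounded twin-width, hence bounded ${\rm mw}_r$ for all $r$.
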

\begin{proof}
For $(1)\Rightarrow (2)$:
let $\sigma$ be the signature $\{E,F\}$. By an interpretation we expand each $G\in\mathscr C$ into a $\sigma$-structure $G^+$ by 
defining $F(x,y)$ whenever $x$ and $y$ are not adjacent. The class of obtained $\sigma$-structures has bounded twin-width. For each $G\in\mathscr C$, a twin-model of $G^+$ naturally defines a loopless merge-model of $G$ (by $S_E\rightarrow S_E^1$ and $S_F\rightarrow S_E^0$). The bound on the biclique-number of the merge-models follows from the one on the twin-models.

The implication $(2)\Rightarrow (3)$ is trivial.

We now prove the implication $(3)\Rightarrow (1)$.

        According to \zcref{lem:mw_continuous}, for every binary relational signature $\sigma$ and every
        transduction $\mathsf T$ of  $\sigma$-structures there exists a function $F_{{\mathsf T}}$ and an integer $c({\mathsf T})$ such that for every $\sigma$-structure $\mathbf M$ we have ${\rm mw}_r({\mathsf T}(\mathbf M))\leq F_{{\mathsf T}}({\rm mw}_{c({\mathsf T})\cdot r}(\mathbf M))$.
        As~$c(\mathsf T)$ depends only on the maximum quantifier rank of a formula used to define~$\mathsf T$, we can define $c_0$ as the value of $c(\mathsf T)$ for quantifier-free transductions.

Let $r_0=\max(c_0\cdot\mathop{c}({\Str})\cdot\mathop{c}({\mathsf T}_{\rm univ}),3)$.

Let $k,t\in\mathbb N$, and  assume that each $G\in\mathscr C$ has a loopless merge-model $\mathbf M_G$ with  ${\rm mw}_{r_0}(\mathbf M_G)\leq k$ and $\bomega(\mathbf M_G)\leq t$.

Let $\sigma=\{E,<\}$ be the signature of ordered graphs, let $\tau=\{S_E^0,S_E^1\}$ and $\tau'=\tau\cup\{S_<^0,S_<^1\}$, the signature of $\mathbf M_G$.
According to \zcref{fact:mw_red}, ${\rm mw}_{r_0}({\rm Gaif}(\mathbf M_G^\tau))\leq {\rm mw}_{r_0}(\mathbf M_G)$. Hence, as $r_0\geq 3$, it follows from \zcref{lem:star} that  ${\rm Gaif}(\mathbf M_G^\tau)$ has star-chromatic number bounded by some constant $t'$ depending on $k$ and $t$.
It follows from \zcref{lem:Gaif} that there exists a  quantifier-free transduction $\mathsf X_{t'}$ from ordered $\tau$-structures to ordered  $\tau'$-structures leaving the linear order  unchanged and such that for every ordered $\sigma$-structure $\mathbf M$, if the star chromatic number of ${\rm Gaif}(\mathbf M)$ is at most $t'$, then $\mathsf X_{t'}(\mathbf M)^{\tau'}$ contains all the $\tau'$-structures with same Gaifman graph as $\mathbf M^\tau$.

For $G\in\mathscr C$ with merge-model $\mathbf M_G$, we consider a linear order $L$ on $\mathbf M_G$ obtained from a (DFS) traversal of the cover graph of the tree-order $\mathbf M_G^\prec$. Let $G_<$ be the expansion of $G$ into an ordered graph, where the linear order follows from the restriction of $L$ to the leaves of $\mathbf M_G$. We denote by $\Dd$ the obtained class of ordered graphs.

A merge-model $\mathbf M_{G_<}$ of $G_<$ is easily derived from the merge-model $\mathbf M_G$ of $G$ and the linear order $L$:
Let $\mathbf M_{G_<}$ be obtained from $\mathbf M_G$ by adding, 
for every nodes $u,v$ of $\mathbf M_G$ with 
$\mathbf M_G\models S(u,v)\wedge L(u,v)$ the relations $S_<^1(u,v)$ and $S_<^0(v,u)$ to~$\mathbf M_{G_<}$.
Assume $u<v$ in $G_<$. 
As $\mathbf M_G$ is a loopless merge-model of $G$, the hat $(u',v'):=\widehat{u,v}$ (in $\mathbf M_G$) is such that $u'\neq v'$. 
By construction, $(u',v')$ is also $\widehat{u,v}$ in ~$\mathbf M_{G_<}$
and 
$\mathbf M_{G_<}\models S_<^1(u',v')\wedge S_<^0(v',u')$. 
Thus,
$\Str(\mathbf M_{G_<})\models u<v$.
It follows that $\Str(\mathbf M_{G_<})=G_<$.

Let $H\in\mathsf T_{\rm univ}(G_<)$. We have the following situation:

\[
\begin{xy}
\xymatrix{\mathbf M_G\ar[r]^{\mathsf X_{t'}}\ar[d]_{\mathsf{Str}}&\mathbf M_{G_<}\ar[d]^{\mathsf{Str}}\\
G&G_<\ar[l]_{\mathrm{reduct}}\ar[r]^{\mathsf T_{\rm univ}}&H
}
\end{xy}
\]

Hence, as $c_0=c(\mathsf X_{t'})$, $r_0\ge c_0\cdot\mathop{c}(\Str)\cdot\mathop{c}(\mathsf T_{\rm univ})$, and ${\rm mw}_{r_0}(\mathbf M_G)\leq k$, the following inequalities follow from \zcref{lem:mw_continuous}:
\begin{align*}
{\mw}_{1}(H)&\leq F_{\mathsf T_{\rm univ}}({\rm mw}_{c(\mathsf T_{\rm univ})}(G_<))\\
&\leq (F_{\mathsf T_{\rm univ}}\circ F_{\Str})({\rm mw}_{\mathop{c}(\Str)\cdot\mathop{c}(\mathsf T_{\rm univ})}(\mathbf M_{G_<}))&\text{(as $G_<\in\Str(\mathbf M_{G_<})$)}\\
&\leq (F_{\mathsf T_{\rm univ}}\circ F_{\Str}\circ F_{\mathsf X_{t'}})({\rm mw}_{c_0\cdot\mathop{c}(\Str)\cdot\mathop{c}(\mathsf T_{\rm univ})}(\mathbf M_G))&\text{(as $\mathbf M_{G_<}\in \mathsf X_{t'}(\mathbf M_G)$)}\\
&\leq (F_{\mathsf T_{\rm univ}}\circ F_{\Str}\circ F_{\mathsf X_{t'}})(k).
\end{align*}

Hence, ${\rm mw}_1(\mathsf T_{\rm univ}(\Dd))$  is bounded, which implies that $\mathsf T_{\rm univ}(\Dd)$ is not the class of all graphs. Then, it follows from \zcref{lem:tww_uniqueT} that~$\Dd$ and its reduct $\Cc$ have bounded twin-width.
\end{proof}

For example, as every tournament has an optimal loopless merge-model for every~$r$, we can directly deduce from \zcref{thm:mw_tww} that a class of tournaments has bounded twin-width if and only if it has bounded merge-width. However, this is also a direct consequence of the  known stronger property that a class of tournaments has bounded twin-width if and only if it is monadically dependent \cite{geniet2023first}.
\medskip

Note that the constant $r_0$ that can be derived from the proof of \zcref{thm:mw_tww} is very large, and this naturally raises the question of the optimal value of $r_0$.

\section{Conclusion}
In this paper, we prove that every graph with radius-$r$ merge-width $t$ has a merge-model with radius-$r$ merge-width at most $2t$. This complements the  proof given in \cite{lmcs_perm} that every graph with twin-width $t$ has a twin-model with twin-width at most $2t$. Moreover, as shown in \zcref{sec:more}, every graph with clique-width (resp. linear clique-width) $t$ has a twin-model with clique-width (resp. linear clique-width) at most $2t$.
These properties witness that some characteristic properties of unstable monadically dependent classes can be preserved when considering tree-ordered weakly sparse models, which justifies studying them for their own sake \cite{tows_arxiv}.

\bibliographystyle{amsplain}
\bibliography{ref}
\end{document}